\newcommand{\real}{\mathbb R}
\def\E{\operatorname{E}}
\def\Var{\operatorname{var}}
\def\Prob{\operatorname{pr}}
\def\tr{\operatorname{tr}}
\def\sign{\operatorname{sign}}
\def\snaive{\hat{\sigma}^2_{\mathrm{naive}}}
\def\L{L}
\def\0{0}
\def\y{y}
\def\X{X}
\def\z{z}
\def\soft{\mathcal{S}}
\def\r{r}
\def\rt{\hat\sigma_{R}^2}
\def\subto{\mathrm{\quad s.t.\quad}}
\def\E{\operatorname{E}}
\def\Var{\operatorname{var}}
\def\Prob{\operatorname{pr}}
\def\tr{\operatorname{tr}}
\def\sign{\operatorname{sign}}
\def\iden{I}
\def\d{\mathrm{d}}
\def\ob{\check{\beta}}
\def\tb{\beta^\ast}
\def\tg{\gamma^\ast}
\def\T{\mathcal{T}}
\def\S{\mathcal{S}}
\def\be{\hat{\beta}_\lambda}
\def\bt{\beta^\ast}
\def\se{\hat{\sigma}^2_{\mathrm{naive}}}
\def\bsqrt{\tilde{\beta}_{\mathrm{SQRT}}}
\def\ssqrt{\tilde{\sigma}^2_{\mathrm{SQRT}}}
\newcommand{\snorm}[1]{\|#1\|}
\newcommand{\norm}[1]{\left\lVert#1\right\rVert}
\DeclareMathOperator*{\argmin}{arg\,min}
\newtheorem{theorem}{Theorem}
\newtheorem{lemma}[theorem]{Lemma}
\newtheorem{proposition}[theorem]{Proposition}
\newtheorem{remark}[theorem]{Remark}
\newtheorem{corollary}[theorem]{Corollary}
\title{Estimating the error variance in a high-dimensional linear model}
\author{Guo Yu\thanks{Department of Statistics, University of Washington, Seattle, Washington, 98105, \href{mailto:gy63@uw.edu}{gy63@uw.edu} } \and Jacob Bien\thanks{Data Sciences and Operations, Marshall School of Business, University of Southern California, Los Angeles, CA 90089, \href{mailto:jbien@usc.edu}{jbien@usc.edu}}}
\date{}
\begin{document}
\maketitle

\begin{abstract}
The lasso has been studied extensively as a tool for estimating the
coefficient vector in the high\verb+-+dimensional linear model; however,
considerably less is known about estimating the error
variance in this context. 
In this paper, we propose the natural lasso estimator for the error
variance, which maximizes a penalized likelihood objective. A key
aspect of the natural lasso is that the likelihood is expressed in
terms of the natural parameterization of the multiparameter exponential
family of a Gaussian with unknown mean and variance. The result is a remarkably simple estimator of the error variance with provably good performance in terms of mean squared error. These theoretical results do not require placing any assumptions on the design matrix or the true regression coefficients. We also propose a companion estimator, called the organic lasso, which theoretically does not require tuning of the regularization parameter. Both estimators do well empirically compared to preexisting methods, especially in settings where successful recovery of the true support of the coefficient vector is hard.
Finally, we show that existing methods can do well under fewer assumptions than previously known, thus providing a fuller story about the problem of estimating the error variance in high\verb+-+dimensional linear models.
\end{abstract}

\section{Introduction}
The linear model 
\begin{align}
  \y = \X \beta^\ast + \mathbf{\varepsilon} \qquad  \varepsilon \sim N(0, \sigma^2 I_n),
  \label{est:model}
\end{align}
is one of the most fundamental models in statistics.
It describes the relationship between a response vector $\y \in
\real^n$ and a fixed design matrix $\X \in \real^{n\times p}$.
When $p \gg n$, estimating the coefficient vector $\beta^\ast$ is a challenging, 
well\verb+-+studied problem.
Perhaps the most common method in this setting is the lasso \citep{tibshirani1996regression},
which assumes that $\beta^\ast$ is sparse and solves the following convex optimization problem: 
\begin{align}
  \hat{\beta}_\lambda \in \argmin_{\beta \in \real^p} \left( \frac{1}{n} \norm{\y - \X\beta}_2^2 + 2\lambda \norm{\beta}_1 \right).
  \label{est:lasso}
\end{align}
Over the past decade, an extensive literature has emerged studying the properties of $\hat{\beta}_\lambda$ from
both computational \citep[e.g.,][]{hastie2015statistical} and
theoretical \citep[e.g.,][]{buhlmann2011statistics} perspectives.

Compared to the vast amount of work on estimating $\beta^\ast$,
relatively little attention has been paid to the problem of estimating
$\sigma^2$, which captures the noise level or extent to which $\y$ cannot be predicted from $\X$.
Nonetheless, reliable estimation of $\sigma^2$ is important for quantifying the uncertainty in estimating $\beta^\ast$.
A series of recent advances in high\verb+-+dimensional inference
\citep[][etc.]{buhlmann2013statistical, zhang2014confidence, van2014asymptotically, lockhart2014significance, javanmard2014confidence,
  lee2016exact, tibshirani2016exact, taylor2017post, ning2017} may very well be the determining factor for the widespread adoption of the lasso and related methods in fields where $p$\verb+-+values and confidence intervals are required.
Thus, estimating $\sigma^2$ reliably in finite samples is crucial.

If $\beta^\ast$ were known, then the optimal estimator for $\sigma^2$ would of course be $n^{-1}\snorm{\y - \X\beta^\ast}^2_2 =n^{-1}\snorm{\varepsilon}_2^2$.
Thus, a naive estimator for $\sigma^2$ based on an estimator $\hat{\beta}$ of $\beta^\ast$ would be
\begin{align}
  \snaive = \frac{1}{n}\snorm{\y - \X \hat{\beta}}^2_2.
  \label{est:naive}
\end{align}
However, a simple calculation in the classical $n > p$ setting shows that such an estimator is biased downward:
a least\verb+-+squares oracle with knowledge of the true support $S = \{ j: \beta^\ast_j \neq 0 \}$ scales this to give an unbiased estimator,
\begin{align}
  \hat{\sigma}^2_{\mathrm{oracle}} = \frac{1}{n - \left| S\right|} \snorm{\y - \X_{S} \X_S^+ \y}_2^2,
  \label{est:oracle_LS}
\end{align}
where $\X_{S}$ is a sub\verb+-+matrix of $\X$ with columns indexed by $S$ and $\X_{S}^+$ is its pseudoinverse.
Many papers in this area discuss the difficulty of estimating $\sigma^2$ and warn of the perils of underestimating it:
if $\sigma^2$ is underestimated then one gets anti\verb+-+conservative confidence intervals, which are highly undesirable \citep{tibshirani2015uniform}.

\citet{reid2013study} carry out an extensive review and simulation study of several estimators of $\sigma^2$ \citep{fan2012variance, Sun25092012, Dicker14MM}, and they devote special attention to studying the estimator
\begin{align}
  \rt = \frac{1}{n - \hat{s}_\lambda} \snorm{\y - \X \hat{\beta}_\lambda}_2^2,
  \label{est:reid}
\end{align}
where $\hat{\beta}_\lambda$ is as in \eqref{est:lasso}, with $\lambda$ selected using a cross\verb+-+validation procedure, and $\hat{s}_\lambda$ is the number of nonzero elements in $\hat{\beta}_\lambda$.
They show that \eqref{est:reid} has promising performance in a wide range of simulation settings 
and provide an asymptotic theoretical understanding of the estimator in the special case where $\X$ is an orthogonal matrix.

While intuition from \eqref{est:oracle_LS} suggests that \eqref{est:reid} is a quite reasonable estimator when $S$ can be well recovered, it also points to the question of how well the estimator will perform when $S$ is not well recovered by the lasso.  The conditions required for the lasso to recover $S$ are much stricter than the conditions
needed for it to do well in prediction \citep[e.g.,][]{Van2009On}.
The scale factor $(n-\hat s_\lambda)^{-1}$ used in $\rt$ means that this approach depends not just on the predicted values of the lasso,
$\X\hat\beta_\lambda$, but on the magnitude of the set of nonzero elements in $\hat\beta_\lambda$. 
Indeed, we find that in situations where recovering $S$ is challenging, $\rt$ tends to yield less favorable empirical performance. The theoretical development in \citet{reid2013study} sidesteps this complication by working in an asymptotic regime in which $\rt$ behaves like the naive estimator \eqref{est:naive}. 
To understand the finite\verb+-+sample performance of $\rt$ would require
considering the behavior of the random variable $\hat s_\lambda$.
Clearly, when $\hat s_\lambda\approx n$, even small fluctuations in
$\hat s_\lambda$ can lead to large fluctuations in $\rt$. 
Finally, from a practical standpoint, computing $\hat s_\lambda$ is a numerically sensitive operation in that it requires the choice of a threshold size for calling a value numerically zero, and the assurance that one has solved the problem to sufficient precision.

Based on these observations, we propose in this paper a completely different approach to estimating $\sigma^2$.
The basic premise of our framework is that when both $\beta^\ast$ and $\sigma^2$ are unknown, it is convenient to formulate the penalized log\verb+-+likelihood problem in terms of
\begin{align}
  \phi = \frac{1}{\sigma^2}, \qquad \theta = \frac{\beta}{\sigma^2},
  \label{est:naturalpara}
\end{align}
the natural parameters of the Gaussian multiparameter exponential family with unknown mean and variance.
The negative Gaussian log\verb+-+likelihood is not jointly convex in the $(\beta,\sigma)$ parameterization. In fact, even with $\beta$ fixed, it is nonconvex in $\sigma$. However, in the natural parameterization the negative log\verb+-+likelihood is jointly convex in $(\phi,\theta)$.

We penalize this negative log\verb+-+likelihood with an $\ell_1$\verb+-+norm on the
natural parameter $\theta$ and call this new estimator the natural lasso.  We show in Section \ref{sec:nlasso} that the resulting error variance estimator can in fact be very simply expressed as the minimizing value of the regular lasso problem \eqref{est:lasso}:
\begin{align}
  \hat{\sigma}^2_\lambda = \min_{\beta \in \real^p} \left( \frac{1}{n} \norm{\y - \X\beta}_2^2 + 2 \lambda \norm{\beta}_1 \right).
  \label{est:sig_nlasso}
\end{align}
Observing that the first term is $\hat{\sigma}^2_{\mathrm{naive}}$, we directly see that the natural lasso counters the naive method's downward bias through an additive correction; this is in contrast to $\rt$'s reliance on a multiplicative correction that sometimes may be unstable.
Computing \eqref{est:sig_nlasso} is clearly no harder than solving a lasso and, unlike $\rt$, does not require determining a threshold for deciding which coefficient estimates are  numerically zero.
Furthermore, we establish finite\verb+-+sample bounds on the mean squared error that hold without making any assumptions on the design matrix $\X$. Our theoretical analysis suggests a second approach that is also based on the natural parameterization.
The theory that we develop for this method, which we call the organic lasso, relies on weaker assumptions.
We find that both methods have competitive empirical performance relative to $\rt$ and show particular strength in settings in which support recovery is known to be challenging.

Our final contribution is to show that existing methods
can also attain high\verb+-+dimensional consistency under no assumptions on $\X$.  In
particular, we provide finite\verb+-+sample bounds for $\hat{\sigma}^2_{\mathrm{naive}}$, with
$\hat{\beta}$ in \eqref{est:naive} taken to be the standard lasso or
the square\verb+-+root/scaled lasso estimator
\citep{belloni2011square, Sun25092012}.  Previous results about
$\hat{\sigma}^2_{\mathrm{naive}}$ have placed strong assumptions on $\X$.
Thus, our work provides a fuller story about the problem of estimating the error variance in high\verb+-+dimensional linear models.

\section{Natural parameterization}
The negative log\verb+-+likelihood function in \eqref{est:model} is, up to a constant,
\begin{align}
  \L\left( \beta, \sigma^2 | \X, \y \right) = \frac{n}{2} \log \sigma^2 + \frac{\norm{\y - \X \beta}_2^2}{2 \sigma^2}.
  \nonumber
\end{align}
When $\sigma^2$ is known, the $\sigma$ dependence can be ignored, leading to the standard least\verb+-+squares criterion;
however, when $\sigma$ is unknown, performing a full minimization of the penalized negative log\verb+-+likelihood amounts to solving a nonconvex optimization problem even with a convex penalty.

The nonconvexity of the Gaussian negative log\verb+-+likelihood in its
variance, or more generally, covariance matrix, is a
well\verb+-+known difficulty \citep{bien2011sparse}.  In this context,
working instead with the inverse covariance matrix is common
\citep{yuan2007model,banerjee2008model,friedman2008sparse}.  We take
an analogous approach when estimation of $\sigma^2$ is of interest, considering the natural parameterization \eqref{est:naturalpara} of the Gaussian multiparameter exponential family with unknown variance,
\begin{align}
  \L\left( \phi^{-1}\theta,\phi^{-1} | \X, \y \right) 
  = -\frac{n}{2} \log \phi + \frac{1}{2} \phi \norm{\y - \X \frac{\theta}{\phi}}_2^2
  = -\frac{n}{2} \log \phi+\phi \frac{\norm{y}_2^2}{2} - \y^T \X \theta + \frac{\norm{\X\theta}_2^2}{2\phi}.
  \nonumber
\end{align}
Observing that attaining sparsity in $\theta$ is equivalent to attaining sparsity in $\beta$, we propose the following penalized maximum log\verb+-+likelihood estimator:
\begin{align}
  \left( \hat{\theta}_\lambda, \hat{\phi}_\lambda \right) 
  \in \argmin_{\phi > 0,~\theta} \left\{ -\frac{1}{2} \log \phi + \phi \frac{\norm{y}_2^2}{2n} - \frac{1}{n} \y^T \X \theta + \frac{\norm{\X\theta}_2^2}{2n\phi} + \lambda \Omega(\theta, \phi) \right\}
  \label{est:naturalest}
\end{align}
for a convex penalty $\Omega(\theta, \phi)$ that induces sparsity in $\theta$. We will focus on $\Omega(\theta, \phi) = \snorm{\theta}_1$ in Section \ref{sec:nlasso} and $\Omega(\theta, \phi) = \phi^{-1} \snorm{\theta}_1^2$ in Section \ref{sec:olasso}.
This problem is jointly convex in $(\theta, \phi)$.  While this is a general property of exponential families due to the convexity of the cumulant generating function, we can see it in this special case because of the convexity of $-\log$ and the convexity of the quadratic\verb+-+over\verb+-+linear function \citep{boyd2004convex, rockafellar2015convex}.  Given a solution to \eqref{est:naturalest}, we can reverse \eqref{est:naturalpara} to get estimators for $\sigma^2$ and $\beta^\ast$:
\begin{align}
  \tilde\sigma^2_\lambda = \frac{1}{\hat\phi_\lambda},\qquad \tilde\beta_\lambda= \frac{\hat\theta_\lambda}{\hat\phi_\lambda}.
  \label{eq:nl}
\end{align}

Before proceeding with an analysis of the estimator \eqref{eq:nl} with specific choices of $\Omega(\theta, \phi)$, we point out a similarity between our method and that of \citet{st2010l1}, who consider a different convexifying reparameterization of the Gaussian log\verb+-+likelihood, using
$\rho = \sigma^{-1}$ and $\gamma = \sigma^{-1}\beta$.  They put an $\ell_1$\verb+-+norm penalty on $\gamma$, which has the same sparsity pattern as $\beta$, and solve
\begin{align}
  \min_{\rho>0,\gamma} \left( -\log \rho + \frac{1}{2n} \norm{\rho \y - \X \gamma}_2^2 + \lambda \norm{\gamma}_1 \right).
  \label{est:stadler}
\end{align}
\citet{sun2010comments} give an asymptotic analysis of the solution to \eqref{est:stadler} under a compatibility condition.
A modification of this problem \citep{antoniadis2010comments} gives the scaled lasso \citep{Sun25092012}, which is known to be equivalent to the square\verb+-+root lasso \citep{belloni2011square}:
\begin{align}
  \tilde{\beta}_{\mathrm{SQRT}} = \argmin_{\beta \in \real^p} \left( \frac{1}{\surd{n}} \norm{\y - \X \beta}_2 + \lambda \norm{\beta}_1 \right),
  \qquad
  \tilde{\sigma}^2_{\mathrm{SQRT}} = \frac{1}{n} \norm{\y - \X \tilde{\beta}_{\mathrm{SQRT}}}_2^2.
  \label{est:sqrt}
\end{align}
With the same parameterization $(\rho, \gamma)$, \citet{dalalyan2012fused} propose the scaled Dantzig selector under the assumption of fused sparsity. Under the restricted eigenvalue condition, they establish the same rate of convergence in estimating the error variance as the fast prediction error rate of the standard lasso \citep{hebiri2013correlations, 2016arXiv160800624L, dalalyan2017}. 

\section{The natural lasso estimator of error variance} \label{sec:nlasso}
We first propose the natural lasso, which is the solution to \eqref{est:naturalest} with $\Omega(\theta, \phi) = \snorm{\theta}_1$.
One might think that solving the natural lasso
would involve a specialized algorithm.  The
following proposition shows, remarkably, that this is not the case.

\begin{proposition}
  The natural lasso estimator $(\tilde\beta_\lambda,\tilde\sigma^2_\lambda)$ defined in \eqref{eq:nl}, where $(\hat{\theta}_\lambda, \hat{\phi}_\lambda)$ is a solution to \eqref{est:naturalest} with $\Omega(\theta, \phi) = \snorm{\theta}_1$, satisfies the following properties:
  \begin{enumerate}
    \item $\tilde\beta_\lambda=\hat\beta_\lambda$, a solution to the standard lasso \eqref{est:lasso};
    \item $\tilde\sigma^2_\lambda=\hat\sigma^2_\lambda$, the standard lasso's optimal value \eqref{est:sig_nlasso}.
  \end{enumerate}
  Furthermore, 
  $\hat\sigma^2_\lambda= n^{-1}(\snorm{\y}_2^2 - \snorm{\X \hat{\beta}_\lambda}_2^2)$.
  \label{main:prop:nl}
\end{proposition}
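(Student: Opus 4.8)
The plan is to reparameterize the natural lasso problem \eqref{est:naturalest} via the substitution $\theta = \phi\beta$, which is a bijection between $\{(\theta,\phi):\phi>0\}$ and $\{(\beta,\phi):\phi>0\}$. Using $\snorm{\X\theta}_2^2 = \phi^2\snorm{\X\beta}_2^2$, $\y^T\X\theta = \phi\,\y^T\X\beta$, and $\snorm{\theta}_1 = \phi\snorm{\beta}_1$ in the objective and collecting the terms proportional to $\phi$, the whole bracketed expression except $-\tfrac12\log\phi$ becomes $\phi$ times $\tfrac{1}{2n}\snorm{\y - \X\beta}_2^2 + \lambda\snorm{\beta}_1 = \tfrac12 g(\beta)$, where $g(\beta) = n^{-1}\snorm{\y-\X\beta}_2^2 + 2\lambda\snorm{\beta}_1$ is exactly the lasso objective \eqref{est:lasso}. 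Thus the natural lasso is equivalent to minimizing $h(\beta,\phi) = -\tfrac12\log\phi + \tfrac{\phi}{2} g(\beta)$ over $\phi>0$ and $\beta\in\real^p$, with the optimal $\beta$ equal to $\hat\theta_\lambda/\hat\phi_\lambda = \tilde\beta_\lambda$ and the optimal $\phi$ equal to $1/\hat\phi_\lambda = \tilde\sigma^2_\lambda$ by \eqref{eq:nl}.

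Next I would profile out $\phi$. For fixed $\beta$, the map $\phi\mapsto h(\beta,\phi)$ is strictly convex on $(0,\infty)$ with derivative $-\tfrac{1}{2\phi} + \tfrac12 g(\beta)$, so its unique minimizer is $\phi^\ast(\beta) = 1/g(\beta)$ (this uses $g(\beta)>0$), at which $h(\beta,\phi^\ast(\beta)) = \tfrac12\log g(\beta) + \tfrac12$. Since $t\mapsto\log t$ is strictly increasing, minimizing this over $\beta$ is the same as minimizing $g(\beta)$, whose minimizer is a standard lasso solution $\hat\beta_\lambda$ (which exists since $g(\beta)\ge 2\lambda\snorm{\beta}_1$ is coercive for $\lambda>0$). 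This yields parts 1 and 2: $\tilde\beta_\lambda = \hat\beta_\lambda$ and $\tilde\sigma^2_\lambda = 1/\hat\phi_\lambda = g(\hat\beta_\lambda) = \hat\sigma^2_\lambda$ as in \eqref{est:sig_nlasso}.

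For the ``furthermore,'' I would expand $g(\hat\beta_\lambda) = n^{-1}\snorm{\y}_2^2 - 2n^{-1}\y^T\X\hat\beta_\lambda + n^{-1}\snorm{\X\hat\beta_\lambda}_2^2 + 2\lambda\snorm{\hat\beta_\lambda}_1$ and invoke lasso stationarity: there is a subgradient $z\in\partial\snorm{\hat\beta_\lambda}_1$ with $n^{-1}\X^T(\y - \X\hat\beta_\lambda) = \lambda z$; taking the inner product with $\hat\beta_\lambda$ and using $z^T\hat\beta_\lambda = \snorm{\hat\beta_\lambda}_1$ gives $n^{-1}\hat\beta_\lambda^T\X^T(\y - \X\hat\beta_\lambda) = \lambda\snorm{\hat\beta_\lambda}_1$. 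Substituting this identity cancels the cross term and the penalty term against one copy of $n^{-1}\snorm{\X\hat\beta_\lambda}_2^2$, leaving $\hat\sigma^2_\lambda = n^{-1}(\snorm{\y}_2^2 - \snorm{\X\hat\beta_\lambda}_2^2)$. Equivalently, one can observe that $t=1$ minimizes $t\mapsto g(t\hat\beta_\lambda)$ and set the derivative at $t=1$ to zero.

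The only point needing care is the boundary behavior of the inner minimization over $\phi$: one must ensure $g(\beta)>0$ on the region of interest so that $\phi^\ast(\beta)$ is finite and positive rather than escaping to $\phi=\infty$, which is immediate whenever $\y\neq 0$ (the degenerate case $\y=0$ being trivial). Beyond that, every step is a routine computation.
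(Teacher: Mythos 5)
Your proof is correct, but it takes a different route from the paper's. The paper works directly with the joint stationarity conditions of \eqref{est:naturalest} in $(\theta,\phi)$ (its Lemma \ref{lem:optimality_naturallasso}): the $\theta$-equation is, after dividing by $\hat\phi_\lambda>0$, exactly the lasso KKT condition, giving part 1; the $\phi$-equation immediately yields the ``furthermore'' identity $\hat\sigma^2_\lambda=n^{-1}(\snorm{\y}_2^2-\snorm{\X\hat\beta_\lambda}_2^2)$; and part 2 then follows by the same inner-product-with-$\hat\beta_\lambda$ algebra that you use, run in the opposite direction. You instead substitute $\theta=\phi\beta$ to rewrite the objective as $-\tfrac12\log\phi+\tfrac{\phi}{2}g(\beta)$ with $g$ the lasso objective, profile out $\phi$ in closed form ($\phi^\ast(\beta)=1/g(\beta)$), and reduce to minimizing $\tfrac12\log g(\beta)+\tfrac12$, so that parts 1 and 2 drop out with no subgradient calculus at all and the identity $\check{}\,$--- rather, $\tilde\sigma^2_\lambda=g(\hat\beta_\lambda)$ --- is automatic; the lasso KKT is then invoked only for the ``furthermore'' display. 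This profiling argument is a perfectly valid and arguably more transparent derivation (it makes clear that the natural lasso objective is, up to a constant, the logarithm of the lasso objective), while the paper's KKT route gets the ``furthermore'' identity for free from the $\phi$-stationarity equation and avoids having to justify the exchange of minimizations and the existence/positivity issues you rightly flag (which require $\y\neq 0$, just as the existence of $(\hat\theta_\lambda,\hat\phi_\lambda)$ implicitly does). One harmless slip in your write-up: after the reparameterization the optimal $\phi$ is $\hat\phi_\lambda$ itself, not $1/\hat\phi_\lambda$; it is the optimal value $1/\hat\phi_\lambda$ that equals $\tilde\sigma^2_\lambda=g(\hat\beta_\lambda)$, which is how you in fact use it.
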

The proof of this proposition and all theoretical results that follow
can be found in the Appendices.
Thus, to get the natural lasso estimator of $(\beta^\ast,\sigma^2)$, one simply solves the standard lasso \eqref{est:lasso} and returns a solution and the minimal value.  

An attractive property of the natural lasso estimator
$\hat\sigma_\lambda^2$ is the relative ease with which one can prove
bounds about its performance.  Since $\hat\sigma_\lambda^2$ is the
optimal value of the lasso problem, the objective value at any vector
$\beta$ provides an upper bound on $\hat\sigma_\lambda^2$.  Likewise,
any dual feasible vector provides a lower bound on
$\hat\sigma_\lambda^2$.  These considerations are used to prove the following lemma, which shows that for a suitably chosen $\lambda$, the natural lasso variance estimator gets close to the oracle estimator of $\sigma^2$.
\begin{lemma} \label{lem1}
  If $\lambda \geq n^{-1}\snorm{\X^T \varepsilon}_\infty$, then $| \hat\sigma^2_\lambda - n^{-1}\snorm{\varepsilon}^2_2 | \le 2\lambda\snorm{\beta^\ast}_1$.
\end{lemma}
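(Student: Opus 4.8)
The plan is to bound $\hat\sigma^2_\lambda$ from above and from below separately, exploiting the observation made just before the lemma that $\hat\sigma^2_\lambda$ is the optimal value of the convex program \eqref{est:sig_nlasso}: evaluating its objective at any $\beta$ yields an upper bound, while any dual feasible point yields a lower bound.

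For the upper bound, I would plug the feasible point $\beta = \beta^\ast$ into \eqref{est:sig_nlasso}. Since $\hat\sigma^2_\lambda$ is the minimum over all $\beta$,
\begin{align*}
  \hat\sigma^2_\lambda \;\le\; \frac1n\norm{\y - \X\beta^\ast}_2^2 + 2\lambda\norm{\beta^\ast}_1 \;=\; \frac1n\norm{\varepsilon}_2^2 + 2\lambda\norm{\beta^\ast}_1 ,
\end{align*}
which is one side of the claimed inequality and requires nothing about $\lambda$.

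For the lower bound, I would dualize \eqref{est:sig_nlasso}: introducing $r = \y - \X\beta$ with multiplier $\nu$, the Lagrangian separates; minimizing over $r$ contributes $-\tfrac n4\norm{\nu}_2^2$, and minimizing over $\beta$ is finite (and zero) exactly when $\norm{\X^T\nu}_\infty \le 2\lambda$. Weak duality then gives $\hat\sigma^2_\lambda \ge \nu^T\y - \tfrac n4\norm{\nu}_2^2$ for every such $\nu$. The step that uses the hypothesis $\lambda \ge n^{-1}\norm{\X^T\varepsilon}_\infty$ is the choice of the dual point $\nu = \tfrac2n\varepsilon$, which is feasible precisely because $\tfrac2n\norm{\X^T\varepsilon}_\infty \le 2\lambda$. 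Substituting $\y = \X\beta^\ast + \varepsilon$ and simplifying reduces the dual value to $\tfrac1n\norm{\varepsilon}_2^2 + \tfrac2n\varepsilon^T\X\beta^\ast$, and then Hölder's inequality, $\bigl|\tfrac2n\varepsilon^T\X\beta^\ast\bigr| \le \tfrac2n\norm{\X^T\varepsilon}_\infty\norm{\beta^\ast}_1 \le 2\lambda\norm{\beta^\ast}_1$, yields $\hat\sigma^2_\lambda \ge \tfrac1n\norm{\varepsilon}_2^2 - 2\lambda\norm{\beta^\ast}_1$. Combining the two bounds gives the lemma.

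One can also bypass the dual entirely: for arbitrary $\beta$, expand $\tfrac1n\norm{\y - \X\beta}_2^2 = \tfrac1n\norm{\varepsilon + \X(\beta^\ast - \beta)}_2^2$, discard the nonnegative term $\tfrac1n\norm{\X(\beta^\ast - \beta)}_2^2$, and note that $-\tfrac2n\varepsilon^T\X\beta + 2\lambda\norm{\beta}_1 \ge 0$ by Hölder together with $\lambda \ge n^{-1}\norm{\X^T\varepsilon}_\infty$; minimizing the resulting lower bound over $\beta$ reproduces the same inequality. The only real obstacle is recognizing the right dual point (equivalently, the right regrouping of terms); once $\nu = \tfrac2n\varepsilon$ is in hand the rest is a short computation, and notably no assumption is placed on $\X$ or on $\beta^\ast$.
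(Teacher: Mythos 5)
Your proof is correct and follows essentially the same route as the paper's: the upper bound by evaluating the lasso objective at $\beta^\ast$, and the lower bound via weak duality with the dual point $\nu = \tfrac{2}{n}\varepsilon$ (the paper's $u=\varepsilon$ after its $\tfrac{2}{n}u$ rescaling of the multiplier), followed by H\"older's inequality. Your closing primal-only argument is a nice elementary shortcut, but it does not change the substance of the proof.
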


The result above is deterministic in that it does not rely on any
statistical assumptions or arguments.  The next result adds such considerations to give a mean squared error bound for the natural lasso.

\begin{theorem}\label{thm:msebound}
Suppose that each column $\X_j$ of the matrix $\X \in \real^{n \times p}$ has been scaled so that $\snorm{\X_j}^2_2 = n$ for all $j = 1, \ldots, p$, and assume that
  $\varepsilon \sim N \left( \0, \sigma^2 I_n \right)$.
  Then, for any constant $M > 1$,
  the natural lasso estimator \eqref{est:sig_nlasso} with
  $\lambda = \sigma (2Mn^{-1}\log p)^{1/2}$ satisfies the following relative mean squared error bound:
  \begin{align}
    \E \left\{ \left( \frac{\hat{\sigma}^2_\lambda}{\sigma^2} - 1 \right)^2 \right\}
    \leq \left\{\left( 8M + 8\frac{p^{1 - 8 M}}{\log p} \right)^{1/2} \frac{\norm{\beta^\ast}_1}{\sigma}\left( \frac{\log p}{n} \right)^{1/2} + \left( \frac{2}{n} \right)^{1/2} \right\}^2.
    \nonumber
  \end{align}
\end{theorem}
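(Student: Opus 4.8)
The plan is to benchmark $\hat\sigma^2_\lambda$ against the infeasible oracle $n^{-1}\norm{\varepsilon}_2^2$ and then that oracle against $\sigma^2$. Writing
\[
\frac{\hat\sigma^2_\lambda}{\sigma^2}-1 \;=\; \frac{\hat\sigma^2_\lambda-n^{-1}\norm{\varepsilon}_2^2}{\sigma^2}\;+\;\left(\frac{n^{-1}\norm{\varepsilon}_2^2}{\sigma^2}-1\right),
\]
Minkowski's inequality (the triangle inequality for the $L^2$ norm) bounds $(\E[(\hat\sigma^2_\lambda/\sigma^2-1)^2])^{1/2}$ by the sum of the root mean squares of the two right-hand terms. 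The second term is exactly the normalized fluctuation of a $\chi^2_n$: $n^{-1}\norm{\varepsilon}_2^2/\sigma^2$ has mean $1$ and variance $2/n$, so its contribution is exactly $(2/n)^{1/2}$, which is the trailing summand of the stated bound. Since $(8M+8p^{1-8M}/\log p)(\log p/n)=(8M\log p+8p^{1-8M})/n$, it then remains to prove the estimate $\E[(\hat\sigma^2_\lambda-n^{-1}\norm{\varepsilon}_2^2)^2]\le (8M\log p+8p^{1-8M})\,\sigma^2\norm{\beta^\ast}_1^2/n$.

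For this I would split on the event $\mathcal A=\{\,n^{-1}\norm{\X^T\varepsilon}_\infty\le\lambda\,\}$, on which Lemma \ref{lem1} applies verbatim. Because the columns satisfy $\norm{\X_j}_2^2=n$, each $n^{-1}\X_j^T\varepsilon$ is $N(0,\sigma^2/n)$, so a Gaussian tail bound gives $\Prob\{n^{-1}|\X_j^T\varepsilon|>\lambda\}=\Prob\{|N(0,1)|>(2M\log p)^{1/2}\}\le p^{-M}$, and a union bound controls $\Prob(\mathcal A^c)$. On $\mathcal A$, Lemma \ref{lem1} yields $(\hat\sigma^2_\lambda-n^{-1}\norm{\varepsilon}_2^2)^2\le 4\lambda^2\norm{\beta^\ast}_1^2$, and substituting $\lambda=\sigma(2Mn^{-1}\log p)^{1/2}$ turns this into $8M\sigma^2\norm{\beta^\ast}_1^2(\log p)/n$ --- precisely the $8M$ term. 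Note also that the upper half of the argument behind Lemma \ref{lem1}, $\hat\sigma^2_\lambda\le n^{-1}\norm{\varepsilon}_2^2+2\lambda\norm{\beta^\ast}_1$, holds with no event at all, since it is just the value of the objective in \eqref{est:sig_nlasso} at $\beta=\beta^\ast$; only the matching lower bound needs $\mathcal A$.

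The crux is then $\E[(\hat\sigma^2_\lambda-n^{-1}\norm{\varepsilon}_2^2)^2\mathbf 1_{\mathcal A^c}]$, where only a crude deterministic bound on $|\hat\sigma^2_\lambda-n^{-1}\norm{\varepsilon}_2^2|$ is available: $\hat\sigma^2_\lambda$ is the minimum of a nonnegative objective, so it lies between $0$ and its value at $\beta=0$, namely $n^{-1}\norm{\y}_2^2\le 2n^{-1}\norm{\X\beta^\ast}_2^2+2n^{-1}\norm{\varepsilon}_2^2$, and $n^{-1}\norm{\X\beta^\ast}_2^2\le\norm{\beta^\ast}_1^2$ since $\norm{\X\beta^\ast}_2\le\sum_j|\beta^\ast_j|\,\norm{\X_j}_2=n^{1/2}\norm{\beta^\ast}_1$. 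Combining such bounds with moments of $n^{-1}\norm{\varepsilon}_2^2$ and with the smallness of $\Prob(\mathcal A^c)$ produces a term that is negligible next to the $8M$ contribution; tracking exactly which Gaussian threshold governs the relevant per-column probability is what pins down the exponent appearing in $p^{1-8M}$. I expect this accounting on $\mathcal A^c$ --- choosing the deterministic bound and the tail estimate so that the product genuinely comes out as $O(p^{1-8M})$ rather than something larger --- to be the only delicate point; the remainder is Lemma \ref{lem1}, an exact $\chi^2$ moment, and Minkowski's inequality. Reassembling the two pieces under the square root and squaring gives the claimed inequality.
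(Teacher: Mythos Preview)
Your overall architecture matches the paper's exactly: the paper also writes $\hat\sigma^2_\lambda/\sigma^2-1$ as the oracle comparison plus the $\chi^2$ fluctuation, controls the cross term by Cauchy--Schwarz (equivalently your Minkowski step), and plugs in $\Var(n^{-1}\snorm{\varepsilon}^2_2)=2\sigma^4/n$ for the second piece. The only divergence is in how $\E\bigl[(\hat\sigma^2_\lambda-n^{-1}\snorm{\varepsilon}^2_2)^2\bigr]$ is bounded.

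The paper does \emph{not} split on $\mathcal A$ and invoke crude envelopes on $\mathcal A^c$. Instead it converts the high--probability bound into an expectation via the layer--cake formula: with $X_n=(\hat\sigma^2_\lambda-n^{-1}\snorm{\varepsilon}_2^2)^2$ and $r_n=8\sigma^2\snorm{\beta^\ast}_1^2(\log p)/n$, one writes $\E[X_n/r_n]=\int_0^\infty \Prob(X_n/r_n>t)\,dt$, bounds the integrand by $1$ on $[0,M]$ and by $e^{-(t-1)\log p}$ for $t\ge M$, and evaluates the tail as $p^{1-M}/\log p$. The virtue of this route is that the entire bound stays proportional to $r_n$, which is precisely why the final inequality factors cleanly as a constant times $\sigma^{-2}\snorm{\beta^\ast}_1^2(\log p)/n$.

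Your proposed handling of $\mathcal A^c$ cannot produce that clean form and this is a genuine gap, not a bookkeeping detail. The only deterministic lower envelope you have is $\hat\sigma^2_\lambda\ge0$, which forces a term $\E\bigl[n^{-2}\snorm{\varepsilon}_2^4\,\mathbf 1_{\mathcal A^c}\bigr]$ into the bound. After Cauchy--Schwarz this is of order $\sigma^4\sqrt{\Prob(\mathcal A^c)}=O(\sigma^4 p^{(1-M)/2})$: it carries no factor of $\snorm{\beta^\ast}_1$ (so it does not vanish when $\beta^\ast=0$), no factor of $1/n$, and the square root halves the exponent of $p$. Such a term cannot be absorbed into a constant multiplying $\sigma^{-2}\snorm{\beta^\ast}_1^2(\log p)/n$, which is the shape the theorem asserts. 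Your instinct that $\mathcal A^c$ is the delicate point is correct, but the remedy is structural --- use the tail--integral device so that every contribution is expressed in units of $r_n$ --- rather than a sharper choice of per--column Gaussian threshold.
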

\begin{corollary}
\begin{align}
  \E \left| \frac{\hat{\sigma}^2_\lambda}{\sigma^2} - 1 \right| = O\left\{ \frac{\norm{\beta^\ast}_1}{\sigma}\left( \frac{\log p}{n} \right)^{1/2} \right\}.
  \label{eq:riskbound}
\end{align}
\end{corollary}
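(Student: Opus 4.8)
The plan is to obtain this as an immediate consequence of Theorem~\ref{thm:msebound} via Jensen's inequality, since $L^1$ control follows from $L^2$ control. Write $Z = \hat\sigma^2_\lambda/\sigma^2 - 1$. By Jensen (equivalently Cauchy--Schwarz), $\E|Z| \le (\E Z^2)^{1/2}$. The right-hand side of Theorem~\ref{thm:msebound} is already displayed as a perfect square $(A+B)^2$, where $A = (8M + 8 p^{1-8M}/\log p)^{1/2}\,(\norm{\beta^\ast}_1/\sigma)(\log p/n)^{1/2}$ and $B = (2/n)^{1/2}$, so taking the square root gives the clean inequality $\E|Z| \le A + B$ for $\lambda = \sigma(2Mn^{-1}\log p)^{1/2}$ with any fixed constant $M > 1$.

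It then remains only to read off the order. Fixing $M > 1$, for $p \ge 2$ we have $p^{1-8M} \le 1$ and $\log p \ge \log 2$, so the leading factor $(8M + 8 p^{1-8M}/\log p)^{1/2}$ is bounded by a constant $C_M$ depending only on $M$; hence $A \le C_M (\norm{\beta^\ast}_1/\sigma)(\log p/n)^{1/2} = O\{(\norm{\beta^\ast}_1/\sigma)(\log p/n)^{1/2}\}$, which is exactly the claimed rate. The remaining term $B = (2/n)^{1/2} = O(n^{-1/2})$ is a lower-order additive contribution: once $\log p \ge 1$ it is dominated by $(\log p/n)^{1/2}$, and in the standard regime in which the noise-scaled $\ell_1$ norm $\norm{\beta^\ast}_1/\sigma$ stays bounded away from $0$ it is absorbed into the stated bound. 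Bookkeeping this $n^{-1/2}$ term --- whether to leave it as an explicit additive piece or fold it into $O\{\cdot\}$ --- is really the only point that needs a word of care; there is no genuine obstacle, as the substantive work has all been done in Theorem~\ref{thm:msebound}.

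For completeness, I note that one could also prove the $L^1$ bound directly, without passing through the second moment, starting from Lemma~\ref{lem1}: on the event $\{\lambda \ge n^{-1}\snorm{\X^T\varepsilon}_\infty\}$ one has $|Z| \le 2\lambda\norm{\beta^\ast}_1/\sigma^2 + |n^{-1}\snorm{\varepsilon}_2^2/\sigma^2 - 1|$, whose first term is deterministic and of the right order for the prescribed $\lambda$, whose second term has expectation $O(n^{-1/2})$ by the fluctuation of a chi-squared variable, and whose defining event fails with only polynomially small probability by a Gaussian maximal inequality together with the column normalization $\snorm{\X_j}_2^2 = n$; combined with the crude deterministic bounds $-1 \le Z \le n^{-1}\snorm{\y}_2^2/\sigma^2 - 1$ on the complementary event, this recovers the same conclusion. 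Since Theorem~\ref{thm:msebound} already packages precisely this computation, I would take the one-line route through Jensen.
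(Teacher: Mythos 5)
Your proposal is correct and follows essentially the same route as the paper, which likewise deduces the corollary from Theorem~\ref{thm:msebound} via Jensen's inequality, so that $\E|\hat\sigma^2_\lambda/\sigma^2-1|\le\{\E(\hat\sigma^2_\lambda/\sigma^2-1)^2\}^{1/2}$ and the rate is read off from the square-root of the displayed bound. Your remark on absorbing the $(2/n)^{1/2}$ term is the same implicit bookkeeping the paper performs, so there is nothing further to add.
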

\begin{proof}
This follows from Jensen's inequality.
\end{proof}

\begin{remark} \label{rem:polynomialmoment}
  Theorem \ref{thm:msebound} can be easily generalized to the case where the independently and identically distributed zero-mean error $\varepsilon_i$ with variance $\sigma^2$ is sub-Gaussian or sub-exponential. A high probability bound can be obtained for $\varepsilon_i$ with bounded polynomial moments. In particular, for any $m \geq 3$, if $\E(|\varepsilon_i|^m) \leq (m!)^{-1}2 K^{m - 2}$ for some $K > 0$, and if each column $X_j$ is scaled so that $\sum_{i = 1}^n X_{ij}^m = n$ for $j = 1, \dots, p$, then with $\lambda = 4K \sigma n^{-1/2}(\log p)^{1/2}$ we have that
  \begin{align}
    \left| \hat{\sigma}^2_\lambda - \frac{\norm{\varepsilon}_2^2}{n} \right| = O\left\{ \sigma\norm{\beta^\ast}_1\left( \frac{\log p}{n} \right)^{1/2} \right\}    
    \nonumber
  \end{align}
  holds with probability greater than $1 - p^{-1}$.
\end{remark}

To put Theorem \ref{thm:msebound} in context, we devote the remainder of this section to considering what bounds are available for other methods for estimating $\sigma^2$. \citet{bayati2013estimating} propose an estimator of $\sigma^2$ based
on estimating the mean squared error of the lasso.
They show that their estimator of $\sigma^2$ is asymptotically consistent with fixed $p$ as $n \to \infty$. 
In contrast, we provide finite sample results and these include the
$p\gg n$ case. Also, the consistency result in \citet{bayati2013estimating}
is based on the assumption of independent Gaussian features, and
in extending this to the case of correlated Gaussian features, the
authors invoke a conjecture.
In comparison, \eqref{eq:riskbound} is essentially free of assumptions on the design matrix.

The natural lasso also compares favorably to the
method-of-moments-based estimator of \citet{Dicker14MM} in terms of mean squared error bounds. In particular, \citet{Dicker14MM} establishes a 
$O_P[(\sigma^{-2} \tau^2 + 1)\{n^{-2}(p + n)\}^{1/2}]$ relative mean squared error rate, where
$\tau^2 = \snorm{\Sigma^{-1/2}\beta^\ast}_2^2$ and $\Sigma$ is the covariance of features $X$. This rate can be much slower for large $p$.

Notably, the mean squared error bound in Theorem \ref{thm:msebound} does not put any assumption on $\X$, $\beta^\ast$, or $\sigma^2$.  In this sense, the result is analogous to a slow rate bound \citep{rigollet2011exponential,dalalyan2017}, which appears in the lasso prediction consistency context.
While it is well known \citep{Sun25092012} or can be easily verified that under stronger conditions, i.e., compatibility or restricted eigenvalue conditions, the naive estimator \eqref{est:naive} based on the lasso and $\tilde{\sigma}^2_{\mathrm{SQRT}}$ in \eqref{est:sqrt} attain a faster rate, $O( |S| n^{-1} \log p)$,
it is natural to ask whether these two estimators also attain a rate bound as in \eqref{eq:riskbound} when the conditions on $\X$ are not assumed. 
The following two results give an affirmative answer to this question.
\begin{proposition} \label{prop:slow_rate_naive}
  Under the conditions of Theorem \ref{thm:msebound}, the naive estimator \eqref{est:naive} based on the lasso estimator $\hat{\beta}_\lambda$ with $\lambda = 4 \sigma (n^{-1}\log p)^{1/2}$ has the following bound with probability greater than $1 - p^{-1}$:
  \begin{align}
    \left| \snaive - \frac{\snorm{\varepsilon}_2^2}{n} \right| \leq 16 \sigma \snorm{\beta^\ast}_1\left( \frac{\log p}{n} \right)^{1/2} .
    \label{eq:slow_rate_naive}
  \end{align}
\end{proposition}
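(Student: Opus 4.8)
The plan is to relate $\snaive$ --- the first term of the natural lasso objective --- to the natural lasso value $\hat\sigma^2_\lambda$, which is already controlled by Lemma~\ref{lem1}, while paying for the difference $2\lambda\snorm{\hat\beta_\lambda}_1$ via an $\ell_1$ bound on the lasso solution. Writing $\y = \X\beta^\ast + \varepsilon$ and expanding the square gives the exact identity
\begin{align}
  \snaive - \frac{\snorm{\varepsilon}_2^2}{n} = \frac{1}{n}\snorm{\X(\beta^\ast - \hat\beta_\lambda)}_2^2 + \frac{2}{n}\varepsilon^T\X(\beta^\ast - \hat\beta_\lambda),
  \nonumber
\end{align}
so the whole proof reduces to controlling the two terms on the right.

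First I would set up the usual ``good event'' $\mathcal E = \{n^{-1}\snorm{\X^T\varepsilon}_\infty \le \lambda/2\}$. Because each column is scaled so that $\snorm{\X_j}_2^2 = n$, we have $\X_j^T\varepsilon/(\sigma\sqrt n) \sim N(0,1)$, so the Mills-ratio form of the Gaussian tail bound together with a union bound over $j=1,\dots,p$ shows that, with $\lambda = 4\sigma(n^{-1}\log p)^{1/2}$, the event $\mathcal E$ has probability at least $1 - p^{-1}$. On $\mathcal E$ we also have $\lambda \ge n^{-1}\snorm{\X^T\varepsilon}_\infty$, so Lemma~\ref{lem1} is in force; everything below is on $\mathcal E$.

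Next I would write down the lasso basic inequality: optimality of $\hat\beta_\lambda$ in \eqref{est:lasso} against the competitor $\beta^\ast$, after the same expansion, gives
\begin{align}
  \frac{1}{n}\snorm{\X(\beta^\ast - \hat\beta_\lambda)}_2^2 + \frac{2}{n}\varepsilon^T\X(\beta^\ast - \hat\beta_\lambda) + 2\lambda\snorm{\hat\beta_\lambda}_1 \le 2\lambda\snorm{\beta^\ast}_1.
  \nonumber
\end{align}
The first two terms of the left-hand side are precisely $\snaive - n^{-1}\snorm{\varepsilon}_2^2$, so dropping the nonnegative term $2\lambda\snorm{\hat\beta_\lambda}_1$ already yields the \emph{upper} bound $\snaive - n^{-1}\snorm{\varepsilon}_2^2 \le 2\lambda\snorm{\beta^\ast}_1$ (this direction in fact needs no event at all). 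For the \emph{lower} bound, I would first extract from the same inequality, by bounding the cross term by $2n^{-1}\snorm{\X^T\varepsilon}_\infty\snorm{\beta^\ast-\hat\beta_\lambda}_1 \le \lambda(\snorm{\beta^\ast}_1 + \snorm{\hat\beta_\lambda}_1)$ and discarding the nonnegative prediction term, the cone-type bound $\snorm{\hat\beta_\lambda}_1 \le 3\snorm{\beta^\ast}_1$. Then, discarding the nonnegative term $n^{-1}\snorm{\X(\beta^\ast-\hat\beta_\lambda)}_2^2$ from the identity and using $\mathcal E$ once more,
\begin{align}
  \snaive - \frac{\snorm{\varepsilon}_2^2}{n} \ge -\frac{2}{n}\snorm{\X^T\varepsilon}_\infty\snorm{\beta^\ast - \hat\beta_\lambda}_1 \ge -\lambda\left(\snorm{\beta^\ast}_1 + \snorm{\hat\beta_\lambda}_1\right) \ge -4\lambda\snorm{\beta^\ast}_1.
  \nonumber
\end{align}
Combining the two directions gives $|\snaive - n^{-1}\snorm{\varepsilon}_2^2| \le 4\lambda\snorm{\beta^\ast}_1 = 16\sigma\snorm{\beta^\ast}_1(\log p/n)^{1/2}$, as claimed.

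I do not expect a genuine obstacle: the argument is entirely elementary, resting only on the lasso basic inequality and a Gaussian tail bound. The two points needing care are (i) lining up the tail-bound constants so the failure probability is exactly $p^{-1}$ rather than $2p^{-1}$ (this is where the sharper Mills-ratio bound, rather than the crude $e^{-t^2/2}$ estimate, is used), and (ii) the $\ell_1$ bookkeeping that produces the constant $16$ and not something larger --- in particular, bounding $\snaive - n^{-1}\snorm{\varepsilon}_2^2$ \emph{directly} through the basic inequality, instead of routing through the identity $\hat\sigma^2_\lambda = \snaive + 2\lambda\snorm{\hat\beta_\lambda}_1$ and Lemma~\ref{lem1}, is exactly what saves a factor of two.
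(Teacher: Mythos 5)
Your proof is correct and follows essentially the same route as the paper: the residual expansion of $\snaive$, the lasso basic inequality with the doubled tuning parameter $\lambda \ge 2n^{-1}\snorm{\X^T\varepsilon}_\infty$ (yielding $\snorm{\hat\beta_\lambda}_1 \le 3\snorm{\beta^\ast}_1$), H\"older's inequality, and a Gaussian tail bound giving failure probability $p^{-1}$, arriving at the same constant $4\lambda\snorm{\beta^\ast}_1$. The only cosmetic difference is that you treat the upper and lower deviations separately (the upper side even deterministically, at $2\lambda\snorm{\beta^\ast}_1$), whereas the paper bounds the absolute deviation in one shot via its Lemma \ref{lemma:basic}.
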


Relatedly, \citet{chatterjee2015prediction} also consider a setting with no assumptions on $X$ and derive 
an error bound $O\{\snorm{\beta^\ast}_1^{1/2}(n^{-1} \log p)^{1/4}\}$ for \eqref{est:naive} for a lasso estimator $\hat{\beta}_\lambda$ with $\lambda$ in \eqref{est:lasso} selected using a cross-validation procedure.

\citet{2016arXiv160800624L} derive a slow rate bound for the prediction error of the square root lasso.  
They show, in Lemma 2.1, that there exists a value of $\lambda$ for which $\lambda = 3 n^{-1/2}\snorm{\X^T \varepsilon}_\infty \snorm{\y - \X \tilde{\beta}_{\mathrm{SQRT}}}_2^{-1}$ and bound $\snorm{X\tilde\beta_\mathrm{SQRT} - X\beta^\ast}^2_2$ at this value.  The following result establishes the high-dimensional consistency of $\tilde\sigma_\mathrm{SQRT}^2$ under no assumptions on $X$.
\begin{proposition} \label{prop:slow_rate_sqrt}
  Under the conditions of Theorem \ref{thm:msebound}, the square\verb+-+root/scaled lasso estimator $\tilde{\sigma}^2_{\mathrm{SQRT}}$ in \eqref{est:sqrt} based on $\tilde{\beta}_\mathrm{SQRT}$ with 
  $\lambda = 3 n^{-1/2}\snorm{\X^T \varepsilon}_\infty \snorm{\y - \X \tilde{\beta}_{\mathrm{SQRT}}}_2^{-1}$
  has the following bound with probability greater than $1 - p^{-1}$:
  \begin{align}
    \left| \tilde{\sigma}^2_{\mathrm{SQRT}} - \frac{\snorm{\varepsilon}_2^2}{n}  \right|   \leq 12 \sigma \snorm{\beta^\ast}_1\left( \frac{\log p}{n} \right)^{1/2} .
    \label{eq:slow_rate_sqrt}
  \end{align}
\end{proposition}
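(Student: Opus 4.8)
The plan is to mirror the strategy used for the natural lasso in Lemma \ref{lem1} and for the naive lasso estimator in Proposition \ref{prop:slow_rate_naive}, but now exploiting the optimality of $\bsqrt$ in the square-root lasso objective \eqref{est:sqrt} together with the prediction-error bound of \citet{2016arXiv160800624L}. First I would write $\ssqrt - n^{-1}\snorm{\varepsilon}_2^2 = n^{-1}(\snorm{\y - \X\bsqrt}_2^2 - \snorm{\varepsilon}_2^2)$ and expand using $\y - \X\bsqrt = \varepsilon - \X(\bsqrt - \bt)$, giving
\begin{align}
  \ssqrt - \frac{\snorm{\varepsilon}_2^2}{n} = \frac{1}{n}\snorm{\X(\bsqrt - \bt)}_2^2 - \frac{2}{n}\varepsilon^T\X(\bsqrt - \bt).
  \nonumber
\end{align}
So it suffices to control the prediction error $n^{-1}\snorm{\X(\bsqrt - \bt)}_2^2$ and the cross term $n^{-1}|\varepsilon^T\X(\bsqrt-\bt)|$.

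For the cross term, I would apply Hölder's inequality, $|\varepsilon^T\X(\bsqrt-\bt)| \le \snorm{\X^T\varepsilon}_\infty \snorm{\bsqrt - \bt}_1$, and then bound $\snorm{\bsqrt - \bt}_1 \le \snorm{\bsqrt}_1 + \snorm{\bt}_1$. A standard $\ell_1$-bound argument for the square-root lasso shows that at the chosen $\lambda$ the solution satisfies $\snorm{\bsqrt}_1 \lesssim \snorm{\bt}_1$ (up to a small additive term), so $\snorm{\bsqrt - \bt}_1 = O(\snorm{\bt}_1)$; combined with the event $n^{-1/2}\snorm{\X^T\varepsilon}_\infty \le \sigma(2n^{-1}\log p)^{1/2} \cdot (\text{const})$, which holds with probability at least $1 - p^{-1}$ by a Gaussian maximal inequality under the scaling $\snorm{\X_j}_2^2 = n$, this yields $n^{-1}|\varepsilon^T\X(\bsqrt-\bt)| = O\{\sigma\snorm{\bt}_1(n^{-1}\log p)^{1/2}\}$. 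For the prediction-error term, I would invoke Lemma 2.1 of \citet{2016arXiv160800624L}, which at exactly the stated choice $\lambda = 3n^{-1/2}\snorm{\X^T\varepsilon}_\infty\snorm{\y - \X\bsqrt}_2^{-1}$ bounds $\snorm{\X\bsqrt - \X\bt}_2^2$; translating their bound through the same high-probability event on $\snorm{\X^T\varepsilon}_\infty$ gives $n^{-1}\snorm{\X(\bsqrt-\bt)}_2^2 = O\{\sigma\snorm{\bt}_1(n^{-1}\log p)^{1/2}\}$ as well. Summing the two contributions and tracking constants carefully to land on the factor $12$ completes the argument.

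The main obstacle I anticipate is the self-referential nature of the data-dependent $\lambda$: it depends on $\snorm{\y - \X\bsqrt}_2$, which is itself the quantity we are trying to bound. One must be careful that the $\ell_1$-cone/basic-inequality argument establishing $\snorm{\bsqrt - \bt}_1 = O(\snorm{\bt}_1)$ does not secretly require a compatibility or restricted-eigenvalue condition; the point — following \citet{2016arXiv160800624L} — is that in the slow-rate regime it does not, because one only needs the basic inequality $n^{-1/2}\snorm{\y - \X\bsqrt}_2 + \lambda\snorm{\bsqrt}_1 \le n^{-1/2}\snorm{\y - \X\bt}_2 + \lambda\snorm{\bt}_1$ together with $\snorm{\y - \X\bt}_2 = \snorm{\varepsilon}_2$ and the noise bound, not any identifiability of $\bt$. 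A secondary technical point is ensuring all the "$O$" statements hold simultaneously on a single event of probability at least $1 - p^{-1}$, which is immediate since every step is conditioned on the same event $\{n^{-1}\snorm{\X^T\varepsilon}_\infty \le \lambda_0\}$ for an appropriate deterministic $\lambda_0$.
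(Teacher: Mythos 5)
Your skeleton matches the paper's proof: the same decomposition of $\ssqrt - n^{-1}\snorm{\varepsilon}_2^2$ into a prediction-error term plus a cross term (this is Lemma \ref{lemma:basic}), H{\"o}lder on the cross term, the Gaussian maximal-inequality event of Lemma \ref{lem:lambda} with $L=\log p$, and the inflated, self-consistent choice of $\lambda$. The genuine gap is in how you handle the $\snorm{\bsqrt}_1$ contribution, and it costs you the stated constant $12$. The paper's key step is to work directly with the first-order optimality condition of the square-root lasso at the self-consistent $\lambda$; because $\lambda$ is taken three times the size needed for a prediction bound, this yields
\begin{align}
  \frac{1}{n}\norm{\X(\bt - \bsqrt)}_2^2 \;\leq\; \frac{4}{n}\snorm{\X^T\varepsilon}_\infty\,\snorm{\bt}_1 \;-\; \frac{2}{n}\snorm{\X^T\varepsilon}_\infty\,\snorm{\bsqrt}_1,
  \nonumber
\end{align}
and the negative term on the right \emph{exactly cancels} the $+\tfrac{2}{n}\snorm{\X^T\varepsilon}_\infty\snorm{\bsqrt}_1$ produced by H{\"o}lder in the cross term, leaving $\tfrac{6}{n}\snorm{\X^T\varepsilon}_\infty\snorm{\bt}_1 \le 12\,\sigma\snorm{\bt}_1(n^{-1}\log p)^{1/2}$ on the event of Lemma \ref{lem:lambda}. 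Your route instead bounds $\snorm{\bsqrt-\bt}_1 \le \snorm{\bsqrt}_1+\snorm{\bt}_1$ and appeals to an unproven ``standard'' claim $\snorm{\bsqrt}_1 \lesssim \snorm{\bt}_1$. Even granting the sharpest available version of that claim (the display above, having a nonnegative left side, gives $\snorm{\bsqrt}_1 \le 2\snorm{\bt}_1$), summing your two contributions gives a factor of at least $4 + 2(1+2) = 10$ on $n^{-1}\snorm{\X^T\varepsilon}_\infty\snorm{\bt}_1$, i.e.\ a final bound of order $20\,\sigma\snorm{\bt}_1(n^{-1}\log p)^{1/2}$, not $12$; ``tracking constants carefully'' cannot rescue this, because the cancellation is precisely the point of inflating $\lambda$ by the factor $3$.

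A secondary issue: the $\ell_1$ bound on $\snorm{\bsqrt}_1$ that you invoke is not automatic from the basic inequality at a solution-dependent $\lambda$ (the crude route via $\snorm{\varepsilon}_2 - \snorm{\y-\X\bsqrt}_2$ is circular or gives weaker constants); in the paper it falls out of the same KKT computation displayed above, so you should derive it there rather than cite a standard argument. Your remarks about the self-referential $\lambda$ and about not needing compatibility in the slow-rate regime are correct and consistent with the paper, but the proof as proposed does not establish the proposition with the constant $12$ as stated.
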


We see the rate of the natural lasso in \eqref{eq:riskbound} matches, up to a constant factor, the rates \eqref{eq:slow_rate_naive} and \eqref{eq:slow_rate_sqrt}. 
The values of $\lambda$ used in Propositions
\ref{prop:slow_rate_naive} and \ref{prop:slow_rate_sqrt} are larger
than would be necessary for standard prediction error bounds; we
learned of this technique from Irina Gaynanova \citep{irina18commu},
and it is key to the proofs of the two propositions.
Although the same rate is obtained in Theorem \ref{thm:msebound}, Proposition
\ref{prop:slow_rate_naive}, and Proposition \ref{prop:slow_rate_sqrt},
we have not established that this is the best possible rate obtainable in this
setting that makes no assumption on $\X$.  

\section{The organic lasso estimate of error variance} \label{sec:olasso}
\subsection{Method formulation}
In practice, the value of the regularization parameter $\lambda$ in
\eqref{est:sig_nlasso} may be chosen
via cross\verb+-+validation; however, Theorem \ref{thm:msebound} has a regrettable theoretical shortcoming: it requires using a value of $\lambda$
that itself depends on $\sigma$, the very quantity that we are trying
to estimate! This is a well\verb+-+known theoretical limitation of the lasso
and related methods that motivated the square\verb+-+root/scaled lasso. In this section, we propose
a second new method, which retains the natural parameterization, but remedies the natural lasso's theoretical shortcoming by using a modified penalty.  
We define the organic lasso as a solution to \eqref{est:naturalest} with $\Omega(\theta, \phi) =  \phi^{-1}\snorm{\theta}_1^2$, i.e.,
\begin{align}
  \left( \check{\theta}_\lambda, \check{\phi}_\lambda \right)
  = \argmin_{\phi > 0,~\theta} \left( -\frac{1}{2} \log \phi + \phi \frac{\norm{y}_2^2}{2n} - \frac{1}{n} \y^T \X \theta + \frac{\norm{\X\theta}_2^2}{2n\phi} + \lambda \frac{\norm{\theta}_1^2}{\phi} \right).
  \label{est:organiclasso}
\end{align}
We observe that the penalty $\phi^{-1}\snorm{\theta}_1^2$ is jointly
convex in $(\phi,\theta)$ since it can be expressed as $g(h(\theta),
\phi)$ where $h(\theta)=\snorm{\theta}_1$ is convex and
$g(x,\phi)=\phi^{-1}x^2$ is a jointly convex function that is strictly
increasing in $x$ for $x\ge0$ \citep{boyd2004convex, rockafellar2015convex}.

Given a solution to the above problem, we can reverse \eqref{est:naturalpara} to give the organic lasso estimators of $(\beta^\ast,\sigma^2)$, i.e.,
$\check{\beta}_\lambda = \check{\phi}_\lambda^{-1} \check{\theta}_\lambda, \check{\sigma}^2_\lambda = \check{\phi}_\lambda^{-1}$.
Furthermore, $\phi^{-1}\snorm{\theta}_1^2$ still induces sparsity in $\theta$, and thus the final estimate $\check{\beta}_\lambda$ is sparse.
In direct analogy to the natural lasso, the following proposition shows that we can find $\check{\sigma}_\lambda^2$ and $\check{\beta}_\lambda$ without actually solving \eqref{est:organiclasso}.
\begin{proposition} \label{main:prop:ol}
The organic lasso estimators $(\check\beta_\lambda,\check\sigma^2_\lambda)$ correspond to the solution and minimal value of an $\ell_1^2$\verb+-+penalized least\verb+-+squares problem:
\begin{align}
  \check{\beta}_\lambda = \argmin_{\beta \in \real^p} \left( \frac{1}{n} \norm{\y - \X \beta}_2^2 + 2 \lambda \norm{\beta}_1^2 \right);
  \label{est:beta_olasso}
  \\
  \check{\sigma}^2_\lambda = \min_{\beta \in \real^p} \left( \frac{1}{n} \norm{\y - \X \beta}_2^2 + 2 \lambda \norm{\beta}_1^2 \right).
  \label{est:sig_olasso}
\end{align}
\end{proposition}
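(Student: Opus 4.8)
The plan is to mirror the proof of Proposition \ref{main:prop:nl}: undo the reparameterization in \eqref{est:organiclasso} by writing $\beta=\theta/\phi$, and then minimize out $\phi$ in closed form. Since the map $(\theta,\phi)\mapsto(\theta/\phi,\phi)$ is a bijection of $\{\phi>0\}\times\real^p$ onto itself, solving \eqref{est:organiclasso} is equivalent to minimizing the same objective after this substitution. The whole point of the penalty $\phi^{-1}\snorm{\theta}_1^2$ is that, because $\phi>0$, we have $\snorm{\theta}_1=\phi\snorm{\beta}_1$ and hence $\phi^{-1}\snorm{\theta}_1^2=\phi\snorm{\beta}_1^2$; likewise $\snorm{\X\theta}_2^2/(2n\phi)=\phi\snorm{\X\beta}_2^2/(2n)$ and $n^{-1}\y^T\X\theta=\phi\,n^{-1}\y^T\X\beta$. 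Collecting the terms that are linear in $\phi$, the objective in \eqref{est:organiclasso} becomes
\[
  -\tfrac12\log\phi+\phi\Big(\tfrac{1}{2n}\snorm{\y-\X\beta}_2^2+\lambda\snorm{\beta}_1^2\Big)=-\tfrac12\log\phi+\phi f(\beta),
\]
where $f(\beta):=\tfrac{1}{2n}\snorm{\y-\X\beta}_2^2+\lambda\snorm{\beta}_1^2$.

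Next I would minimize over $\phi>0$ for fixed $\beta$. Assuming $\y\neq0$ (otherwise the problem is degenerate), $f(\beta)>0$ for every $\beta$, so $\phi\mapsto-\tfrac12\log\phi+\phi f(\beta)$ is strictly convex on $(0,\infty)$ with unique minimizer $\phi^\star(\beta)=\{2f(\beta)\}^{-1}$ and minimal value $\tfrac12\log f(\beta)+\tfrac12(1+\log2)$. Substituting this back reduces the joint minimization to $\min_\beta\{\tfrac12\log f(\beta)\}$ up to an additive constant that does not affect the argmin.

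Finally, since $t\mapsto\log t$ is strictly increasing on $(0,\infty)$, the set of minimizers of $\log f$ coincides with that of $f$, equivalently of $2f(\beta)=n^{-1}\snorm{\y-\X\beta}_2^2+2\lambda\snorm{\beta}_1^2$; this is exactly \eqref{est:beta_olasso}, and it identifies $\check\beta_\lambda=\check\theta_\lambda/\check\phi_\lambda$ as an $\ell_1^2$-penalized least-squares solution. Reading off the optimal $\phi$, $\check\phi_\lambda=\phi^\star(\check\beta_\lambda)=\{2f(\check\beta_\lambda)\}^{-1}$, whence $\check\sigma^2_\lambda=\check\phi_\lambda^{-1}=2f(\check\beta_\lambda)=n^{-1}\snorm{\y-\X\check\beta_\lambda}_2^2+2\lambda\snorm{\check\beta_\lambda}_1^2$, which is the minimal value in \eqref{est:sig_olasso}. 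There is no genuinely hard step here; the only points needing a little care are checking that the reparameterization is a bijection on $\{\phi>0\}$, that $f(\beta)>0$ so the inner minimization over $\phi$ is well posed (dispatching the trivial case $\y=0$ separately), and that passing through the monotone transformation $\log$ preserves the set of minimizers.
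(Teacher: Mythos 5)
Your proof is correct, but it follows a genuinely different route from the paper's. The paper proves Proposition \ref{main:prop:ol} by writing down the subgradient stationarity conditions of the joint convex program \eqref{est:organiclasso} (its Lemma \ref{lem:optimality_organiclasso}), reversing the parameterization, observing that the $\theta$-equation is exactly the optimality condition of the $\ell_1^2$-penalized least-squares problem for $\check\beta_\lambda$, and then manipulating the $\phi$-equation algebraically to show $\check\sigma^2_\lambda=n^{-1}\snorm{\y-\X\check\beta_\lambda}_2^2+2\lambda\snorm{\check\beta_\lambda}_1^2$. You instead change variables $\theta=\phi\beta$, collect the terms linear in $\phi$ to get $-\tfrac12\log\phi+\phi f(\beta)$ with $f(\beta)=\tfrac{1}{2n}\snorm{\y-\X\beta}_2^2+\lambda\snorm{\beta}_1^2$, and profile out $\phi$ in closed form, $\phi^\star(\beta)=\{2f(\beta)\}^{-1}$, so that minimizing over $\beta$ reduces (via monotonicity of $\log$) to minimizing $2f(\beta)$, with $\check\sigma^2_\lambda=1/\check\phi_\lambda=2f(\check\beta_\lambda)$ the minimal value. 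Your partial-minimization argument is more elementary and self-contained: it delivers the argmin and the minimal value in one stroke, makes explicit why the penalty $\phi^{-1}\snorm{\theta}_1^2$ is exactly what turns into $\phi\snorm{\beta}_1^2$ (scale equivariance), and correctly isolates the only degenerate case $\y=0$ (where $f$ can vanish and neither formulation has a minimizer). The paper's KKT-based route buys uniformity with its proof of Proposition \ref{main:prop:nl} and produces the subgradient identities in a form it reuses elsewhere, but both arguments establish the same equivalence.
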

Thus, to compute the organic lasso estimator, one simply solves a
penalized least squares problem, where the penalty is the square of
the $\ell_1$ norm. This can be thought of as the exclusive lasso with a single group \citep{zhou2010exclusive, campbell2017within}. We show in the next section that solving this problem is no harder than solving a standard lasso problem.

One readily sees the connection of the organic lasso to the square\verb+-+root lasso \eqref{est:sqrt}: to get \eqref{est:sig_olasso}, one takes squares of both the loss and the $\ell_1$ penalty of \eqref{est:sqrt}. However, their origins are actually different in nature:
the organic lasso is a maximum of the Gaussian log\verb+-+likelihood with a scale\verb+-+equivariant sparsity inducing penalty under parameterization \eqref{est:naturalpara}, while \eqref{est:sqrt} minimizes the $\ell_1$\verb+-+penalized Huber concomitant loss function \citep{antoniadis2010comments, Sun25092012}.

\subsection{Algorithm}
Coordinate descent is easy to implement and has steadily maintained its place as a start\verb+-+of\verb+-+the\verb+-+art approach for solving lasso\verb+-+related problems \citep{friedman2007pathwise}.
For coordinate descent to work, one typically verifies separability in the non\verb+-+smooth part of the objective function \citep{Tseng2001}.
However, the $\ell_1^2$ penalty in \eqref{est:beta_olasso} is not separable in the coordinates of $\beta$.
\citet{lorbert2010exploiting} propose a coordinate descent algorithm to solve the Pairwise Elastic Net (PEN) problem, a generalization of
\eqref{est:beta_olasso}, and a proof of the convergence of the algorithm is given in \citet{lorbert:2012aa}.
In Algorithm \ref{alg:cd}, we give a coordinate descent algorithm specific to solving \eqref{est:beta_olasso}.
The \texttt{R} package \texttt{natural} \citep{citenatural} provides a \texttt{C} implementation of Algorithm \ref{alg:cd}.

\begin{algorithm}
  \caption{A coordinate descent algorithm to solve \eqref{est:beta_olasso}}
  \label{alg:cd}
\begin{algorithmic}
   \Require Initial estimate $\beta^{(0)} \in \real^p$, $\X \in \real^{n \times p}$, $\y \in \real^n$, and $\lambda > 0$.
   \State Set $\beta \gets \beta^{(0)}$ and $\r \gets \y - \X \beta$
   \For{$j=1,\ldots,p;1,\ldots,p;\ldots$ (until convergence)}
   \State $\beta_j^{\mathrm{new}} \gets (2 \lambda + n^{-1} \norm{\X_j}_2^2)^{-1} \soft ( n^{-1} \X_j^T \r + n^{-1} \norm{\X_j}_2^2 \beta_j, 2 \lambda \snorm{\beta_{-j}}_1 )$
   \State $\r \gets \r + \X_j \beta_j - \X_j \beta_j^{\mathrm{new}}$
   \State $\beta_j \gets \beta_j^{\mathrm{new}}$
   \EndFor
  \State \Return $\beta$.
\end{algorithmic}
\end{algorithm}

Each coordinate update requires $O(n)$ operations. In Algorithm \ref{alg:cd}, $\soft(a, b)=\text{sgn}(a)(|a|-b)_+$ is the soft\verb+-+threshold operator.
Empirically Algorithm \ref{alg:cd} is as fast as solving a lasso problem.
Theorem C.3.9 in \citet{lorbert:2012aa} shows that,
for any initial estimate $\beta^{(0)} \in \real^p$, every limit point of Algorithm \ref{alg:cd} is an optimal point of the objective function of \eqref{est:beta_olasso}.
This implies that the $\ell_1^2$ penalty, although not separable, is well enough behaved that any point that is minimum in every coordinate of the objective function in \eqref{est:beta_olasso} is indeed a global minimum.

\subsection{Theoretical results}
A first indication that the organic lasso may succeed where the natural lasso falls short is in terms of scale equivariance.
As the design $\X$ is usually standardized to be unitless, scale
equivariance in this context refers to the effect of scaling $\y$.
\begin{proposition}
 \label{thm:scaleequivariant}
  The organic lasso is scale equivariant, i.e., for any $t > 0$,
  \begin{align}
    \check{\beta}_\lambda \left( t\y\right) = t \check{\beta}_\lambda\left( \y \right),
    \qquad
    \check{\sigma}_\lambda \left( t\y \right) = t \check{\sigma}_\lambda \left( \y \right).
    \nonumber
  \end{align}
\end{proposition}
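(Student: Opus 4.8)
The plan is to argue directly from the reformulation in Proposition~\ref{main:prop:ol}, which identifies $\check\beta_\lambda$ and $\check\sigma^2_\lambda$ with the minimizer and minimal value of the objective $F_\y(\beta) = n^{-1}\snorm{\y - \X\beta}_2^2 + 2\lambda\snorm{\beta}_1^2$. The structural fact to exploit is that $F_\y$ is jointly homogeneous of degree two in $(\y,\beta)$: both the quadratic loss and the squared-$\ell_1$ penalty pick up a factor $t^2$ when $\y$ and $\beta$ are simultaneously scaled by $t$. This degree-two homogeneity is exactly the feature that the plain $\ell_1$ penalty of the natural lasso lacks, which is why the natural lasso fails to be scale equivariant.

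First I would fix $t > 0$ and perform the change of variables $\beta = t\gamma$ in the problem $\min_\beta F_{t\y}(\beta)$. A one-line computation gives $F_{t\y}(t\gamma) = t^{-2}\cdot t^2\snorm{\y - \X\gamma}_2^2 \cdot$ — more precisely, $n^{-1}\snorm{t\y - \X(t\gamma)}_2^2 = t^2 n^{-1}\snorm{\y - \X\gamma}_2^2$ and $\snorm{t\gamma}_1^2 = t^2\snorm{\gamma}_1^2$, so $F_{t\y}(t\gamma) = t^2 F_\y(\gamma)$.

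Since $t^2 > 0$ and $\beta \mapsto t^{-1}\beta$ is a bijection of $\real^p$, minimizing $F_{t\y}$ over $\beta$ is equivalent to minimizing $F_\y$ over $\gamma$; a minimizing $\gamma$ is $\check\beta_\lambda(\y)$, and the corresponding $\beta = t\gamma$ is $\check\beta_\lambda(t\y)$, giving $\check\beta_\lambda(t\y) = t\,\check\beta_\lambda(\y)$. Passing to minimal values, $\check\sigma^2_\lambda(t\y) = \min_\beta F_{t\y}(\beta) = t^2 \min_\gamma F_\y(\gamma) = t^2 \check\sigma^2_\lambda(\y)$, and taking positive square roots yields $\check\sigma_\lambda(t\y) = t\,\check\sigma_\lambda(\y)$.

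There is no real obstacle here; the claim is an immediate consequence of the joint degree-two homogeneity of the loss and the $\ell_1^2$ penalty. The only point deserving a word of care is that if one wishes to read $\check\beta_\lambda$ as a single-valued function, one should note that the argument above establishes the identity at the level of the (possibly non-singleton) argmin \emph{sets}, $\check\beta_\lambda(t\y) = t\,\check\beta_\lambda(\y)$, which is all that is required; the minimal value $\check\sigma^2_\lambda$, and hence $\check\sigma_\lambda$, is in any case uniquely determined.
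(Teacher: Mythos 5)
Your proposal is correct and follows essentially the same route as the paper: both arguments rest on the degree-two homogeneity of the $\ell_1^2$-penalized least-squares objective from Proposition~\ref{main:prop:ol}, i.e.\ that scaling $(\y,\beta)$ by $t$ multiplies the objective by $t^2$, which scales the minimizer by $t$ and the minimal value by $t^2$. Your added remark about the argmin-set formulation is a minor refinement, not a different method.
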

Scale equivariance is a property associated with the ability to prove results in which the tuning parameter $\lambda$ does not depend on $\sigma$.
For example, the square\verb+-+root/scaled lasso \eqref{est:sqrt} is scale equivariant while the lasso, and thus the natural lasso, is not.
In particular, $\hat{\beta}_\lambda(t\y) \neq t \hat{\beta}_\lambda (\y)$, and $\hat{\sigma}_\lambda (t\y) \neq t \hat{\sigma}_\lambda(\y)$ for some $t > 0$.

In Lemma \ref{lem1}, we saw how expressing an estimator as the optimal value
of a convex optimization problem allows us to take full advantage of convex duality in order to derive bounds on the estimator.
We therefore start our analysis of \eqref{est:sig_olasso} by characterizing its dual problem.
\begin{lemma} \label{lem:dual}
  The dual problem of \eqref{est:sig_olasso} is 
  \begin{align}
    \max_{u \in \real^n} \left\{ \frac{1}{n} \left( \norm{\y}^2_2 - \norm{\y - u}^2_2 \right) - \frac{1}{2\lambda} \norm{\frac{\X^T u}{n}}_\infty^2 \right\}.
    \nonumber
  \end{align}
\end{lemma}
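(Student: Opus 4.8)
The plan is to obtain the dual by Lagrangian duality after splitting off the residual. Introduce $r = \y - \X\beta$ and rewrite \eqref{est:sig_olasso} as
\begin{align}
\min_{\beta \in \real^p,\ r \in \real^n}\ \frac{1}{n}\norm{r}_2^2 + 2\lambda\norm{\beta}_1^2 \subto r = \y - \X\beta .
\nonumber
\end{align}
Attaching a multiplier $\nu \in \real^n$ to the affine constraint gives the Lagrangian $\mathcal{L}(\beta, r, \nu) = n^{-1}\norm{r}_2^2 + 2\lambda\norm{\beta}_1^2 + \nu^T(\y - \X\beta - r)$. The objective is convex and the constraint affine, so there is no duality gap, and the minimal value of \eqref{est:sig_olasso} equals $\max_{\nu}\min_{\beta, r}\mathcal{L}(\beta, r, \nu)$; it therefore suffices to evaluate the inner minimization in closed form.

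That minimization separates into an $r$-part and a $\beta$-part. The $r$-part, $\min_{r} n^{-1}\norm{r}_2^2 - \nu^T r$, is an unconstrained strictly convex quadratic, minimized at $r = (n/2)\nu$ with value $-(n/4)\norm{\nu}_2^2$. The $\beta$-part equals $-\,(2\lambda\norm{\cdot}_1^2)^\ast(\X^T\nu)$, where $(\cdot)^\ast$ denotes the convex conjugate, so the crux is the conjugate of a squared $\ell_1$ norm. Because $\norm{\cdot}_1^2$ is positively homogeneous of degree two, writing $\beta = t v$ with $t \ge 0$ and $\norm{v}_1 = 1$ gives $\sup_{\beta}\{w^T\beta - \norm{\beta}_1^2\} = \sup_{t \ge 0}\bigl(t\norm{w}_\infty - t^2\bigr) = \tfrac14\norm{w}_\infty^2$, using that the dual norm of $\ell_1$ is $\ell_\infty$; rescaling yields $(2\lambda\norm{\cdot}_1^2)^\ast(w) = (8\lambda)^{-1}\norm{w}_\infty^2$. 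Combining the two pieces, $\min_{\beta, r}\mathcal{L}(\beta, r, \nu) = \nu^T\y - (n/4)\norm{\nu}_2^2 - (8\lambda)^{-1}\norm{\X^T\nu}_\infty^2$.

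It then remains to match the stated form by the substitution $u = (n/2)\nu$, i.e.\ taking $u$ to be the optimal residual $r$: this turns $\nu^T\y - (n/4)\norm{\nu}_2^2$ into $(2/n)u^T\y - n^{-1}\norm{u}_2^2 = n^{-1}(\norm{\y}_2^2 - \norm{\y - u}_2^2)$, and $(8\lambda)^{-1}\norm{\X^T\nu}_\infty^2$ into $(8\lambda)^{-1}\norm{2\X^Tu/n}_\infty^2 = (2\lambda)^{-1}\norm{\X^Tu/n}_\infty^2$, which is exactly the claimed objective. The only genuinely delicate point is the conjugate identity for $\norm{\cdot}_1^2$ together with tracking the constants ($2\lambda$, the two factors of $n^{-1}$, and the factor from the quadratic-in-$r$ step) consistently through the final rescaling; the rest is routine. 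Equivalently, one can read the same expression off Fenchel--Rockafellar duality for $\min_\beta f(\X\beta) + g(\beta)$ with $f(v) = n^{-1}\norm{\y - v}_2^2$ and $g(\beta) = 2\lambda\norm{\beta}_1^2$, whose conjugates are $f^\ast(u) = \y^T u + (n/4)\norm{u}_2^2$ and $g^\ast(w) = (8\lambda)^{-1}\norm{w}_\infty^2$.
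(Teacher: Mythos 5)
Your proof is correct and follows essentially the same route as the paper: Lagrangian duality after splitting the quadratic term from $\X\beta$, computing the dual function via a separate quadratic minimization and the conjugate identity $(\|\cdot\|_1^2)^\ast = \tfrac{1}{4}\|\cdot\|_\infty^2$, then identifying the dual variable with the residual. The only differences are cosmetic — the paper constrains $\z = \X\beta$ with the constraint scaled by $2/n$ so the dual variable appears directly as $u$ and cites Boyd and Vandenberghe for the conjugate of the squared $\ell_1$ norm, whereas you constrain $r = \y - \X\beta$, derive the conjugate by homogeneity, and rescale $u = (n/2)\nu$ at the end — and all your constants check out.
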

Similar arguments as in Lemma \ref{lem1} give a bound expressing $\check{\sigma}_\lambda^2$'s closeness to the oracle estimator of $\sigma^2$.
\begin{lemma} \label{lem:close_to_oracle_organic}
  If $\lambda \geq n^{-1}\norm{\X^T (\sigma^{-1}\varepsilon)}_\infty$, then
\begin{align}
  - 2 \lambda \sigma^2 \left( \frac{\norm{\beta^\ast}_1}{\sigma} + \frac{1}{4} \right) \leq \check{\sigma}_\lambda^2 - \frac{1}{n} \norm{\varepsilon}^2_2 \leq 
  2 \lambda \norm{\beta^\ast}_1^2.
  \nonumber
\end{align}
\end{lemma}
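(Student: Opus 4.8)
The plan is to imitate the proof of Lemma~\ref{lem1}: since $\check\sigma^2_\lambda$ is by Proposition~\ref{main:prop:ol} the optimal value of the convex problem \eqref{est:sig_olasso}, the upper bound will come from primal feasibility (evaluating the objective at a convenient $\beta$) and the lower bound from dual feasibility (evaluating the dual objective of Lemma~\ref{lem:dual} at a convenient $u$).

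For the upper bound, I would simply plug $\beta=\beta^\ast$ into \eqref{est:sig_olasso}. Because $\y-\X\beta^\ast=\varepsilon$, this gives $\check\sigma^2_\lambda\le n^{-1}\snorm{\varepsilon}_2^2+2\lambda\snorm{\beta^\ast}_1^2$, which is exactly the claimed right-hand inequality. The penalty term contributes $\snorm{\beta^\ast}_1^2$ here, which is the source of the asymmetry between the two sides of the bound.

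For the lower bound, I would exhibit the dual-feasible point $u=\varepsilon$ in the dual of Lemma~\ref{lem:dual}. Using $\y-\varepsilon=\X\beta^\ast$, the first dual term equals $n^{-1}(\snorm{\y}_2^2-\snorm{\X\beta^\ast}_2^2)=n^{-1}(2\varepsilon^T\X\beta^\ast+\snorm{\varepsilon}_2^2)$. For the second term, the hypothesis $\lambda\ge n^{-1}\snorm{\X^T(\sigma^{-1}\varepsilon)}_\infty$ rearranges to $n^{-1}\snorm{\X^T\varepsilon}_\infty\le\lambda\sigma$, so $(2\lambda)^{-1}\snorm{n^{-1}\X^T\varepsilon}_\infty^2\le\lambda\sigma^2/2$. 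Hence $\check\sigma^2_\lambda\ge n^{-1}\snorm{\varepsilon}_2^2+2n^{-1}\varepsilon^T\X\beta^\ast-\lambda\sigma^2/2$. Finally, Hölder's inequality together with the same bound on $n^{-1}\snorm{\X^T\varepsilon}_\infty$ gives $|n^{-1}\varepsilon^T\X\beta^\ast|\le\snorm{\beta^\ast}_1\,n^{-1}\snorm{\X^T\varepsilon}_\infty\le\lambda\sigma\snorm{\beta^\ast}_1$, so $\check\sigma^2_\lambda-n^{-1}\snorm{\varepsilon}_2^2\ge-2\lambda\sigma\snorm{\beta^\ast}_1-\lambda\sigma^2/2=-2\lambda\sigma^2(\snorm{\beta^\ast}_1/\sigma+1/4)$, which is the claimed left-hand inequality.

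There is no genuine obstacle beyond bookkeeping; the only mild subtlety is guessing the dual-feasible point, and $u=\varepsilon$ is the natural choice because it makes the first dual term reproduce $n^{-1}\snorm{\varepsilon}_2^2$ up to the cross term $2n^{-1}\varepsilon^T\X\beta^\ast$, which is then controlled exactly by the $\lambda$-condition. The one place to be careful is that the estimate $n^{-1}\snorm{\X^T\varepsilon}_\infty\le\lambda\sigma$ is used twice, once squared (inside the $\ell_\infty^2$ term, producing the $\lambda\sigma^2/2$) and once linearly (via Hölder on the cross term, producing the $2\lambda\sigma\snorm{\beta^\ast}_1$), so one must keep track of the powers of $\lambda$ and $\sigma$ to land on the stated constants.
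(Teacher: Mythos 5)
Your proposal is correct and follows essentially the same route as the paper's own proof: the upper bound by evaluating the primal objective at $\beta^\ast$, and the lower bound by evaluating the dual of Lemma \ref{lem:dual} at $u=\varepsilon$, then applying H\"older's inequality and the condition $n^{-1}\snorm{\X^T\varepsilon}_\infty\le\lambda\sigma$ once linearly and once squared. The constants and the bookkeeping match the paper exactly.
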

Comparing with Lemma
\ref{lem1}, we see that the condition on $\lambda$ depends only on
a quantity $\sigma^{-1}\varepsilon\sim N(0,I_n)$ that is independent of $\sigma^2$.
Indeed, this leads to a mean squared error bound with the desired property of $\lambda$ not depending on $\sigma$.
\begin{theorem}\label{thm:msebound_olasso}
  Suppose that each column $\X_j$ of the matrix $\X \in \real^{n \times p}$ has been scaled so that $\snorm{\X_j}^2_2 = n$ for all $j = 1, \ldots, p$, and
  $\varepsilon \sim N \left( \0, \sigma^2 I_n \right)$.
  Then, for any constant $M > 1$,
  the organic lasso estimator \eqref{est:sig_olasso} with
  $\lambda = (2M n^{-1}\log p)^{1/2}$ satisfies the following relative mean squared error bound:
  \begin{align}
    \E \left\{ \left( \frac{\check{\sigma}^2_\lambda}{\sigma^2} - 1 \right)^2 \right\} 
    \leq \left\{\left( 8M + 8\frac{p^{1 - 8 M}}{\log p} \right)^{1/2}\max\left( \frac{\norm{\beta^\ast}_1^2}{\sigma^2}, \frac{\norm{\beta^\ast}_1}{\sigma} + \frac{1}{4} \right)\left( \frac{\log p}{n} \right)^{1/2} + \left( \frac{2}{n} \right)^{1/2} \right\}^2.
    \label{eq:riskbound_ratio_olasso}
  \end{align}
\end{theorem}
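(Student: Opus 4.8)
The plan is to follow the same two-part split used for Theorem \ref{thm:msebound}, but with Lemma \ref{lem:close_to_oracle_organic} playing the role of Lemma \ref{lem1}. Write $X = \check\sigma^2_\lambda/\sigma^2 - \|\varepsilon\|_2^2/(n\sigma^2)$ and $Y = \|\varepsilon\|_2^2/(n\sigma^2) - 1$, so that $\check\sigma^2_\lambda/\sigma^2 - 1 = X + Y$. By Minkowski's inequality, $\{\E((X+Y)^2)\}^{1/2} \le \{\E(X^2)\}^{1/2} + \{\E(Y^2)\}^{1/2}$, so it suffices to bound the two $L^2$ norms separately and then square their sum. Since $\|\varepsilon\|_2^2/\sigma^2 \sim \chi^2_n$, we get $\E(Y^2) = \Var(\chi^2_n)/n^2 = 2/n$ exactly, which supplies the $(2/n)^{1/2}$ term; all the real work is to show $\E(X^2) \le (8M + 8p^{1-8M}/\log p)\max(\|\beta^\ast\|_1^2/\sigma^2, \|\beta^\ast\|_1/\sigma + 1/4)^2 (\log p/n)$.

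For $\E(X^2)$, introduce the event $A = \{n^{-1}\|\X^T(\sigma^{-1}\varepsilon)\|_\infty \le \lambda\}$. On $A$, Lemma \ref{lem:close_to_oracle_organic} gives the deterministic bound $|X| \le 2\lambda\max(\|\beta^\ast\|_1^2/\sigma^2,\ \|\beta^\ast\|_1/\sigma + 1/4)$; since $\lambda = (2Mn^{-1}\log p)^{1/2}$ we have $(2\lambda)^2 = 8M(\log p)/n$, hence $\E(X^2\mathbf{1}_A) \le 8M\max(\,\cdot\,)^2(\log p/n)$, the leading term. On $A^c$ I would instead invoke two crude, essentially assumption-free pointwise bounds: the upper bound $\check\sigma^2_\lambda - n^{-1}\|\varepsilon\|_2^2 \le 2\lambda\|\beta^\ast\|_1^2$, obtained by evaluating the objective of \eqref{est:sig_olasso} at $\beta = \beta^\ast$ (Proposition \ref{main:prop:ol}), and a matching lower bound coming either from $\check\sigma^2_\lambda \ge 0$ or, more sharply, from weak duality applied to the dual of Lemma \ref{lem:dual} at a convenient feasible point. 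Together these control $X^2$ on $A^c$ by a polynomial in the random quantities $n^{-1}\|\varepsilon\|_2^2$ and $n^{-1}\|\X^T(\sigma^{-1}\varepsilon)\|_\infty$.

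The Gaussian input is routine: each entry $n^{-1}\X_j^T(\sigma^{-1}\varepsilon)$ is $N(0,1/n)$ because $\|\X_j\|_2^2 = n$, so a union bound over $j = 1,\dots,p$ with the sub-Gaussian tail inequality controls $\Prob(A^c)$ and, more generally, the tail of $n^{-1}\|\X^T(\sigma^{-1}\varepsilon)\|_\infty$ at any constant multiple of $\lambda$. Pairing this (via Cauchy--Schwarz, or a layer-cake integration over the excess of $n^{-1}\|\X^T(\sigma^{-1}\varepsilon)\|_\infty$ above $\lambda$, or directly through moments of $\max_j|N(0,1)|$) with the moments of $\chi^2_n/n$ yields a residual of order $p^{1-8M}/\log p$ times $\max(\,\cdot\,)^2(\log p/n)$. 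Adding the $A$ and $A^c$ contributions gives the stated bound on $\E(X^2)$, and combining with $\{\E(Y^2)\}^{1/2} = (2/n)^{1/2}$ completes the proof.

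I expect the genuinely delicate point to be the $A^c$ term: one must pair the crude pointwise bounds for $X^2$ with exactly the right tail and moment estimates so that the constants collapse cleanly into $8p^{1-8M}/\log p$ rather than a looser expression with a smaller exponent; everything else is convex-duality bookkeeping already packaged in Lemmas \ref{lem:dual} and \ref{lem:close_to_oracle_organic}, plus standard Gaussian and $\chi^2$ concentration.
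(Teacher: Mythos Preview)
Your proposal is correct and follows essentially the same route as the paper: the paper also writes $\check\sigma^2_\lambda-\sigma^2=(\check\sigma^2_\lambda-n^{-1}\snorm{\varepsilon}_2^2)+(n^{-1}\snorm{\varepsilon}_2^2-\sigma^2)$, handles the second piece via $\Var(n^{-1}\snorm{\varepsilon}_2^2)=2\sigma^4/n$, converts the high-probability bound of Lemma~\ref{lem:close_to_oracle_organic} on the first piece into an expectation bound by a layer-cake integral over the tail of $n^{-1}\snorm{\X^T(\sigma^{-1}\varepsilon)}_\infty$ (your $A/A^c$ split is the same device), and then combines the two via Cauchy--Schwarz on the cross term, which is exactly your Minkowski step. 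Your remark that the delicate part is the $A^c$ contribution is apt; the paper's write-up simply asserts the needed tail inequality $\Prob(X_n/r_n>t)\le e^{-(t-1)\log p}$ and integrates, so your slightly more explicit plan of pairing the always-valid primal/dual pointwise bounds with the Gaussian tail of $n^{-1}\snorm{\X^T(\sigma^{-1}\varepsilon)}_\infty$ is, if anything, a cleaner justification of that same step.
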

Compared with Theorem \ref{thm:msebound}, the organic lasso estimator
of $\sigma^2$ retains the same rate in terms of $n$ and $p$ but has a slower rate in terms of 
$\sigma^{-1} \norm{\beta^\ast}_1$.
Importantly, though, the value of $\lambda$ attaining \eqref{eq:riskbound_ratio_olasso} does not depend on $\sigma$. This tuning\verb+-+insensitive property is also enjoyed by the square-root/scaled lasso estimate of $\sigma^2$, as shown in Proposition \ref{prop:slow_rate_sqrt}. As in Remark \ref{rem:polynomialmoment}, similar high\verb+-+probability bounds can be obtained for $\varepsilon$ with bounded polynomial moments.

Although not central to our main purpose, the organic lasso estimator
\eqref{est:beta_olasso} of $\beta^\ast$ is interesting in its own right. The following theorem
gives a slow rate bound in prediction error.
\begin{theorem} \label{thm:olasso_predict}
  For any $L > 0$, the solution to \eqref{est:beta_olasso} with
  $\lambda = \{2n^{-1}(\log p + L)\}^{1/2}$
  has the following bound on the prediction error with probability greater than $1 - e^{-L}$:
  \begin{align}
    \frac{1}{n} \norm{\X \check{\beta}_\lambda - \X \beta^\ast}_2^2 \leq
    \left( \sigma^2 + 4 \norm{\beta^\ast}_1^2 \right)\left( \frac{2 \log p + 2L}{n} \right)^{1/2}.
    \nonumber
  \end{align}
\end{theorem}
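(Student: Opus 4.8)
The plan is to run the familiar ``basic inequality'' argument for penalized least squares, adapted to the nonseparable $\ell_1^2$ penalty. First I would use that $\check{\beta}_\lambda$ minimizes \eqref{est:beta_olasso} while $\beta^\ast$ is feasible, so
\[
  \frac{1}{n}\norm{\y - \X\check{\beta}_\lambda}_2^2 + 2\lambda\norm{\check{\beta}_\lambda}_1^2 \;\le\; \frac{1}{n}\norm{\y - \X\beta^\ast}_2^2 + 2\lambda\norm{\beta^\ast}_1^2 .
\]
Substituting $\y = \X\beta^\ast + \varepsilon$, expanding the two squared norms, and cancelling the common term $n^{-1}\norm{\varepsilon}_2^2$ yields
\[
  \frac{1}{n}\norm{\X\check{\beta}_\lambda - \X\beta^\ast}_2^2 \;\le\; \frac{2}{n}\varepsilon^T\X(\check{\beta}_\lambda - \beta^\ast) + 2\lambda\norm{\beta^\ast}_1^2 - 2\lambda\norm{\check{\beta}_\lambda}_1^2 .
\]

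Next I would bound the cross term by H\"older's inequality, $\tfrac{2}{n}\varepsilon^T\X(\check{\beta}_\lambda - \beta^\ast) \le \tfrac{2}{n}\norm{\X^T\varepsilon}_\infty\bigl(\norm{\check{\beta}_\lambda}_1 + \norm{\beta^\ast}_1\bigr)$, and restrict to the event $\mathcal A = \{\,n^{-1}\norm{\X^T(\sigma^{-1}\varepsilon)}_\infty \le \lambda\,\}$, the same condition that appears in Lemma \ref{lem:close_to_oracle_organic}. Since $\norm{\X_j}_2^2 = n$, we have $n^{-1}\X_j^T\varepsilon \sim N(0,\sigma^2/n)$, so a Gaussian tail bound together with a union bound over $j = 1,\dots,p$ gives $\Prob(\mathcal A)\ge 1 - e^{-L}$ precisely for the stated $\lambda = \{2n^{-1}(\log p + L)\}^{1/2}$; this is the only role played by the particular form of $\lambda$, beyond matching the final scaling. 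On $\mathcal A$ the previous display becomes
\[
  \frac{1}{n}\norm{\X\check{\beta}_\lambda - \X\beta^\ast}_2^2 \;\le\; \bigl(2\sigma\lambda\norm{\check{\beta}_\lambda}_1 - 2\lambda\norm{\check{\beta}_\lambda}_1^2\bigr) + 2\sigma\lambda\norm{\beta^\ast}_1 + 2\lambda\norm{\beta^\ast}_1^2 .
\]

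The crucial step, and the one place where the $\ell_1^2$ penalty buys something the ordinary lasso analysis does not, is to observe that $a \mapsto 2\sigma\lambda a - 2\lambda a^2$ is concave in $a = \norm{\check{\beta}_\lambda}_1 \ge 0$ and is therefore at most its unconstrained maximum $\lambda\sigma^2/2$, attained at $a = \sigma/2$. It is exactly this self-bounding phenomenon, the quadratic penalty automatically dominating the linear noise term regardless of the size of $\norm{\check{\beta}_\lambda}_1$, that lets $\lambda$ be chosen free of $\sigma$. Substituting $\lambda\sigma^2/2$ and then applying AM--GM in the form $2\sigma\norm{\beta^\ast}_1 \le \tfrac12\sigma^2 + 2\norm{\beta^\ast}_1^2$ collapses the right-hand side to $\lambda\sigma^2 + 4\lambda\norm{\beta^\ast}_1^2 = (\sigma^2 + 4\norm{\beta^\ast}_1^2)\lambda$; since $\lambda = \{(2\log p + 2L)/n\}^{1/2}$ this is exactly the asserted bound, valid on $\mathcal A$.

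I expect the only real care to be bookkeeping of constants: to land on probability $1 - e^{-L}$ rather than $1 - 2e^{-L}$ one needs the slightly sharper Gaussian tail estimate $\Prob\{N(0,1) > t\}\le \tfrac12 e^{-t^2/2}$, which holds once $t$ is bounded away from zero (here $t^2/2 = \log p + L$), and one must check that the concavity maximization and the AM--GM step compose to give precisely the coefficients $1$ and $4$ in $\sigma^2 + 4\norm{\beta^\ast}_1^2$ rather than something looser. Everything else, the expansion, the H\"older step, and the union bound, is routine.
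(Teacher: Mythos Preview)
Your proof is correct and follows essentially the same route as the paper: the basic inequality, H\"older, restriction to the Gaussian event $n^{-1}\snorm{\X^T\varepsilon}_\infty \le \sigma\lambda$ (Lemma \ref{lem:lambda}), and AM--GM-type manipulations to collapse the remaining terms. The only cosmetic difference is that the paper applies AM--GM directly to $2\sigma\lambda\snorm{\check\beta_\lambda - \beta^\ast}_1$ and then uses $(a+b)^2\le 2a^2+2b^2$ to cancel the $\snorm{\check\beta_\lambda}_1^2$ terms, whereas you split by the triangle inequality first and maximize the resulting concave quadratic in $\snorm{\check\beta_\lambda}_1$; both arrive at the identical constant $\sigma^2 + 4\snorm{\beta^\ast}_1^2$.
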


In Appendix \ref{proof:equivalence}, we provide mappings between the path
of the natural lasso, $\{\hat\beta_\lambda:\lambda>0\}$, and the path
of the organic lasso $\{\check\beta_\lambda:\lambda>0\}$. We also include a fast\verb+-+rate prediction error bound of \eqref{est:beta_olasso} under a compatibility condition in Appendix \ref{proof:fastrate}.

\section{Simulation studies} \label{sec:simulation}
\subsection{Simulation settings}
\citet{reid2013study} carry out an extensive simulation study to
compare many error variance estimators. 
We have matched their simulation settings fairly closely, so that the performance comparison with various other methods mentioned in \citet{reid2013study}
can be inferred.
Specifically, all simulations are run with $p = 500$ and $n = 100$. 
Each row of the design $\X$ is generated from a multivariate $N( \0, \Sigma)$, with $\Sigma_{ij} = \rho \in (0, 1)$ for $i\neq j$ and $\Sigma_{ii} = 1$.
To generate $\beta^\ast$, we randomly select the indices of $\lceil n^\alpha \rceil$ nonzero elements out of $p$ variables where $\alpha \in (0, 1)$,
and each of the nonzero elements has a value that is randomly drawn from a Laplace distribution with rate 1.
The error variance is generated using $\sigma^2 = \tau^{-1} {\beta^\ast}^T\Sigma \beta^\ast$ for $\tau > 0$.
Finally, $\y$ is generated following \eqref{est:model}.

Each model is indexed by a triplet $(\rho, \alpha, \tau)$, where $\rho$ captures the correlation among features, $\alpha$ determines the sparsity of $\beta^\ast$,
and $\tau$ characterizes the signal\verb+-+to\verb+-+noise ratio. We vary $\rho,\alpha\in\{0.1, 0.3, 0.5, 0.7, 0.9\}$ and $\tau\in\{0.3, 1, 3\}$. 
We compute a Monte Carlo estimate of both the mean squared error $E\{(\sigma^{-1}\hat{\sigma} - 1)^2\}$ and $E(\sigma^{-1} \hat{\sigma} )$ as the measure of performance.
The methods in comparison include
(a) the naive estimator \eqref{est:naive} with $\hat{\beta}_\lambda$ in \eqref{est:lasso}, 
(b) the degrees of freedom adjusted estimator $\rt$ in \eqref{est:reid} \citep{reid2013study},
(c) the square\verb+-+root/scaled lasso \citep{belloni2011square, sun2013sparse},
(d) the natural lasso \eqref{est:sig_nlasso},
and (e) the organic lasso \eqref{est:sig_olasso}.
As a benchmark, we also include the oracle $n^{-1} \snorm{\varepsilon}_2^2$.
The \texttt{simulator} R package \citep{citesimulator} was used for all simulations.

\subsection{Methods with regularization parameter selected by cross-validation}
We carry out two sets of simulations.
In the first set, we compare the performance of the aforementioned methods with regularization parameter selected in a data\verb+-+adaptive way.
In particular, five\verb+-+fold cross\verb+-+validation is used to select the tuning parameter for each method.
\begin{figure}
  \centering
  \includegraphics[width = .8\textwidth]{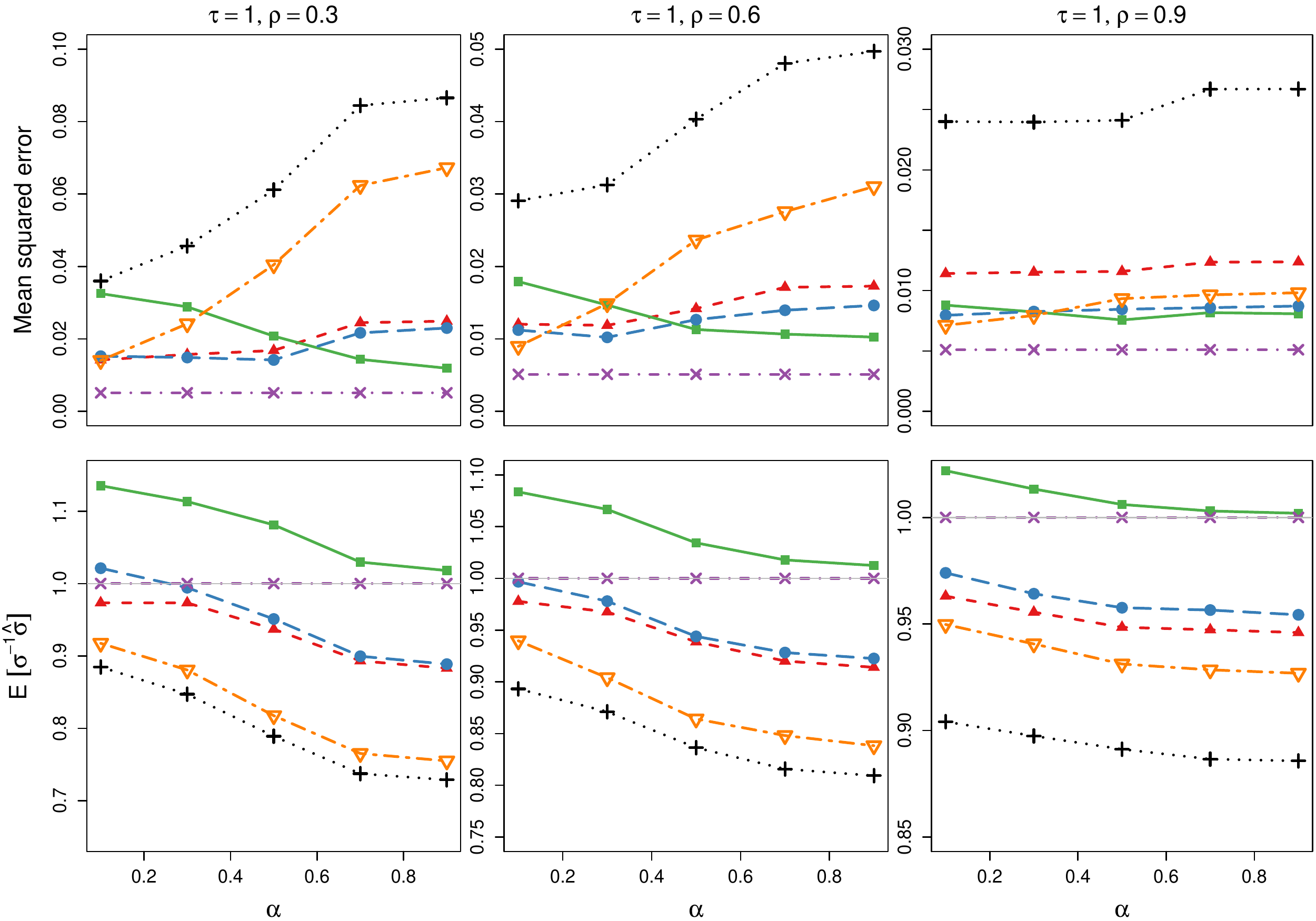}
  \caption{Simulation results of methods using cross-validation. From left to right, columns show Monte Carlo estimates of
   the mean squared error, in the top panel, and $E(\sigma^{-1}\hat{\sigma})$, in the bottom panel, of various methods in three simulation
   settings. Line styles and their corresponding methods: 
    \protect\includegraphics[width = 0.12in]{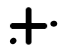} for naive, 
    \protect\includegraphics[width = 0.12in]{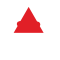} for $\rt$, 
    \protect\includegraphics[width = 0.12in]{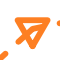} for the square-root/scaled lasso,
    \protect\includegraphics[width = 0.12in]{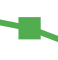} for the natural lasso, 
    \newline
    \protect\includegraphics[width = 0.12in]{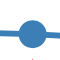} for the organic lasso, 
    \protect\includegraphics[width = 0.12in]{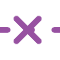} for the oracle.}
    \label{fig:cv}
  \end{figure}

Due to space constraints, we present a subset of the results in
Fig~\ref{fig:cv}. Additional results are presented in 
Appendix \ref{app:numerical}.
The result for the square\verb+-+root/scaled lasso is averaged over 100 repetitions due to the large computational time.
For all other methods, the results are averaged over 1000 repetitions.
Overall, the natural lasso does well in adjusting the downward bias of the naive estimator,
while other methods tend to produce under\verb+-+estimates.
In each panel, we fix signal\verb+-+to\verb+-+noise ratio ($\tau$) and correlations
among features ($\rho$), and vary model sparsity ($\alpha$).
All estimates get worse with growing $\alpha$, except for the natural lasso, which improves as the true $\beta^\ast$ gets denser.
In particular, both the natural lasso and the organic lasso gain performance advantage over other methods when the underlying models
do not satisfy conditions for the support recovery of the lasso solution.
From left to right, Fig~\ref{fig:cv} illustrates the effect of increasing $\rho$.
As observed in \citet{reid2013study}, high correlations can be
helpful: All curves approach the oracle as $\rho$ increases.
Finally, we find that the organic lasso is uniformly better or equivalent to $\rt$.

Paired $t$\verb+-+tests and Wilcoxon signed\verb+-+rank tests show that the
differences in mean squared errors of different methods are
significant at the $5\%$ level for almost
all points shown in Fig~\ref{fig:cv}.

Results in Appendix \ref{app:numerical} also show the natural lasso
estimator doing well when the signal\verb+-+to\verb+-+noise ratio is low:
the performances of all methods degrade as $\tau$ gets large.
This is expected from Theorem \ref{thm:msebound} and Theorem \ref{thm:msebound_olasso}, and is also observed in \citet{reid2013study}.

\subsection{Methods with fixed choice of regularization parameter}
Although solving \eqref{est:sig_olasso} is fast enough for one to
use cross\verb+-+validation with the organic lasso,
Theorem \ref{thm:msebound_olasso} implies that $\lambda_0 = ({2n^{-1}\log p})^{1/2}$ is a theoretically sound choice of regularization parameter.
We also conjecture that a sharper rate may be obtainable at $\lambda_1 \geq
n^{-2}\snorm{\X^T \epsilon}_\infty^2 $, where $\epsilon \sim N(0,
1)$. With high probability, $n^{-2}\snorm{\X^T \epsilon}_\infty^2 \approx n^{-1}\log(p)$.
Thus, we also show the performance of the organic lasso with tuning
parameter values equal to $\lambda_2 = n^{-1} \log(p)$,
and $\lambda_3$, which is a Monte Carlo estimate of $E(n^{-2}\snorm{\X^T\epsilon}_\infty^2)$, where the expectation is with respect to
$\epsilon \sim N(0, 1)$.

We compare the organic lasso at these three fixed values of tuning
parameter to the \newline square\verb+-+root/scaled lasso estimator \eqref{est:sqrt} of error variance,
which is another method
whose theoretical choice of $\lambda$ does not depend on $\sigma$.
\citet{Sun25092012} find that $\lambda_0$ works very well
for \eqref{est:sqrt}, which we denote by scaled(1), and
\citet{sun2013sparse} propose a refined choice of $\lambda$, which is
proved to attain a sharper rate, denoted by scaled(2).  
The results of all the methods are averaged over 1000 repetitions.

\begin{figure}
  \centering
 \includegraphics[width = .8\textwidth]{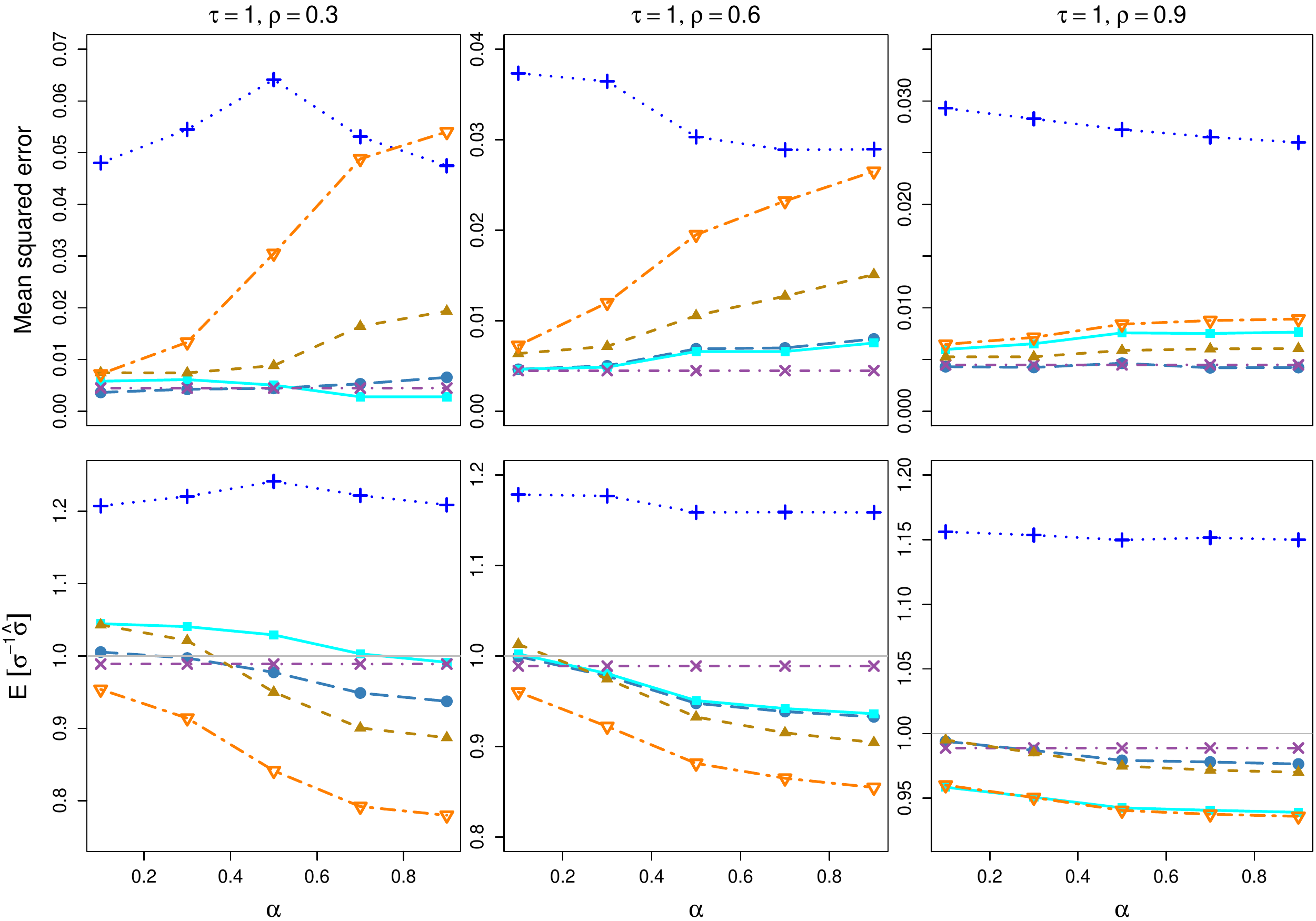}
  \caption{Simulation results of methods using pre-specified
    regularization parameter values. 
    From left to right, columns show
    Monte Carlo estimates of the mean squared error, in the top panel,
    and $E(\sigma^{-1} \hat{\sigma})$, in the bottom panel, of various methods in three simulation settings. 
    Line styles and their
    corresponding methods:
    \protect\includegraphics[width = 0.12in]{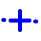} for organic ($\lambda_0$),
    \protect\includegraphics[width = 0.12in]{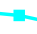} for organic ($\lambda_2$), 
    \protect\includegraphics[width = 0.12in]{blue.png} for organic ($\lambda_3$), 
    \protect\includegraphics[width = 0.12in]{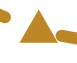} for scaled(1),
    \protect\includegraphics[width = 0.12in]{orange.png} for scaled (2), 
  \protect\includegraphics[width = 0.12in]{purple.png} for the oracle.}
  \label{fig:fix}
\end{figure}

Fig~\ref{fig:fix} shows similar patterns as
Fig~\ref{fig:cv}. Specifically, large value of $\rho$ helps all methods, 
while performance generally degrades for denser $\beta^\ast$.
Although not shown here, all methods struggle as $\tau$ increases.
The theoretically justified tuning parameter $\lambda_0$ for the
organic lasso appears in practice to overshrink the estimate of
$\beta^*$ and thus to overestimate $\sigma^2$, leading to poor performance; however, the organic
lasso with the smaller tuning parameter values $\lambda_2$ and $\lambda_3$ do quite well, generally outperforming
 the square\verb+-+root/scaled lasso based methods.

\section{Error estimation for Million Song dataset} \label{sec:realdata}
We apply our error variance estimators to the Million Song dataset.\footnote{The whole data set can be obtained at https://labrosa.ee.columbia.edu/millionsong/. We consider a subset of the whole data, which is available at
https://archive.ics.uci.edu/ml/datasets/yearpredictionmsd.} The data
consist of information about 463715 songs, and the primary goal is to
model the release year of a song using $p = 90$ of its timbre features.
The dataset has a very large sample size so that we can reliably estimate
the ground truth of the target of estimation on a very large set of held out data.
In particular, we randomly select half of the songs for this purpose and use 
$\bar{\sigma}^2 = (n - p)^{-1}\snorm{\y - \X \hat{\beta}_{LS}}_2^2$ to form our ground
truth, where $\hat{\beta}_{LS}$ is the least\verb+-+squares estimator of
$\beta^\ast$.
In practice, model \eqref{est:model} may rarely hold, which alters the
interpretation of error variance estimation. Suppose the response vector
$y$ has mean $\mu$ and covariance matrix $\Sigma$.  Then
$\bar\sigma^2$ can be thought of as an estimator of the population quantity  
\begin{align}
\min_{\beta} \frac{1}{n} E\left(\norm{\y - \X \beta}_2^2\right) = \frac{1}{n} \tr (\Sigma) + \frac{1}{n} \norm{\left( I - XX^+ \right)\mu}_2^2. \nonumber
\end{align}
In the special case where $\Sigma=\sigma^2 I_n$ and $\mu=X\beta^\ast$,
as in \eqref{est:model}, then $\bar{\sigma}^2$ reduces to the linear model noise variance $\sigma^2$.

From the remaining data that was not previously used to yield $\bar{\sigma}^2$, we randomly
form training datasets of size $n$ and compare the performance of
various error variance estimators.  
We vary $n$ in $\{20, 40, 60, 80, 100, 120\}$ to gauge the performance of these methods in situations in which $n < p$ and $n
\approx p$.
For each $n$, we repeat the data selection and error variance
estimation on $1000$ disjoint training sets, and report estimates of the mean squared error 
$E\{(\bar\sigma^{-1} \hat{\sigma} - 1)^2\}$ in Table~\ref{tab:MSD}
and estimates of $E (\bar{\sigma}^{-1}\hat{\sigma} )$ in Appendix \ref{app:numerical}.

\begin{table}[ht]
  \def~{\hphantom{0}}
  \caption{Mean squared error of noise variance estimation for Million Song dataset}
    \centering
    \begin{tabular}{lcccccc}
      \hline
      n & 20 & 40 & 60 & 80 & 100 & 120 \\ 
      \hline
      naive & 17.02 (0.68) & 8.48 (0.41) & 5.28 (0.26) & 3.80 (0.17) & 3.03 (0.13) & 2.43 (0.10) \\ 
      $\rt$ & 10.74 (0.45) & 5.92 (0.29) & 3.57 (0.17) & 2.57 (0.11) & 2.23 (0.10) & 1.75 (0.08) \\ 
      natural(cv) & ~8.82 (0.38) & 5.23 (0.27) & 3.47 (0.16) & 2.61 (0.12) & 2.39 (0.11) & 2.01 (0.09) \\ 
      organic(cv) & ~8.08 (0.32) & 4.23 (0.20) & 2.59 (0.12) & 2.00 (0.08) & 1.72 (0.08) & 1.54 (0.07) \\ 
      scaled(1) & ~7.43 (0.37) & 4.92 (0.25) & 3.84 (0.17) & 3.08 (0.13) & 2.94 (0.12) & 2.75 (0.11) \\ 
      scaled(2) & ~7.11 (0.28) & 3.36 (0.15) & 2.23 (0.10) & 2.57 (0.83) & 1.61 (0.07) & 1.46 (0.07) \\ 
      organic($\lambda_2$) & ~5.87 (0.24) & 3.17 (0.14) & 1.93 (0.09) & 1.40 (0.06) & 1.20 (0.05) & 1.02 (0.05) \\ 
      organic($\lambda_3$) & ~5.72 (0.24) & 3.15 (0.14) & 1.99 (0.09) & 1.45 (0.07) & 1.28 (0.05) & 1.12 (0.05) \\
      \hline
    \end{tabular}
  \\[10pt]
  \caption*{Mean and standard errors, over 1000 replications, of the squared error of various methods. Each entry is multiplied by 100 to convey information more compactly.}
  \label{tab:MSD}
\end{table}

All methods produce a substantial performance improvement 
over the naive estimator for a wide range of values of $n$.
The natural and organic lassos with cross validation perform either
better or comparably to $\rt$ and are in some, but not all, cases
outperformed by scaled(2).
When $n$ gets large, the natural lasso shows some upward bias, which as we noted before is less problematic than downward bias.
The organic lasso with the fixed choices $\lambda_2$ or $\lambda_3$ performs extremely well for all $n$.

Future research directions include the analysis of the proposed methods with smaller values of $\lambda$, and extending the natural parameterization to penalized non\verb+-+parametric regression. Finally, an \texttt{R} \citep{citeR} package, named \texttt{natural} \citep{citenatural}, is available on the Comprehensive R Archive Network, implementing our estimators.

\section*{Acknowledgement}
We thank Irina Gaynanova for a useful conversation that helped us prove Propositions \ref{prop:slow_rate_naive} and \ref{prop:slow_rate_sqrt}.
JB was supported by an NSF CAREER grant, DMS-1653017.

\newpage
\begin{appendices}
\section{Proof of Lemma \ref{lem1}}
From \eqref{est:lasso} in the paper, it follows that
\begin{align}
  \hat{\sigma}^2_\lambda \leq \frac{1}{n} \norm{\y - \X\beta^\ast}^2_2 + 2\lambda \norm{\beta^\ast}_1 = \frac{1}{n} \norm{\varepsilon}^2_2 + 2\lambda \norm{\beta^\ast}_1.
  \nonumber
\end{align}
By introducing the dual variable $2n^{-1}u \in \real^n$, 
\begin{align}
  \hat{\sigma}^2_\lambda =& \min_{\beta} \left( \frac{1}{n}\norm{\y - \X\beta}^2_2 + 2\lambda \norm{\beta}_1 \right)
  = \min_{\beta, \z} \max_{u} \left\{ \frac{1}{n} \norm{\y - \z}^2_2 + \frac{2}{n}u^T \left( \z - \X \beta \right) + 2\lambda \norm{\beta}_1 \right\}
  \nonumber\\
  \ge&\max_{u} \min_{\beta, \z} \left\{ \frac{1}{n} \norm{\y - \z}^2_2 + \frac{2}{n}u^T \left( \z - \X \beta \right) + 2\lambda \norm{\beta}_1 \right\}
  \nonumber\\
  =& \max_{u} \left( \frac{1}{n}\norm{\y}^2_2 - \frac{1}{n}\norm{\y - u}^2_2, \text{subject to} \,\, \norm{\X^Tu}_\infty \leq n \lambda \right).
  \nonumber
\end{align}
By assumption, $\varepsilon$ is dual feasible, which means that
\begin{align}
  \hat{\sigma}^2_\lambda
  \geq \frac{1}{n} \norm{\y}^2_2 - \frac{1}{n} \norm{\y - \varepsilon}^2_2 
  \geq  \frac{1}{n} \norm{\varepsilon}^2_2 + \frac{2}{n} \varepsilon^T \X \beta^\ast
  \geq \frac{1}{n} \norm{\varepsilon}^2_2 - 2 \lambda \norm{\beta^\ast}_1,
  \nonumber
\end{align}
where in the last step we applied H{\"o}lder's inequality.

\section{Proof of Propositions \ref{main:prop:nl} and \ref{main:prop:ol}}
\label{sec:pf-prop:nl}
We prove in this section that both the natural lasso and the organic lasso estimates of error variance can be simply expressed as the minimizing values of certain convex optimization problems. To do so, we exploit the first order optimality condition of each convex program. 

We start with proving that the natural lasso estimate of $\sigma^2$ is the minimal value of a lasso problem \eqref{est:lasso}. The following lemma characterizes the conditions for which $(\hat{\theta}_\lambda, \hat{\phi}_\lambda)$ is a solution to \eqref{est:naturalest} with $\Omega(\theta, \phi) = \snorm{\theta}_1$.
\begin{lemma}[Optimality condition of the natural lasso] \label{lem:optimality_naturallasso}
  For any $\lambda > 0$, $( \hat{\theta}_\lambda, \hat{\phi}_\lambda)$ is a solution to \eqref{est:naturalest} with $\Omega(\theta, \phi) = \snorm{\theta}_1$ if and only if 
  \[
    - \frac{1}{\hat{\phi}_\lambda} + \frac{1}{n} \norm{\y}^2_2 - \frac{\norm{\X\hat{\theta}_\lambda}^2_2}{n\hat{\phi}_\lambda^2} = 0,
    \qquad
    -\X^T \y + \X^T \X \frac{\hat{\theta}_\lambda}{\hat{\phi}_\lambda} + n\lambda \hat{g} = \0
  \]
  where $\hat{g} \in \partial ( \snorm{\hat{\theta}_\lambda}_1 )$.
\end{lemma}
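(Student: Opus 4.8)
The plan is to recognize the two stated identities as the first-order (KKT) optimality conditions of the convex program \eqref{est:naturalest} with $\Omega(\theta,\phi)=\snorm{\theta}_1$, and to justify their necessity and sufficiency by invoking the joint convexity noted just after \eqref{est:naturalest} together with the fact that the objective is $C^\infty$ away from the $\snorm{\theta}_1$ term.

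Write the objective as $F(\theta,\phi)=\tilde F(\theta,\phi)+\lambda\snorm{\theta}_1$, where
\[
  \tilde F(\theta,\phi) = -\tfrac12\log\phi + \phi\,\tfrac{\snorm{\y}_2^2}{2n} - \tfrac1n\y^T\X\theta + \tfrac{\snorm{\X\theta}_2^2}{2n\phi}
\]
is smooth on the open convex domain $\{\phi>0\}\times\real^p$. Since $F$ is jointly convex there and the domain is open, a point $(\hat\theta_\lambda,\hat\phi_\lambda)$ in it is a global minimizer if and only if $\0\in\partial F(\hat\theta_\lambda,\hat\phi_\lambda)$; no Lagrange multiplier for the constraint $\phi>0$ enters, because the feasible region is open and the constraint is never active. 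By the Moreau--Rockafellar sum rule (one summand differentiable), $\partial F(\theta,\phi)=\{(\nabla_\theta\tilde F(\theta,\phi)+\lambda g,\ \partial_\phi\tilde F(\theta,\phi)):g\in\partial\snorm{\theta}_1\}$, so $\0\in\partial F(\hat\theta_\lambda,\hat\phi_\lambda)$ is equivalent to the pair of requirements $\partial_\phi\tilde F(\hat\theta_\lambda,\hat\phi_\lambda)=0$ and $\nabla_\theta\tilde F(\hat\theta_\lambda,\hat\phi_\lambda)+\lambda\hat g=\0$ for some $\hat g\in\partial\snorm{\hat\theta_\lambda}_1$.

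It then remains only to compute the two gradients. Differentiating in $\phi$ gives $\partial_\phi\tilde F=-\tfrac{1}{2\phi}+\tfrac{\snorm{\y}_2^2}{2n}-\tfrac{\snorm{\X\theta}_2^2}{2n\phi^2}$; equating to zero and multiplying by $2$ yields the first displayed identity. Differentiating in $\theta$ gives $\nabla_\theta\tilde F=-\tfrac1n\X^T\y+\tfrac{1}{n\phi}\X^T\X\theta$; the condition $\nabla_\theta\tilde F+\lambda\hat g=\0$, multiplied through by $n$ and written with $\hat\theta_\lambda/\hat\phi_\lambda$ in place of $\theta/\phi$, is exactly the second displayed identity. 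There is no substantive obstacle: the only points meriting care are the sum-rule and open-domain justification above (which is where the joint convexity recorded earlier in the paper is doing all the work), and — if one additionally wants to know a minimizer exists — the observation that $F(\theta,\phi)\to\infty$ as $\phi\downarrow0$ and, when $\y\neq\0$, as $\phi\to\infty$, with the degenerate case $\y=\0$ forcing $\hat\theta_\lambda=\0$ and being handled directly.
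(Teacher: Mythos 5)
Your proof is correct and is essentially the argument the paper leaves implicit: the lemma is just the zero-subgradient (KKT) characterization of the jointly convex objective on the open domain $\{\phi>0\}\times\real^p$, with the sum rule separating the smooth part from $\lambda\snorm{\theta}_1$, and your two gradient computations reproduce exactly the displayed identities. One small aside: the existence remark is not needed for the lemma, and in the degenerate case $\y=\0$ the objective is in fact unbounded below as $\phi\to\infty$ (take $\theta=\0$), so no minimizer exists there; this does not affect the equivalence you establish.
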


Given $( \hat{\theta}_\lambda, \hat{\phi}_\lambda )$, 
we reverse the natural parameterization to get $\hat{\beta}_\lambda = \hat{\phi}_\lambda^{-1}\hat{\theta}_\lambda$ and 
$\hat{\sigma}^2_\lambda = \hat{\phi}_\lambda^{-1}$.
From Lemma \ref{lem:optimality_naturallasso},
\begin{align}
  \hat{\sigma}^2_\lambda &= \frac{1}{n} \left( \norm{\y}^2_2 - \norm{\X \hat{\beta}_\lambda}^2_2 \right) \quad \text{and} \quad
  0 = -\hat{\beta}_\lambda^T \X^T \y + \norm{\X\hat{\beta}_\lambda}^2_2 + n \lambda \norm{\hat{\beta}_\lambda}_1.
  \nonumber
\end{align}
Note that 
\begin{align}
  \norm{\y - \X \hat{\beta}_\lambda}^2_2
  = \norm{\y}^2_2 - \norm{\X \hat{\beta}_\lambda}^2_2 + 2 \left( \norm{\X\hat{\beta}_\lambda}^2_2 - \y^T \X \hat{\beta}_\lambda \right)
  = \norm{\y}^2_2 - \norm{\X \hat{\beta}_\lambda}^2_2 - 2 n \lambda \norm{\hat{\beta}_\lambda}_1.
  \nonumber
\end{align}
We have
\begin{align}
  \hat{\sigma}^2_\lambda = \frac{1}{n} \left( \norm{\y}^2_2 - \norm{\X\hat{\beta}_\lambda}^2_2 \right) = \frac{1}{n} \norm{\y - \X\hat{\beta}_\lambda}^2_2 + 2\lambda \norm{\hat{\beta}_\lambda}_1.
  \nonumber
\end{align}

We show that the organic lasso estimate of $\sigma^2$ is the minimal value of the $\ell_1^2$-penalized least squares problem. As the natural lasso, we start with studying the following optimality condition:
\begin{lemma}[Optimality condition of the organic lasso] \label{lem:optimality_organiclasso}
  For any $\lambda > 0$, $\left(\check{\theta}_\lambda, \check{\phi}_\lambda \right)$ is a solution to \eqref{est:organiclasso} if and only if 
  \begin{align}
    - \frac{1}{\check{\phi}_\lambda} + \frac{1}{n} \norm{\y}^2_2 - \frac{\norm{\X\check{\theta}_\lambda}^2_2}{n\check{\phi}_\lambda^2} - 2\lambda \frac{\norm{\check{\theta}_\lambda}_1^2}{\check{\phi}_\lambda^2}= 0,
    \qquad 
    -\X^T \y + \X^T \X \frac{\check{\theta}_\lambda}{\check{\phi}_\lambda} + 2 n\lambda \frac{\norm{\check{\theta}_\lambda}_1}{\check{\phi}_\lambda} \check{g} = \0
    \nonumber
  \end{align}
  where $\check{g} \in \partial ( \snorm{\check{\theta}}_1)$.
\end{lemma}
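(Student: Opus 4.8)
The plan is to treat \eqref{est:organiclasso} as the minimization of a proper convex function $F(\theta,\phi)$ over the open convex set $\real^p\times\real_{>0}$ — the joint convexity of $F$ having already been established in the paragraph preceding \eqref{est:organiclasso} — so that, by convexity, a point is a global minimizer if and only if $\0\in\partial F(\check\theta_\lambda,\check\phi_\lambda)$. I would split $F = F_0 + \psi$, where $F_0(\theta,\phi) = -\tfrac12\log\phi + \phi\,\tfrac{\norm{\y}_2^2}{2n} - \tfrac1n\y^T\X\theta + \tfrac{\norm{\X\theta}_2^2}{2n\phi}$ collects the continuously differentiable terms on the domain and $\psi(\theta,\phi) = \lambda\norm{\theta}_1^2/\phi$ is the only nonsmooth ingredient. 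Since all but one summand is differentiable, the Moreau--Rockafellar sum rule applies without any constraint qualification, giving $\partial F = \nabla F_0 + \partial\psi$.

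Next I would compute the two pieces. Direct differentiation yields $\partial F_0/\partial\phi = -\tfrac{1}{2\phi} + \tfrac{1}{2n}\norm{\y}_2^2 - \tfrac{1}{2n\phi^2}\norm{\X\theta}_2^2$ and $\nabla_\theta F_0 = -\tfrac1n\X^T\y + \tfrac{1}{n\phi}\X^T\X\theta$. For $\partial\psi$ I would invoke the subdifferential chain rule for the composition $\psi = \lambda\,g(h(\theta),\phi)$ with $h(\theta)=\norm{\theta}_1$ convex and $g(x,\phi)=x^2/\phi$ differentiable and nondecreasing in $x$ for $x\ge0$ (exactly the structure already noted in the paper when verifying convexity of the penalty): this gives $\partial\psi(\theta,\phi) = \bigl\{\bigl(\tfrac{2\lambda\norm{\theta}_1}{\phi}\,\xi,\ -\tfrac{\lambda\norm{\theta}_1^2}{\phi^2}\bigr):\xi\in\partial\norm{\theta}_1\bigr\}$, the $\phi$-block being a singleton because $\psi$ is smooth in $\phi$ for fixed $\theta$. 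Assembling $\0\in\nabla F_0 + \partial\psi$ coordinatewise at $(\check\theta_\lambda,\check\phi_\lambda)$: the $\phi$-component gives $-\tfrac{1}{2\check\phi_\lambda} + \tfrac{1}{2n}\norm{\y}_2^2 - \tfrac{1}{2n\check\phi_\lambda^2}\norm{\X\check\theta_\lambda}_2^2 - \tfrac{\lambda\norm{\check\theta_\lambda}_1^2}{\check\phi_\lambda^2} = 0$, which upon multiplying by $2$ is the first displayed equation of the lemma; the $\theta$-component gives $-\tfrac1n\X^T\y + \tfrac{1}{n\check\phi_\lambda}\X^T\X\check\theta_\lambda + \tfrac{2\lambda\norm{\check\theta_\lambda}_1}{\check\phi_\lambda}\check g = \0$ for some $\check g\in\partial\norm{\check\theta_\lambda}_1$, which upon multiplying by $n$ is the second displayed equation. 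Since $F$ is convex, these first-order conditions are simultaneously necessary and sufficient for optimality, which is the claimed equivalence. The companion statement, Lemma \ref{lem:optimality_naturallasso}, is the same argument with $\psi = \lambda\norm{\theta}_1$, whose subdifferential has a vanishing $\phi$-block, explaining the absence of the $-2\lambda\norm{\theta}_1^2/\phi^2$ term there.

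The only step requiring care is the chain rule for $g(h(\theta),\phi)$ \emph{jointly} in $(\theta,\phi)$: one must check that $g$ is differentiable and monotone in its first argument on the region $x\ge0$ relevant here (and that the second coordinate enters affinely), so that the composition rule returns exactly the set displayed above rather than merely an inclusion. Everything else is routine differentiation on an open domain and clearing of denominators.
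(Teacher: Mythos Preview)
Your proposal is correct. The paper does not actually supply a proof of Lemma~\ref{lem:optimality_organiclasso} (or of its companion Lemma~\ref{lem:optimality_naturallasso}); both are simply stated in the appendix as the first-order stationarity conditions $0\in\partial F(\check\theta_\lambda,\check\phi_\lambda)$ for the jointly convex objective, and then used to derive Propositions~\ref{main:prop:nl} and~\ref{main:prop:ol}. Your write-up makes this computation explicit---the smooth/nonsmooth split, the sum rule, and the chain rule for $\lambda\,g(h(\theta),\phi)$ with $g(x,\phi)=x^2/\phi$ and $h(\theta)=\norm{\theta}_1$---and the step you flag as delicate (equality rather than inclusion in the composition rule, valid because $g$ is differentiable and nondecreasing in $x$ on the range $x\ge0$ of $h$) is indeed the only nontrivial point. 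So your approach is exactly the argument the paper leaves implicit.
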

So following the natural parameterization, we have that
$\check{\beta}_\lambda = {\check{\theta}_\lambda^{-1}}{\check{\rho}_\lambda}$ and $\check{\sigma}^2 _\lambda= \check{\rho}_\lambda^{-1}$, and
\begin{align}
  \check{\sigma}^2_\lambda &= \frac{1}{n} \left( \norm{\y}^2_2 - \norm{\X \check{\beta}_\lambda}^2_2 - 2 n \lambda \norm{\check{\beta}_\lambda}_1^2 \right) \nonumber\\
  0 &= -\check{\beta}^T_\lambda \X^T \y + \norm{\X\check{\beta}_\lambda}^2_2 + 2 n \lambda \norm{\check{\beta}_\lambda}_1^2.
  \nonumber
\end{align}
Note that 
\begin{align}
  \norm{\y - \X \check{\beta}_\lambda}^2_2 &= \norm{\y}^2_2 + \norm{\X\check{\beta}_\lambda}^2_2 - 2 \y^T \X\check{\beta}_\lambda
  \nonumber\\
  &= \norm{\y}^2_2 - \norm{\X \check{\beta}_\lambda}^2_2 + 2 \left( \norm{\X\check{\beta}_\lambda}^2_2 - \y^T \X \check{\beta}_\lambda \right)
  \nonumber\\
  &= \norm{\y}^2_2 - \norm{\X \check{\beta}_\lambda}^2_2 - 4 n \lambda \norm{\check{\beta}_\lambda}_1^2.
  \nonumber
\end{align}
We have
\begin{align}
  \check{\sigma}^2_\lambda = \frac{1}{n} \left( \norm{\y}^2_2 - \norm{\X\check{\beta}_\lambda}^2_2 - 2 n \lambda \norm{\check{\beta}_\lambda}_1^2 \right) = \frac{1}{n} \norm{\y - \X\check{\beta}_\lambda}^2_2 + 2\lambda \norm{\check{\beta}_\lambda}_1^2.
  \nonumber
\end{align}

\section{Proof of Lemma \ref{lem:dual}: the dual problem of the $\ell_1^2$-penalized least squares}
The primal problem of the $\ell_1^2$-penalized least squares \eqref{est:beta_olasso} in the paper can be written as an equality constrained minimization problem:
\begin{align}
  \min_{\beta \in \real^p} \left( \frac{1}{n} \norm{\y - \z}^2_2 + 2\lambda \norm{\beta}_1^2 \subto \frac{2}{n} \z = \frac{2}{n} \X\beta \right).
  \nonumber
\end{align}
The Lagrange dual function is 
\begin{align}
  g\left( u \right) &= \min_{\beta \in \real^p, \z \in \real^n} \left\{ \frac{1}{n} \norm{\y - \z}^2_2 + 2 \lambda \norm{\beta}_1^2 + \frac{2u^T}{n}\left( \z - \X \beta \right)\right\}
  \nonumber\\
  &= \min_{\z \in \real^n} \left( \frac{1}{n} \norm{\y - \z}^2_2 + \frac{2}{n} u^T\z \right) + \min_{\beta \in \real^p} \left\{ 2 \lambda \norm{\beta}_1^2 - 2 \left( \frac{X^T u}{n} \right)^T \beta \right\}.
  \nonumber
\end{align}
The minimization of $u$ is
\begin{align}
  \min_{\z \in \real^n} \left(\frac{1}{n} \norm{\y - \z}^2_2 + \frac{2}{n} u^T\z \right) = \frac{2}{n} u^T \y - \frac{1}{n} \norm{u}^2_2 = \frac{1}{n} \left( \norm{\y}^2_2 - \norm{\y - u}^2_2\right),
  \nonumber
\end{align}
where the minimum is attained at 
\begin{align}
  \hat{\z} = \y - u.
  \nonumber
\end{align}
The minimization problem of $\beta$ can be written as
\begin{align}
  \min_{\beta \in \real^p} \left\{ 2 \lambda \norm{\beta}_1^2 - 2 \left( \frac{\X^T u}{n} \right)^T \beta \right\} 
  = -2\lambda \max_{\beta \in \real^p} \left\{ \left( \frac{\X^T u}{\lambda n} \right)^T \beta - \norm{\beta}_1^2 \right\}.
  \nonumber
\end{align}
Observe that the maximum is the Fenchel conjugate function of $\norm{\cdot}_1^2$, evaluated at $(\lambda n)^{-1}\X^T u$.
By \citet[Example~3.27, pp.~92-93]{boyd2004convex},
\begin{align}
  -2\lambda \max_{\beta \in \real^p} \left\{ \left( \frac{\X^T u}{\lambda n} \right)^T \beta - \norm{\beta}_1^2 \right\}
  = -\frac{2\lambda}{4} \norm{\frac{\X^T u}{\lambda n}}_\infty^2 = -\frac{1}{2\lambda} \norm{\frac{\X^T u}{n}}_\infty^2.
  \nonumber
\end{align}
So 
\begin{align}
  g\left( u \right) = \frac{1}{n} \left( \norm{\y}^2_2 - \norm{\y - u}^2_2 \right) - \frac{1}{2\lambda} \norm{\frac{\X^Tu}{n}}_\infty^2.
  \nonumber
\end{align}

\section{Proof of Lemma \ref{lem:close_to_oracle_organic}}
A direct upper bound is
\begin{align}
  \check{\sigma}^2_\lambda \leq \frac{1}{n} \norm{\y - \X\beta^\ast}^2_2 + 2\lambda \norm{\beta^\ast}_1^2 = \frac{1}{n} \norm{\varepsilon}^2_2 + 2\lambda \norm{\beta^\ast}_1^2.
  \nonumber
\end{align}
To get a lower bound of $\hat{\sigma}^2$, note that the dual problem in Lemma \ref{lem:dual} and the strong duality imply that
\begin{align}
  \check{\sigma}^2_\lambda &= \min_{\beta \in \real^p} \left( \frac{1}{n} \norm{\y - \X \beta}^2_2 + 2 \lambda \norm{\beta}_1^2 \right)
   = \max_{u \in \real^n} \left( \frac{1}{n}\norm{\y}^2_2 - \frac{1}{n}\norm{\y - u}^2_2 - \frac{1}{2\lambda} \norm{\frac{\X^Tu}{n}}_\infty^2 \right)
  \nonumber\\
  & \geq \frac{1}{n}\norm{\y}^2_2 - \frac{1}{n}\norm{\y - \varepsilon}^2_2 - \frac{1}{2\lambda} \norm{\frac{\X^T \varepsilon}{n}}_\infty^2 
   = \frac{1}{n} \norm{\varepsilon}^2_2 + \frac{2}{n} \varepsilon^T \X \beta^\ast - \frac{1}{2\lambda} \norm{\frac{\X^T \varepsilon}{n}}_\infty^2 
  \nonumber\\
  & \geq \frac{1}{n} \norm{\varepsilon}^2_2 - 2 \norm{\frac{\X^T\varepsilon}{n}}_\infty \norm{\beta^\ast}_1 - \frac{1}{2\lambda} \norm{\frac{\X^T \varepsilon}{n}}_\infty^2
   \geq \frac{1}{n} \norm{\varepsilon}^2_2 - 2 \lambda \sigma^2 \left( \frac{\norm{\beta^\ast}_1}{\sigma} + \frac{1}{4} \right),
  \nonumber
\end{align}
where the last inequality holds for 
\begin{align}
  \lambda \geq \frac{\norm{\X^T \varepsilon}_\infty}{n\sigma}.
  \nonumber
\end{align}

\section{Proof of Theorem \ref{thm:msebound} and Theorem \ref{thm:msebound_olasso}}
We present in this section the proof of Theorem \ref{thm:msebound_olasso}. The proof of Theorem \ref{thm:msebound} follows
the same set of arguments.
First we use the following lemma to characterize the event that $\lambda \geq n^{-1}\sigma^{-1}\snorm{\X^T \varepsilon}_\infty$ is true, so that we can use Lemma \ref{lem:close_to_oracle_organic} to prove a high probability bound.
\begin{lemma}[Corollary 4.3, \citet{giraud2014introduction}] \label{lem:lambda}
  Assume that each column $\X_j$ of the design matrix $\X \in \real^{n \times p}$ satisfies $\snorm{\X_j}^2_2 = n$ for all $j = 1, \dots, p$, and
  $\varepsilon \sim N \left( \0, \sigma^2 \iden_n \right)$. Then for any $L > 0$, 
  \begin{align}
    \Prob\left\{ \frac{\norm{\X^T \varepsilon}_\infty}{n\sigma} >  \left( \frac{2 \log p + 2L}{n} \right)^{1/2} \right\} \leq e^{-L}.
    \nonumber
  \end{align}
\end{lemma}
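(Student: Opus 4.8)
The plan is to reduce the claim to the classical Gaussian maximal inequality via a union bound, so the proof is essentially bookkeeping once the right normalizations are in place. First I would record the distribution of each coordinate of $\X^T\varepsilon$: since $\varepsilon\sim N(\0,\sigma^2\iden_n)$, the linear functional $\X_j^T\varepsilon$ is Gaussian with mean $0$ and variance $\sigma^2\snorm{\X_j}_2^2$, and the column normalization $\snorm{\X_j}_2^2=n$ gives $\X_j^T\varepsilon\sim N(0,n\sigma^2)$. Hence $Z_j:=\X_j^T\varepsilon/(\surd{n}\,\sigma)\sim N(0,1)$ for every $j=1,\dots,p$; their joint dependence plays no role in what follows.

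Next I would translate the event in question into one about the $Z_j$. With $t=\{(2\log p+2L)/n\}^{1/2}$ one has $\surd{n}\,t=(2\log p+2L)^{1/2}$, so the event $\{n^{-1}\sigma^{-1}\snorm{\X^T\varepsilon}_\infty>t\}$ coincides with $\{\max_{1\le j\le p}|Z_j|>(2\log p+2L)^{1/2}\}$. A union bound then yields
\begin{align}
  \Prob\left\{\frac{\snorm{\X^T\varepsilon}_\infty}{n\sigma}>t\right\}
  \le\sum_{j=1}^p\Prob\left\{|Z_j|>(2\log p+2L)^{1/2}\right\}
  = p\,\Prob\left\{|Z|>(2\log p+2L)^{1/2}\right\},
  \nonumber
\end{align}
where $Z\sim N(0,1)$.

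Finally I would invoke the elementary two-sided Gaussian tail bound $\Prob(|Z|>a)\le e^{-a^2/2}$, valid for all $a\ge0$ (equivalently $\Prob(Z>a)\le\frac{1}{2}e^{-a^2/2}$; this can be obtained from the Chernoff estimate, or checked directly by noting that $a\mapsto e^{-a^2/2}-\Prob(|Z|>a)$ vanishes at $0$, rises to a maximum at $a=(2/\pi)^{1/2}$, and then decreases to $0^+$). Taking $a=(2\log p+2L)^{1/2}$ gives $a^2/2=\log p+L$, hence $\Prob\{|Z|>a\}\le e^{-\log p-L}=e^{-L}/p$; substituting into the display above and cancelling the factor $p$ completes the proof. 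There is no genuine obstacle here: the only point requiring a little care is pinning down the constant in the Gaussian tail bound precisely enough that the $\log p$ in the threshold exactly offsets the $p$ terms coming from the union bound. In any case, this statement is exactly Corollary 4.3 of \citet{giraud2014introduction}, which one may simply cite.
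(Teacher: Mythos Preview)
Your proof is correct: the normalization gives $\X_j^T\varepsilon/(\surd{n}\,\sigma)\sim N(0,1)$, the union bound reduces to $p$ copies of a single Gaussian tail, and the inequality $\Prob(|Z|>a)\le e^{-a^2/2}$ for $a\ge0$ (equivalently $\Prob(Z>a)\le\tfrac12 e^{-a^2/2}$, which you verify cleanly via the sign of the derivative of the difference) does the rest. The paper does not supply its own proof of this lemma; it simply cites Corollary~4.3 of \citet{giraud2014introduction}, whose argument is exactly the union bound plus Gaussian tail estimate you wrote out, so your approach matches the cited source.
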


Lemma \ref{lem:lambda} implies that a good choice of the value of $\lambda$ would be $\{n^{-1}(2\log p + 2L)\}^{1/2}$, which does not depend on
any parameter of the underlying model.
The following corollary shows that with this value of $\lambda$, the organic lasso estimate of $\sigma^2$ is close to the oracle estimator with high probability.

\begin{corollary} \label{cor:hpbound}
  Assume that each column $\X_j$ of the design matrix $\X \in \real^{n \times p}$ satisfies $\snorm{\X_j}^2_2 = n$ for all $j = 1, \dots, p$, and
  $\varepsilon \sim N \left( \0, \sigma^2 \iden_n \right)$. Then for any $L > 0$, the organic lasso with
  \begin{align}
    \lambda = \left( \frac{2 \log p + 2L}{n} \right)^{1/2}
    \nonumber
  \end{align}
  has the following bound
  \begin{align}
    \left( \check{\sigma}^2_\lambda - \frac{1}{n} \norm{\varepsilon}^2_2 \right)^2 \leq 
    8 \max\left\{ \norm{\beta^\ast}_1^2, \sigma^2 \left( \frac{\norm{\beta^\ast}_1}{\sigma} + \frac{1}{4} \right) \right\}^2 \frac{\log p + L}{n}
    \nonumber
  \end{align}
  with probability greater than $1 - e^{-L}$.
\end{corollary}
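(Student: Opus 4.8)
The plan is to assemble the corollary directly from Lemma~\ref{lem:lambda} and Lemma~\ref{lem:close_to_oracle_organic}; no genuinely new argument is needed, only careful bookkeeping of constants. First I would invoke Lemma~\ref{lem:lambda} with the given $L$: since the columns of $\X$ satisfy $\snorm{\X_j}_2^2 = n$ and $\varepsilon \sim N(\0, \sigma^2 \iden_n)$, the event
\[
  \A = \left\{ \frac{\snorm{\X^T \varepsilon}_\infty}{n \sigma} \leq \left( \frac{2\log p + 2L}{n} \right)^{1/2} \right\}
\]
has probability at least $1 - e^{-L}$. With the prescribed choice $\lambda = \{n^{-1}(2\log p + 2L)\}^{1/2}$, the inequality defining $\A$ is exactly $\lambda \geq n^{-1}\snorm{\X^T(\sigma^{-1}\varepsilon)}_\infty$, which is the hypothesis required by Lemma~\ref{lem:close_to_oracle_organic}.

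Working on $\A$, I would then apply Lemma~\ref{lem:close_to_oracle_organic} to obtain the two-sided bound
\[
  -2\lambda \sigma^2 \left( \frac{\snorm{\beta^\ast}_1}{\sigma} + \frac{1}{4} \right) \leq \check{\sigma}^2_\lambda - \frac{1}{n}\snorm{\varepsilon}_2^2 \leq 2\lambda \snorm{\beta^\ast}_1^2 .
\]
Both the upper bound $2\lambda\snorm{\beta^\ast}_1^2$ and the magnitude of the lower bound $2\lambda\sigma^2(\sigma^{-1}\snorm{\beta^\ast}_1 + 1/4)$ are at most $2\lambda \max\{\snorm{\beta^\ast}_1^2,\, \sigma^2(\sigma^{-1}\snorm{\beta^\ast}_1 + 1/4)\}$, so the display collapses to the single estimate
\[
  \left| \check{\sigma}^2_\lambda - \frac{1}{n}\snorm{\varepsilon}_2^2 \right| \leq 2\lambda \max\left\{ \snorm{\beta^\ast}_1^2,\; \sigma^2 \left( \frac{\snorm{\beta^\ast}_1}{\sigma} + \frac{1}{4} \right) \right\} .
\]

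Finally I would square both sides and substitute $\lambda^2 = n^{-1}(2\log p + 2L) = 2 n^{-1}(\log p + L)$, so that $4\lambda^2 = 8 n^{-1}(\log p + L)$; this reproduces exactly the stated inequality, valid on $\A$ and hence with probability at least $1 - e^{-L}$. There is no real obstacle here: the corollary is a direct consequence of the two preceding lemmas, and the only point demanding care is this last step of bookkeeping --- folding the asymmetric two-sided bound of Lemma~\ref{lem:close_to_oracle_organic} into one $\max$ and tracking the factor $2$ hidden in $\lambda^2$ so that the leading constant comes out to precisely $8$.
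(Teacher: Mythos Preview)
Your proposal is correct and matches the paper's approach exactly: the paper treats this corollary as an immediate consequence of Lemma~\ref{lem:lambda} (to control the probability that $\lambda \geq n^{-1}\snorm{\X^T(\sigma^{-1}\varepsilon)}_\infty$) followed by Lemma~\ref{lem:close_to_oracle_organic} (to obtain the two-sided deviation bound), and your bookkeeping of the constant $8$ via $4\lambda^2 = 8n^{-1}(\log p + L)$ is precisely what is intended.
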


In general, a high probability bound does not necessarily imply an expectation bound. However, when the probability bound holds with an
exponential tail, it implies an expectation bound with essentially the same rate.
\begin{theorem} \label{thm:expbound}
  Assume that each column $\X_j$ of the design matrix $\X \in \real^{n \times p}$ satisfies $\norm{\X_j}^2_2 = n$ for all $j = 1, \dots, p$, and
  $\varepsilon \sim N \left( \0, \sigma^2 \iden_n \right)$. Then, for any constant $M > 1$,
  the organic lasso estimate with
  \begin{align}
    \lambda = \left( \frac{2M\log p}{n} \right)^{1/2}
    \nonumber
  \end{align}
   satisfies the following bound in expectation:
  \begin{align}
    \E \left\{ \left( \check{\sigma}^2_\lambda - \frac{1}{n} \norm{\varepsilon}^2_2 \right)^2 \right\} \leq 8\left( M + \frac{p^{1 - M}}{\log p} \right) \max\left\{ \norm{\beta^\ast}_1^2, \sigma^2 \left( \frac{\norm{\beta^\ast}_1}{\sigma} + \frac{1}{4} \right) \right\}^2 \frac{\log p}{n}.
    \nonumber
  \end{align}
\end{theorem}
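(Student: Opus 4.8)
The plan is to convert the high-probability control of Corollary \ref{cor:hpbound} into a bound in expectation by a layer-cake argument, exploiting that the failure probability $e^{-L}$ decays exponentially. Write $D := \check\sigma^2_\lambda - n^{-1}\snorm{\varepsilon}_2^2$, $V := \max\{\snorm{\beta^\ast}_1^2,\, \sigma^2(\snorm{\beta^\ast}_1/\sigma + 1/4)\}$ and $W := n^{-1}\sigma^{-1}\snorm{\X^T\varepsilon}_\infty$, so that the target is $\E(D^2)$, and note that $\lambda = (2Mn^{-1}\log p)^{1/2}$ obeys $n\lambda^2/2 = M\log p$. Lemma \ref{lem:lambda}, after rewriting, gives the tail $\Prob(W > w) \le p\,e^{-nw^2/2}$ for all $w \ge (2n^{-1}\log p)^{1/2}$, and in particular $\Prob(W > \lambda) \le p^{1-M}$.

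First I would record the purely deterministic content underlying Lemma \ref{lem:close_to_oracle_organic}, which holds with \emph{no} constraint relating $W$ and $\lambda$: plugging $\beta = \beta^\ast$ into the primal gives $D \le 2\lambda\snorm{\beta^\ast}_1^2$, and evaluating the dual problem of Lemma \ref{lem:dual} at the feasible point $u = \varepsilon$ gives $D \ge -h(W)$ with $h(w) := 2\sigma\snorm{\beta^\ast}_1 w + \sigma^2 w^2/(2\lambda)$, which is strictly increasing on $[0,\infty)$ and satisfies $h(\lambda) \le 2\lambda V$. Then I would write $\E(D^2) = \int_0^\infty \Prob(D^2 > t)\,dt$ and split at $t_0 := 4\lambda^2 V^2$. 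On $[0,t_0]$ simply bound $\Prob(D^2 > t) \le 1$, contributing $t_0 = 8MV^2 n^{-1}\log p$ --- the leading term. For $t > t_0$, since $D \le 2\lambda\snorm{\beta^\ast}_1^2 \le 2\lambda V < \sqrt t$, the event $\{D^2 > t\}$ forces $-D > \sqrt t$, hence $h(W) > \sqrt t$, i.e. $W > h^{-1}(\sqrt t) \ge h^{-1}(2\lambda V) \ge \lambda$, so $\Prob(D^2 > t) \le p\,e^{-n(h^{-1}(\sqrt t))^2/2}$.

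It then remains to integrate this tail. Substituting $\sqrt t = h(w)$ turns $\int_{t_0}^\infty p\,e^{-n(h^{-1}(\sqrt t))^2/2}\,dt$ into $p\int_{w_\ast}^\infty (h(w)^2)'\,e^{-nw^2/2}\,dw$ with $w_\ast := h^{-1}(2\lambda V) \ge \lambda$; since $(h^2)'$ is a cubic in $w$, this is a linear combination of the Gaussian-moment integrals $\int_{w_\ast}^\infty w^k e^{-nw^2/2}\,dw$ for $k = 1,2,3$, each of which equals a polynomial in $\lambda$ and $\log p$ times $n^{-1}e^{-n\lambda^2/2} = n^{-1}p^{-M}$. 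Bounding the coefficients of $(h^2)'$ using $\sigma^2 \le 4V$, $\sigma\snorm{\beta^\ast}_1 \le V$, $\snorm{\beta^\ast}_1^2 \le V$, and using $\lambda^{-2} = n/(2M\log p)$ to absorb the stray $\log p$ factors, this collapses to a bound of the form $C\,V^2 p^{1-M}/n$; tracking constants gives $C = 8$, whence $\E(D^2) \le 8MV^2 n^{-1}\log p + 8 V^2 p^{1-M}/n = 8(M + p^{1-M}/\log p)\,V^2 n^{-1}\log p$. The extension noted in the statement (sub-Gaussian or bounded-polynomial-moment errors) follows by replacing Lemma \ref{lem:lambda} with the corresponding tail bound.

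The main obstacle is the regime $t > t_0$, i.e. the event $\{W > \lambda\}$, on which Corollary \ref{cor:hpbound} and Lemma \ref{lem:close_to_oracle_organic} offer nothing: one must fall back on the assumption-free envelope $-h(W) \le D \le 2\lambda\snorm{\beta^\ast}_1^2$ and integrate it against the Gaussian tail of $W$ \emph{without} conceding a $\log p$ factor. A crude, discontinuous bound such as $D^2 \lesssim V^2 W^4/\lambda^2$ on $\{W > \lambda\}$ overshoots near $W = \lambda$ and produces only an $O(p^{1-M}\log p/n)$ correction; it is the graceful growth encoded in $h(W)$, combined with the $\lambda^{-2}\asymp n/\log p$ prefactor from the quadratic term of $h$ cancelling the $\log p$ generated by the $\chi^2$-type Gaussian moments, that yields the sharper $p^{1-M}/\log p$ correction appearing in the theorem.
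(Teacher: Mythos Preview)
Your route differs substantially from the paper's. The paper's argument is a three--line layer cake: with $r_n := 8V^2 n^{-1}\log p$ and $X_n := D^2$, it records $\Prob(X_n > Mr_n)\le e^{-(M-1)\log p}$ from Corollary~\ref{cor:hpbound} and then writes $\E(X_n/r_n)\le M+\int_M^\infty e^{-(t-1)\log p}\,\d t=M+p^{1-M}/\log p$, tacitly using $\Prob(X_n>tr_n)\le p^{1-t}$ for every $t\ge M$. You recognize that Corollary~\ref{cor:hpbound} ties $\lambda$ to $L$, so for the \emph{fixed} $\lambda=(2Mn^{-1}\log p)^{1/2}$ it controls only the single level set $\{X_n>Mr_n\}$; indeed the inclusion $\{D^2>tr_n\}\subseteq\{W>(2tn^{-1}\log p)^{1/2}\}$ fails for $t>M$ whenever $V=\sigma\snorm{\beta^\ast}_1+\sigma^2/4$, since then $h(\mu)>2\mu V$ for $\mu>\lambda$. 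Your remedy---retain the unconditional envelope $-h(W)\le D\le 2\lambda\snorm{\beta^\ast}_1^2$ and integrate against the Gaussian tail of $W$ via the substitution $\sqrt t=h(w)$---is the honest way to handle $\{W>\lambda\}$, and it buys a proof that does not rely on the step the paper leaves unjustified.

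One caution: your assertion that ``tracking constants gives $C=8$'' does not survive the computation. With $a=2\sigma\snorm{\beta^\ast}_1$ and $b\lambda=\sigma^2/2$, the dominant piece of $p\int_\lambda^\infty (h^2)'(w)e^{-nw^2/2}\,\d w$ is $p^{1-M}n^{-1}(2a^2+6ab\lambda+4b^2\lambda^2)=2p^{1-M}n^{-1}(a+b\lambda)(a+2b\lambda)$; since $a+b\lambda\le 2V$ and $a+2b\lambda\le 2V+\sigma^2/2\le 4V$ (both sharp at $\beta^\ast=0$), this yields $16V^2p^{1-M}/n$, not $8$, and the lower-order terms add further. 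Your approach therefore proves the theorem with a larger absolute constant in front of $p^{1-M}/\log p$, which is harmless for Theorem~\ref{thm:msebound_olasso} but does not reproduce the stated bound exactly. The closing remark about sub-Gaussian extensions belongs to Remark~\ref{rem:polynomialmoment}, not to this theorem.
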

\begin{proof}
  For any $M > 1$, take $L = ( M - 1 ) \log p$ in Corollary \ref{cor:hpbound}.
  Denote $X_n = ( \check{\sigma}^2_\lambda - n^{-1} \snorm{\varepsilon}^2 )^2$, and 
  $r_n = 8 \max( \norm{\beta^\ast}_1^2, \sigma \norm{\beta^\ast}_1 + 4^{-1} \sigma^2 )^2 n^{-1} \log p$.
  Then we have
  \begin{align}
    \Prob \left( X_n > M r_n\right) \leq e^{-\left( M - 1 \right) \log p}.
    \nonumber
  \end{align}
  So
    \begin{align}
    \E \left( \frac{X_n}{r_n} \right) &= \int_{0}^\infty \Prob\left( \frac{X_n}{r_n} > t \right) \d t
    = \int_{0}^{M} \Prob\left( \frac{X_n}{r_n} > t \right) \d t + \int_{M}^\infty \Prob\left( \frac{X_n}{r_n} > t \right) \d t
    \nonumber\\
    &\leq M + \int_{M}^\infty e^{-\left( t - 1 \right) \log p} \d t
    = M + \frac{p^{1 - M}}{\log p},
    \nonumber
  \end{align}
  and the expectation bound follows.
\end{proof}
Now we are ready to present the proof of Theorem \ref{thm:msebound_olasso}.
Since $\sigma^{-2} \norm{\varepsilon}^2_2 \sim \chi^2(n)$, we have 
  \begin{align}
    \E\left( \frac{1}{n} \norm{\varepsilon}^2_2 \right) = \sigma^2, \qquad \Var\left( \frac{1}{n} \norm{\varepsilon}^2_2 \right) = \frac{2\sigma^4}{n},
    \nonumber
  \end{align}
  Therefore, 
  \begin{align}
    &\E \left\{ \left( \check{\sigma}^2_\lambda - \sigma^2 \right)^2 \right\} = 
    \E \left\{ \left( \check{\sigma}^2_\lambda - \frac{1}{n}\norm{\varepsilon}^2_2 + \frac{1}{n}\norm{\varepsilon}^2_2 - \sigma^2 \right)^2 \right\}
    \nonumber\\
    & = \E \left\{ \left( \check{\sigma}^2_\lambda - \frac{1}{n}\norm{\varepsilon}^2_2 \right)^2\right\} + \E \left\{ \left(\frac{1}{n}\norm{\varepsilon}^2_2 - \sigma^2 \right)^2 \right\}
    + 2 \E \left\{ \left( \check{\sigma}^2_\lambda - \frac{1}{n}\norm{\varepsilon}^2_2 \right)\left( \frac{1}{n}\norm{\varepsilon}^2_2 - \sigma^2 \right) \right\}
    \nonumber\\
    & \leq \E \left\{ \left( \check{\sigma}^2_\lambda - \frac{1}{n}\norm{\varepsilon}^2_2 \right)^2\right\} + \Var \left(\frac{1}{n}\norm{\varepsilon}^2_2 \right)
    + 2 \left\{ \Var \left( \check{\sigma}^2_\lambda - \frac{1}{n}\norm{\varepsilon}^2_2 \right) \Var \left( \frac{1}{n}\norm{\varepsilon}^2_2 \right) \right\}^{1/2}
    \nonumber\\
    & \leq \E \left\{ \left( \check{\sigma}^2_\lambda - \frac{1}{n}\norm{\varepsilon}^2_2 \right)^2\right\} + \Var \left(\frac{1}{n}\norm{\varepsilon}^2_2 \right)
    + 2 \left[ \E \left\{ \left( \check{\sigma}^2_\lambda - \frac{1}{n}\norm{\varepsilon}^2_2 \right)^2 \right\} \Var \left( \frac{1}{n}\norm{\varepsilon}^2_2 \right) \right]^{1/2}
    \nonumber\\
    & = \left[ \left[ \E \left\{ \left( \check{\sigma}^2_\lambda - \frac{1}{n}\norm{\varepsilon}^2_2 \right)^2 \right\} \right]^{1/2} + \left\{ \Var \left(\frac{1}{n}\norm{\varepsilon}^2_2 \right) \right\}^{1/2} \right]^2
    \nonumber\\
    & \leq \left[ \left\{ 8 \left( M + \frac{p^{1 - M}}{\log p} \right) \right\}^{1/2} \max\left\{ \norm{\beta^\ast}_1^2, \sigma^2 \left( \frac{\norm{\beta^\ast}_1}{\sigma} + \frac{1}{4} \right) \right\} \left( \frac{\log p}{n} \right)^{1/2} + \sigma^2 \left( \frac{2}{n} \right)^{1/2} \right]^2,
    \nonumber
  \end{align}
  where the last inequality holds from Theorem \ref{thm:expbound}.

  \section{Proof of Remark \ref{rem:polynomialmoment}}
  For the independent zero-mean noise $\varepsilon_i$ with variance $\sigma^2$ and bounded $m$-th order moment ($m = 3, 4, \dots$)
  \begin{align}
    \E |\varepsilon_i|^m \leq \frac{m!}{2} K^{m - 2}
    \nonumber
  \end{align}
  for some constant $K > 0$, a Bernstein's type inequality \citep[Lemma 14.13]{buhlmann2011statistics} implies that
  \begin{align}
    \Prob\left[ \max_{1 \leq j \leq p} \frac{1}{n\sigma} \norm{\X_j^T \varepsilon}_\infty \geq \frac{2K \log p}{n} + 2 \left\{ \frac{\log(2p)}{n} \right\}^{1/2} \right] \leq \frac{1}{p}.
    \nonumber
  \end{align}
  Then the proof of Corollary \ref{cor:hpbound} goes through.

  \section{Proof of Proposition \ref{prop:slow_rate_naive} and Proposition \ref{prop:slow_rate_sqrt}}
  The following lemma gives a general result on the estimation error of $\hat{\sigma}^2$ of the form \eqref{est:naive} in the paper based on $\hat{\beta}$:
  \begin{lemma} \label{lemma:basic}
    \begin{align}
      \left| \hat{\sigma}^2 - \frac{1}{n} \snorm{\varepsilon}_2^2 \right| \leq \frac{1}{n} \norm{\X \hat{\beta} - \X \bt}_2^2 + \frac{2}{n} \norm{\X^T \varepsilon}_\infty \left( \snorm{\bt}_1 + \snorm{\hat{\beta}}_1 \right)
      \nonumber
    \end{align}
  \end{lemma}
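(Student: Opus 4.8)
\textbf{Proof proposal for Lemma \ref{lemma:basic}.}

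The plan is to expand the residual sum of squares directly using the model \eqref{est:model} and then bound the resulting cross term by H\"older's inequality. First I would substitute $\y = \X\bt + \varepsilon$ into $\hat\sigma^2 = n^{-1}\snorm{\y - \X\hat\beta}_2^2$ to write $\y - \X\hat\beta = \varepsilon - \X(\hat\beta - \bt)$, and expand the square:
\begin{align}
  \hat\sigma^2 = \frac{1}{n}\snorm{\varepsilon}_2^2 - \frac{2}{n}\varepsilon^T\X(\hat\beta - \bt) + \frac{1}{n}\snorm{\X(\hat\beta - \bt)}_2^2.
  \nonumber
\end{align}
Rearranging gives $\hat\sigma^2 - n^{-1}\snorm{\varepsilon}_2^2 = n^{-1}\snorm{\X\hat\beta - \X\bt}_2^2 - 2n^{-1}\varepsilon^T\X(\hat\beta - \bt)$.

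Next I would control the cross term. By H\"older's inequality, $|\varepsilon^T\X(\hat\beta - \bt)| = |(\X^T\varepsilon)^T(\hat\beta - \bt)| \le \snorm{\X^T\varepsilon}_\infty\snorm{\hat\beta - \bt}_1$, and then the triangle inequality gives $\snorm{\hat\beta - \bt}_1 \le \snorm{\hat\beta}_1 + \snorm{\bt}_1$. Combining this with the identity above and applying the triangle inequality to $|\hat\sigma^2 - n^{-1}\snorm{\varepsilon}_2^2|$ (noting $n^{-1}\snorm{\X\hat\beta - \X\bt}_2^2 \ge 0$) yields the claimed bound.

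There is essentially no obstacle here: the statement is an elementary deterministic identity-plus-inequality, with no statistical input or optimality conditions required. The only thing to be careful about is keeping track of the signs so that both the squared prediction-error term and the cross term land on the correct side of the absolute value, which is immediate once the expansion is written out.
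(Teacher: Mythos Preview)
Your proposal is correct and is essentially identical to the paper's own proof: expand $\hat\sigma^2 = n^{-1}\snorm{\varepsilon + \X\bt - \X\hat\beta}_2^2$, isolate the cross term, and bound it via H\"older and the triangle inequality. (In fact your expansion has the correct sign on the cross term, whereas the paper's displayed identity contains a harmless sign typo that disappears once absolute values are taken.)
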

  \begin{proof}
    First by definition
    \begin{align}
      \hat{\sigma}^2 = \frac{1}{n} \norm{\y - \X \hat{\beta}}_2^2 = \frac{1}{n}\norm{\varepsilon + \X \bt - \X \hat{\beta}}_2^2 = \frac{1}{n} \snorm{\varepsilon}_2^2 + \frac{1}{n} \norm{\X \hat{\beta} - \X \bt}_2^2 + \frac{2}{n} \varepsilon^T \X \left( \hat{\beta} - \bt \right).
      \nonumber
    \end{align}
    Note that
    \begin{align}
      \left| \varepsilon^T \X \left( \hat{\beta} - \bt \right)\right| \leq \snorm{\X^T \varepsilon}_\infty \snorm{\hat{\beta} - \bt}_1,
      \nonumber
    \end{align}
    and the result follows.
  \end{proof}
  \subsection{Slow rate bound for the naive estimator of $\sigma^2$}
  We now give the proof of Proposition \ref{prop:slow_rate_naive}. 
    From the basic inequality
    \begin{align}
      \frac{1}{n} \norm{\y - \X \be}_2^2 + 2 \lambda \snorm{\be}_1 \leq \frac{1}{n} \norm{\y - \X \bt}_2^2 + 2 \lambda \snorm{\bt}_1,
      \nonumber
    \end{align}
    which implies that 
    \begin{align}
      \frac{1}{n} \norm{\X \be - \X \bt}_2^2 + 2 \lambda \snorm{\be}_1  &\leq \frac{2}{n} \left| \varepsilon^T \X \left( \be - \bt \right) \right| + 2 \lambda \snorm{\bt}_1  \nonumber\\
      &\leq \frac{2}{n} \norm{\X^T \varepsilon}_\infty \norm{\be - \bt}_1 + 2\lambda \snorm{\bt}_1.
      \nonumber
    \end{align}
    We thank Irina Gaynanova \citep{irina18commu} for showing us the technique of taking $\lambda$ to be twice its usual size. For $\lambda \geq 2 n^{-1} \snorm{\X^T\varepsilon}_\infty $, we have that
    \begin{align}
      \frac{1}{n} \norm{\X \be - \X \bt}_2^2 + 2\lambda \snorm{\be}_1 \leq 
      \lambda \snorm{\be - \bt}_1 + 2 \lambda \snorm{\bt}_1 \leq \lambda \snorm{\be}_1 + 3 \lambda \snorm{\bt}_1,
      \nonumber
    \end{align}
    so $n^{-1} \snorm{\X \be - \X \bt}_2^2 + \lambda \snorm{\be}_1 \leq 3 \lambda \snorm{\bt}_1$.
    So by Lemma \ref{lemma:basic} we have
    \begin{align}
      \left| \se - \frac{1}{n} \snorm{\varepsilon}_2^2 \right|    
      &\leq \frac{1}{n} \norm{\X \be - \X \bt}_2^2 + \frac{2}{n} \norm{\X^T \varepsilon}_\infty \left( \snorm{\bt}_1 + \snorm{\be}_1 \right) \nonumber\\
      &\leq \frac{1}{n} \norm{\X \be - \X \bt}_2^2 + \lambda \snorm{\bt}_1 + \lambda \snorm{\be}_1
      \leq 4 \lambda \snorm{\bt}_1.
      \nonumber
    \end{align}
    Finally, taking $\lambda = 2 \sigma \{n^{-1}(2 \log p + 2 L)\}^{1/2}$ with $L = \log p$, the result follows from Lemma \ref{lem:lambda}.

  \subsection{Slow rate bound for the square-root/scaled lasso estimator of $\sigma^2$}

  As shown in \citet{2016arXiv160800624L} (proof of Lemma A.3), we note that with probability $1$, $\snorm{y - X \bsqrt}_2 > 0$ for $\lambda > 0$. 
  So the first order optimality condition of the square-root/scaled lasso is
  \begin{align}
    \frac{1}{n^{1/2}} \frac{-\X^T\left( \y - \X \bsqrt \right)}{\norm{\y - \X \bsqrt}_2} + \lambda \hat{g} = 0
      \nonumber
    \end{align}
    for some $\hat{g} \in \partial\snorm{\bsqrt}_1$. Taking an inner product with $\bsqrt - \bt$ on both sides, we have
    \begin{align}
      -\frac{1}{n^{1/2}}\frac{\left( \bsqrt - \bt \right)^T \X^T \left( \y - \X \bsqrt \right)}{\norm{\y - \X \bsqrt}_2} + \lambda \hat{g}^T \left( \bsqrt - \bt \right) = 0,
      \nonumber
    \end{align}
    which implies that
    \begin{align}
      \frac{\norm{\X\left( \bt - \bsqrt \right)}_2^2}{n^{1/2} \norm{\y - \X \bsqrt}_2} - \frac{\left( \bsqrt - \bt \right)^T \X^T \varepsilon}{n^{1/2} \norm{\y - \X \bsqrt}_2} \leq \lambda \hat{g}^T \left( \bt - \bsqrt \right) \leq \lambda \snorm{\bt}_1 - \lambda \snorm{\bsqrt}_1,
      \nonumber
    \end{align}
    and thus
    \begin{align}
      &\frac{1}{n} \norm{\X \left( \bt - \bsqrt \right)}_2^2 \leq \frac{1}{n} \left|\varepsilon^T \X \left( \bsqrt - \bt \right) \right|
      + \frac{\lambda}{n^{1/2}} \norm{\y - \X \bsqrt}_2 \left( \snorm{\bt}_1 - \snorm{\bsqrt}_1 \right) \nonumber\\
      \leq& \frac{1}{n} \norm{\X^T \varepsilon}_\infty \norm{\bsqrt - \bt}_1 + \frac{\lambda}{n^{1/2}} \norm{\y - \X \bsqrt}_2 \left( \snorm{\bt}_1 - \snorm{\bsqrt}_1 \right) \nonumber \\
      \leq& \frac{1}{n} \norm{\X^T \varepsilon}_\infty \left( \snorm{\bsqrt}_1 + \snorm{\bt}_1 \right) + \frac{\lambda}{n^{1/2}} \norm{\y - \X \bsqrt}_2 \left( \snorm{\bt}_1 - \snorm{\bsqrt}_1 \right) \nonumber \\
      \leq& \left( \frac{1}{n} \norm{\X^T \varepsilon}_\infty + \frac{\lambda}{n^{1/2}} \norm{\y - \X \bsqrt}_2 \right) \snorm{\bt}_1 + \left( \frac{1}{n} \norm{\X^T \varepsilon}_\infty - \frac{\lambda}{n^{1/2}} \norm{y - X \bsqrt}_2 \right) \snorm{\bsqrt}_1.
      \nonumber
    \end{align}
    Taking $\lambda = 3 n^{-1/2} \snorm{y - X \bsqrt}_2^{-1} \snorm{X^T \varepsilon}_\infty$, which is 3 times what is suggested in \citet{2016arXiv160800624L}, we have
    \begin{align}
      \frac{1}{n} \norm{\X \left( \bt - \bsqrt \right)}_2^2 \leq \frac{4 \norm{\X^T \varepsilon}_\infty}{n} \snorm{\bt}_1 - \frac{2 \snorm{X^T\varepsilon}_\infty}{n} \snorm{\bsqrt}_1.
      \nonumber
    \end{align}
    By Lemma \ref{lemma:basic}
    \begin{align}
      \left| \ssqrt - \frac{1}{n} \snorm{\varepsilon}_2^2 \right| &\leq \frac{1}{n} \norm{\X \bsqrt - \X \bt}_2^2 + \frac{2}{n} \norm{\X^T \varepsilon}_\infty \left( \snorm{\bt}_1 + \snorm{\bsqrt}_1 \right) \nonumber\\
      &\leq  \frac{6}{n} \norm{\X^T \varepsilon}_\infty \snorm{\bt}_1.
      \nonumber
    \end{align}
    The result then follows from Lemma \ref{lem:lambda} by taking $L = \log p$.

\section{Proof of Proposition \ref{thm:scaleequivariant}: scale-equivariance of the organic lasso}
\begin{proof}
  Suppose $\check{\beta}_\lambda \left( \y\right)$ is a solution to the organic lasso, where we write out explicitly the dependence of the solution on the response $\y$. Then using notation from previous section,
  \begin{align}
    L\left( t \check{\beta}_\lambda \left( \y \right) | t\y, \lambda\right) &= \frac{1}{n} \norm{t\y - t\X \check{\beta}_\lambda \left( \y\right)}^2_2 + 2\lambda\norm{t\check{\beta}_\lambda \left( \y\right)}_1^2 
    \nonumber\\
    & = t^2 L\left( \check{\beta}_\lambda \left( \y\right) | \y, \lambda \right).
    \nonumber
  \end{align}
  This implies that $t\check{\beta}_\lambda \left( \y\right)$ is a solution to the problem with response $t\y$, i.e., $\check{\beta}_\lambda \left( t\y\right) = t\check{\beta}\left( \y\right)$.
  Consequently, 
  \begin{align}
    \check{\sigma}^2_\lambda \left( t\y \right) &= \min_\beta L\left( \beta_\lambda | t \y, \lambda \right) \nonumber\\
    &= L\left( t\check{\beta}_\lambda \left( \y, \lambda \right) | t\y, \lambda\right) 
    = t^2 L \left( \check{\beta}_\lambda \left( \y, \lambda \right) | \y, \lambda \right) 
    = t^2\check{\sigma}^2_\lambda \left( \y, \lambda \right),
    \nonumber
  \end{align}
  which establishes the theorem.
\end{proof}

\section{Proof of Theorem \ref{thm:olasso_predict}} \label{proof:olasso_predict}
\begin{proof}
  We start from the basic inequality
  \begin{align}
    \frac{1}{n} \norm{\y - \X \check{\beta}_\lambda}_2^2 + 2\lambda \norm{\check{\beta}_\lambda}_1^2 \leq \frac{1}{n} \norm{\y - \X \beta^\ast}_2^2 + 2 \lambda \norm{\beta^\ast}_1^2,
    \nonumber
  \end{align}
  which leads to 
  \begin{align}
    \frac{1}{n} \norm{\X \check{\beta}_\lambda - \X \beta^\ast}_2^2 &\leq 2 \left( \frac{\X^T \varepsilon}{n} \right)^T \left( \check{\beta}_\lambda - \beta^\ast \right) + 2 \lambda \left( \norm{\beta^\ast}_1^2 - \norm{\check{\beta}_\lambda}_1^2 \right)
    \nonumber\\
    &\leq 2 \norm{\frac{\X^T \varepsilon}{n}}_\infty \norm{\check{\beta}_\lambda - \beta^\ast}_1 + 2 \lambda \left( \norm{\beta^\ast}_1^2 - \norm{\check{\beta}_\lambda}_1^2 \right).
    \nonumber
  \end{align}
  If 
  \[
    \norm{\frac{\X^T \varepsilon}{n}}_\infty \leq \sigma \lambda,
  \]
  then
  \begin{align}
    \frac{1}{n} \norm{\X \check{\beta}_\lambda - \X \beta^\ast}_2^2 &\leq 2 \sigma \lambda \norm{\check{\beta}_\lambda - \beta^\ast}_1 + 2 \lambda \left( \norm{\beta^\ast}_1^2 - \norm{\check{\beta}_\lambda}_1^2 \right)
    \nonumber\\
    &\leq \sigma^2 \lambda + \lambda \norm{\check{\beta}_\lambda - \beta^\ast}_1^2 + 2 \lambda \left( \norm{\beta^\ast}_1^2 - \norm{\check{\beta}_\lambda}_1^2 \right)
    \nonumber\\
    &\leq \sigma^2 \lambda + \lambda \left( \norm{\check{\beta}_\lambda}_1 + \norm{\beta^\ast}_1 \right)^2 + 2 \lambda \left( \norm{\beta^\ast}_1^2 - \norm{\check{\beta}_\lambda}_1^2 \right)
    \nonumber\\
    &\leq \sigma^2 \lambda + 2 \lambda \left( \norm{\check{\beta}_\lambda}_1^2 + \norm{\beta^\ast}_1^2 \right) + 2 \lambda \left( \norm{\beta^\ast}_1^2 - \norm{\check{\beta}_\lambda}_1^2 \right)
    \nonumber\\
    &= \sigma^2 \lambda + 4 \lambda \norm{\beta^\ast}_1^2.
    \nonumber
  \end{align}
  The result then holds from Lemma \ref{lem:lambda}.
\end{proof}

\section{Mapping between the paths of the natural and organic lasso} \label{proof:equivalence}

In this section, we draw a connection between the natural lasso and the organic lasso estimates of $\beta^\ast$.
\begin{theorem} \label{thm:equivalence}
  Letting $\hat{\beta}_s$ and $\check{\beta}_t$ denote the lasso and organic
  lasso estimates of $\beta^\ast$ with tuning parameters $s$ and $t$,
  \begin{align}
    \hat{\beta}_\lambda = \check{\beta}_{( 2\snorm{\hat{\beta}_\lambda}_1 )^{-1}\lambda}, \qquad 
    \check{\beta}_\nu = \hat{\beta}_{2 \nu \snorm{\check{\beta}_\nu}_1}.
  \end{align}
\end{theorem}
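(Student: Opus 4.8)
The plan is to compare the first-order optimality conditions of the two convex programs \eqref{est:lasso} and \eqref{est:beta_olasso} and to observe that they describe the same stationarity requirement once the regularization parameters are rescaled by the $\ell_1$-norm of the candidate solution. Write $f(\beta) = n^{-1}\norm{\y - \X\beta}_2^2$, which is convex and differentiable with $\nabla f(\beta) = -2n^{-1}\X^T(\y - \X\beta)$. Since $f$, $\norm{\cdot}_1$, and $\norm{\cdot}_1^2$ are all convex, a vector is a global minimizer of either penalized objective if and only if $\0$ lies in the subdifferential of that objective, so it suffices to match these subgradient conditions.

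First I would record that $\hat\beta_\lambda$ solves the lasso \eqref{est:lasso} iff $\0 \in \nabla f(\hat\beta_\lambda) + 2\lambda\,\partial\norm{\hat\beta_\lambda}_1$. Next I would compute $\partial(\norm{\cdot}_1^2)$: because $t \mapsto t^2$ is differentiable and nondecreasing on $[0,\infty)$ and $\norm{\cdot}_1$ is convex, the subdifferential chain rule yields $\partial(\norm{\beta}_1^2) = 2\norm{\beta}_1\,\partial\norm{\beta}_1$, with the degenerate case $\norm{\beta}_1 = 0$ checked by hand (both sides equal $\{\0\}$). Hence $\check\beta_\nu$ solves \eqref{est:beta_olasso} iff $\0 \in \nabla f(\check\beta_\nu) + 4\nu\norm{\check\beta_\nu}_1\,\partial\norm{\check\beta_\nu}_1$. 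Comparing the two conditions at a common point $\beta$, they coincide exactly when $2\lambda = 4\nu\norm{\beta}_1$, i.e. $\nu = \lambda/(2\norm{\beta}_1)$. Both claimed identities then follow: given a lasso solution $\hat\beta_\lambda$, setting $\nu = \lambda/(2\norm{\hat\beta_\lambda}_1)$ makes the lasso optimality condition at $\hat\beta_\lambda$ identical to the organic-lasso optimality condition at $\hat\beta_\lambda$ with parameter $\nu$, so $\hat\beta_\lambda = \check\beta_{(2\norm{\hat\beta_\lambda}_1)^{-1}\lambda}$; conversely, given an organic-lasso solution $\check\beta_\nu$, setting $\lambda = 2\nu\norm{\check\beta_\nu}_1$ turns its optimality condition into the lasso condition at that $\lambda$, giving $\check\beta_\nu = \hat\beta_{2\nu\norm{\check\beta_\nu}_1}$.

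To make the statement literally meaningful I would also note that, although neither problem need have a unique minimizer when $p > n$, the fitted vector $\X\hat\beta_\lambda$ (resp. $\X\check\beta_\nu$) and the optimal objective value are unique, which forces $\norm{\hat\beta_\lambda}_1$ (resp. $\norm{\check\beta_\nu}_1$) to be the same for every solution; so the rescaled parameters in the theorem are well defined. The explicit computations of $\nabla f$ and of the subdifferentials are routine; the points requiring genuine care are the chain rule for $\partial(\norm{\cdot}_1^2)$, the well-definedness of $\norm{\hat\beta_\lambda}_1$ and $\norm{\check\beta_\nu}_1$, and the degenerate case $\hat\beta_\lambda = \0$ (occurring when $\lambda \ge n^{-1}\norm{\X^T\y}_\infty$), where $\nu = \lambda/(2\norm{\hat\beta_\lambda}_1)$ is undefined and must be excluded. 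I expect the uniqueness-of-$\ell_1$-norm bookkeeping to be the main obstacle, since without it the maps in the theorem are not functions.
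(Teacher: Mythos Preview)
Your proposal is correct and follows essentially the same approach as the paper: both arguments compare the first-order optimality conditions of \eqref{est:lasso} and \eqref{est:beta_olasso}, use the chain rule $\partial(\norm{\beta}_1^2) = 2\norm{\beta}_1\,\partial\norm{\beta}_1$, and read off the correspondence $\lambda = 2\nu\norm{\beta}_1$. Your treatment is in fact more careful than the paper's, which does not explicitly address the well-definedness of $\norm{\hat\beta_\lambda}_1$ under non-uniqueness or the degenerate case $\hat\beta_\lambda = \0$.
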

This result implies that one can start with a lasso solution $\hat{\beta}_\lambda$ with tuning parameter $\lambda$,
and then report a solution to the organic lasso with tuning parameter $( 2 \snorm{\hat{\beta}_\lambda}_1)^{-1} \lambda$.
Likewise, an organic lasso solution $\check{\beta}_\nu$ is equivalent to a standard lasso solution with tuning parameter
$2 \nu \snorm{\check{\beta}_\nu}_1$. This equivalence is also observed
in \citet{lorbert2010exploiting} that considers a more general
penalty.

Although the methods' paths are the same, this does not imply that the
cross-validated methods will be the same.
In $K$-fold cross-validation, the natural lasso estimator is evaluated on $K$
differing datasets for a fixed value of $\lambda$.
A fixed tuning parameter $\lambda$ for the natural lasso over multiple datasets corresponds to running the organic lasso with a different $\lambda$ on each fold.
Thus, the two methods in fact have different cross-validation performance.

\begin{proof}
  Let $\hat{\beta}_\lambda$ be a solution to \eqref{est:lasso} with tuning parameter $\lambda$, and $\tilde{\beta}_\nu$ be a solution
  to \eqref{est:beta_olasso} with tuning parameter $\nu$, then they satisfy optimality conditions
  \begin{align}
    -\frac{1}{n} \X^T \left( \y - \X \hat{\beta}_\lambda \right) + \lambda \hat{g} = \0 \qquad \text{where} \qquad \hat{g} \in \partial \left( \norm{\hat{\beta}_\lambda}_1 \right),
    \label{eq:opt_lasso}\\
    -\frac{1}{n} \X^T \left( \y - \X \tilde{\beta}_\nu \right) + 2 \nu \norm{\tilde{\beta}_\nu}_1 \tilde{g} = \0 \qquad \text{where} \qquad \tilde{g} \in \partial \left( \norm{\tilde{\beta}_\nu}_1 \right).
    \label{eq:opt_olasso}
  \end{align}

  If $\hat{\beta}_\lambda = \tilde{\beta}_\nu$, then simply comparing \eqref{eq:opt_lasso} and \eqref{eq:opt_olasso} we have that
  $\lambda = 2 \nu \snorm{\tilde{\beta}_\nu}_1$, and $\nu = ( 2 \snorm{\hat{\beta}_\lambda}_1)^{-1} \lambda$.

  Now for $\hat{\beta}_\lambda$ that satisfies \eqref{eq:opt_lasso}, by plugging $\lambda = 2\nu \snorm{\hat{\beta}_\lambda}_1$,
  we have that $\hat{\beta}_\lambda$ satisfies \eqref{eq:opt_olasso},
  i.e., $\tilde{\beta}_\nu = \hat{\beta}_\lambda$ where $\lambda = 2\nu \snorm{\hat{\beta}_\lambda}_1$.
  Following the same argument, for $\tilde{\beta}_\nu$ that satisfies \eqref{eq:opt_olasso},
  we take $\nu = (2 \snorm{\tilde{\beta}_\nu}_1)^{-1} \lambda$, and find that $\tilde{\beta}_\nu$ satisfies \eqref{eq:opt_lasso}.
  This implies that $\hat{\beta}_\lambda = \tilde{\beta}_\nu$, where $\nu = (2 \snorm{\tilde{\beta}_\nu}_1)^{-1} \lambda$.
\end{proof}

\section{Fast rate in prediction error of the squared lasso} \label{proof:fastrate}
Recall the squared lasso estimate of $\beta^\ast$:
\begin{align}
  \ob \in \argmin_{\beta \in \real^p} \frac{1}{n} \norm{\y - \X \beta}_2^2 + 2 \lambda \norm{\beta}_1^2.
  \label{eq:olasso}
\end{align}
It is well known that the fast rate is built on the compatibility condition of the lasso problem. Let $\S = \mathrm{supp}(\tb)$, i.e., the support of the true regression coefficient $\beta^\ast$, the compatibility condition of the squared lasso problem requires that for all $\mu \in \real^p$ such that $\snorm{\mu_{\S^c}}_1 - \sigma \leq 3 \snorm{\mu_\S}_1$, 
\begin{align}
  \snorm{\mu_\S}_1 + \frac{1}{4} \sigma \leq |\S|^{1/2} \frac{\norm{\X \mu}_2}{n^{1/2}\phi_0}.
  \label{eq:compatability_olasso}
\end{align}

The following theorem establishes that the fast rate prediction error and an estimation error rate of $\check{\beta}$ in \eqref{eq:olasso} can be attained with a value of $\lambda$ that does not depend on any unknown parameters.
\begin{theorem}
  Suppose that each column $\X_j$ of the matrix $\X \in \real^{n \times p}$ has been scaled so that $\snorm{\X_j}^2_2 = n$ for all $j = 1, \ldots, p$, and $\varepsilon \sim N \left( \0, \sigma^2 I_n \right)$. If compatibility condition \eqref{eq:compatability_olasso} holds, then for any $L > 0$, the solution $\ob$ in \eqref{eq:olasso} with tuning parameter 
  \begin{align}
    \lambda = \left( \frac{2 \log p + 2 L}{n} \right)^{1/2}
    \label{eq:lam_olasso}
  \end{align}
  attains the following estimation error rate and fast rate bound in prediction with probability greater than $1 - e^{-L}$:
  \begin{align}
    &\frac{1}{2n} \norm{\X \ob - \X \tb}_2^2
    \leq \frac{64 \max\left(\norm{\tb}_1, \sigma \right)^2 |\S| \left( \log p + L \right)}{\phi_0^2 n};
    \nonumber \\
    &\norm{\tb - \ob}_1
    \leq\frac{16 \max\left(\norm{\tb}_1, \sigma  \right) |\S|}{\phi_0^2} \left( \frac{2 \log p + 2 L}{n} \right)^{1/2}.
    \nonumber
  \end{align}
\end{theorem}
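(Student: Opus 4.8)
The plan is to follow the standard recipe for a fast-rate bound --- basic inequality, cone condition, compatibility, Cauchy--Schwarz --- but with two modifications forced by the fact that the penalty $\snorm{\beta}_1^2$ in \eqref{eq:olasso} is not a norm: an a priori $\ell_1$ bound on $\ob$, and a correspondingly $\sigma$-shifted cone/compatibility condition, which is exactly why \eqref{eq:compatability_olasso} is stated with the extra $\sigma$ terms.

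First I would start from the basic inequality $n^{-1}\snorm{\y - \X\ob}_2^2 + 2\lambda\snorm{\ob}_1^2 \le n^{-1}\snorm{\y - \X\tb}_2^2 + 2\lambda\snorm{\tb}_1^2$, substitute $\y = \X\tb + \varepsilon$, and rearrange to $n^{-1}\snorm{\X(\ob-\tb)}_2^2 \le 2(n^{-1}\X^T\varepsilon)^T(\ob-\tb) + 2\lambda(\snorm{\tb}_1^2 - \snorm{\ob}_1^2)$. I would then restrict to the event $\A = \{ n^{-1}\snorm{\X^T\varepsilon}_\infty \le \sigma\lambda \}$, which by Lemma \ref{lem:lambda} with $\lambda$ as in \eqref{eq:lam_olasso} has probability at least $1 - e^{-L}$; on $\A$, H\"older bounds the noise term by $2\sigma\lambda\snorm{\ob - \tb}_1$, exactly as in the proof of Theorem \ref{thm:olasso_predict}.

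The new first ingredient is an a priori bound on $\snorm{\ob}_1$: using $\snorm{\ob-\tb}_1 \le \snorm{\ob}_1 + \snorm{\tb}_1$ and discarding the nonnegative term $n^{-1}\snorm{\X(\ob-\tb)}_2^2$, the displayed inequality becomes a quadratic inequality in $\snorm{\ob}_1$ whose discriminant is a perfect square, yielding $\snorm{\ob}_1 \le \snorm{\tb}_1 + \sigma \le 2R$ with $R = \max(\snorm{\tb}_1,\sigma)$, hence $\snorm{\ob}_1 + \snorm{\tb}_1 \le 3R$. The second ingredient is converting the penalty gap: writing $h = \ob - \tb$, decomposability of $\snorm{\cdot}_1$ at $\tb$ gives $\snorm{\tb}_1 - \snorm{\ob}_1 \le \snorm{h_\S}_1 - \snorm{h_{\S^c}}_1$, so that $\snorm{\tb}_1^2 - \snorm{\ob}_1^2 \le 3R(\snorm{h_\S}_1 - \snorm{h_{\S^c}}_1)_+$; the case $\snorm{h_\S}_1 < \snorm{h_{\S^c}}_1$, where the right side is set to zero, is handled by the a priori bound, which in that case forces $\snorm{h_{\S^c}}_1 \le \snorm{h_\S}_1 + \sigma$. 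Substituting this, together with $2\sigma\lambda\snorm{h}_1 \le 2R\lambda\snorm{h}_1$, back into the basic inequality produces an inequality of the form $n^{-1}\snorm{\X h}_2^2 + c_1\lambda R\snorm{h_{\S^c}}_1 \le c_2\lambda R\snorm{h_\S}_1 + c_3\lambda R\sigma$, from which I read off the cone condition $\snorm{h_{\S^c}}_1 - \sigma \le 3\snorm{h_\S}_1$ (so \eqref{eq:compatability_olasso} applies to $\mu = h$) and the energy bound $n^{-1}\snorm{\X h}_2^2 \le c_2\lambda R\snorm{h_\S}_1 + c_3\lambda R\sigma$.

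To finish, I would invoke \eqref{eq:compatability_olasso}, which on the cone gives both $\snorm{h_\S}_1 \le |\S|^{1/2}\snorm{\X h}_2/(n^{1/2}\phi_0)$ and $\sigma \le 4|\S|^{1/2}\snorm{\X h}_2/(n^{1/2}\phi_0)$; inserting these into the energy bound and cancelling one factor of $n^{-1/2}\snorm{\X h}_2$ yields $n^{-1/2}\snorm{\X h}_2 \le c\, R\lambda|\S|^{1/2}/\phi_0$, and squaring together with $\lambda^2 = 2(\log p + L)/n$ gives the stated prediction bound after tracking constants. For the $\ell_1$ estimate, the cone condition gives $\snorm{h}_1 \le 4\snorm{h_\S}_1 + \sigma$, and both $\snorm{h_\S}_1$ and $\sigma$ are $O(\lambda R|\S|/\phi_0^2)$ from the previous step, so absorbing the additive $\sigma$ into $R$ produces the claimed rate. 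The main obstacle is not a single deep step but the bookkeeping forced by the squared penalty: establishing $\snorm{\ob}_1 \lesssim R$ and then threading it through the penalty gap so that the resulting cone matches exactly the $\sigma$-shifted compatibility condition \eqref{eq:compatability_olasso} --- including the benign sign case $\snorm{h_\S}_1 < \snorm{h_{\S^c}}_1$, where the a priori bound rather than the basic inequality supplies the cone --- is where the work lies.
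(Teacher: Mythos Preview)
Your proposal is correct and follows the same overall skeleton as the paper --- basic inequality, restrict to the event $\{n^{-1}\snorm{\X^T\varepsilon}_\infty \le \sigma\lambda\}$, extract a cone condition matching \eqref{eq:compatability_olasso}, then apply compatibility and finish --- but the way you handle the squared penalty is genuinely different. The paper splits into two cases according to whether $\snorm{\tb}_1 \ge \sigma$ or $\snorm{\tb}_1 < \sigma$. In the first case it linearizes $\snorm{\tb}_1^2 - \snorm{\ob}_1^2$ via a subgradient of $\snorm{\cdot}_1^2$ at $\tb$ and uses $\sigma \le \snorm{\tb}_1$ to absorb the noise term; in the second case it introduces an auxiliary vector $\tg$ with the same support as $\tb$ but $\snorm{\tg}_1 = \sigma$, linearizes at $\tg$, and this is what produces the extra additive $\sigma$ in the cone. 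You instead derive the a priori bound $\snorm{\ob}_1 \le \snorm{\tb}_1 + \sigma$ directly from the basic inequality (the perfect-square discriminant is a nice observation) and then factor $\snorm{\tb}_1^2 - \snorm{\ob}_1^2 = (\snorm{\tb}_1 - \snorm{\ob}_1)(\snorm{\tb}_1 + \snorm{\ob}_1)$, bounding the second factor by $3R$. Your route avoids the auxiliary $\tg$ and the $\snorm{\tb}_1$--$\sigma$ case split, at the cost of a small case split on the sign of $\snorm{h_\S}_1 - \snorm{h_{\S^c}}_1$; the paper's route makes the origin of the $\sigma$-shift in \eqref{eq:compatability_olasso} more transparent. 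Both lead to the same rates; your constants come out slightly larger than the paper's, but that is immaterial.
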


\begin{proof}
  First by the optimality of $\ob$, we have
  \begin{align}
    \frac{1}{n} \norm{\y - \X \ob}_2^2 + 2\lambda \norm{\ob}_1^2 \leq \frac{1}{n} \norm{\y - \X \tb}_2^2 + 2 \lambda \norm{\tb}_1^2,
    \nonumber
  \end{align}
  which implies that
  \begin{align}
    \frac{1}{n} \norm{\X \ob - \X \tb}_2^2 \leq \frac{2}{n}\left( \ob - \tb \right)^T \X^T \varepsilon + 2 \lambda \norm{\tb}_1^2 - 2 \lambda \norm{\ob}_1^2.
    \label{eq:basic}
  \end{align}
  The following proof is considered in two cases:

  (1). When $\norm{\tb}_1 \geq \sigma$:
  Note that $\norm{\cdot}_1^2$ is convex and by chain rule, for any $g \in \partial(\norm{\beta^\ast}_1)$,
  \begin{align}
    \norm{\ob}_1^2 - \norm{\tb}_1^2 \geq 2 \norm{\beta^\ast}_1 g^T \left( \ob - \tb \right).
    \nonumber
  \end{align}
  For $j \in \S$, we have that $g_j = \sign(\tb_j)$. For any $j \in \S^C$, we let
  \begin{align}
    g_j = \sign\left( \ob_j - \tb_j \right) = \sign\left( \ob_j \right).
    \nonumber
  \end{align}
  Then $g$ is still a valid sub-differential of $\norm{\tb}_1$. Moreover, conditional on the event
  \begin{align}
    \T = \left\{ \frac{1}{n} \snorm{\X^T \varepsilon}_\infty \leq \lambda \sigma \right\},
    \nonumber
  \end{align}
  from \eqref{eq:basic} we have
  \begin{align}
    \frac{1}{n} \norm{\X \ob - \X \tb}_2^2 &\leq \frac{2}{n}\left( \ob - \tb \right)^T \X^T \varepsilon + 4 \lambda \norm{\tb}_1 g^T \left( \tb - \ob \right) \nonumber\\
    &= \frac{2}{n}\left( \ob - \tb \right)^T \X^T \varepsilon + 4 \lambda \norm{\tb}_1 g^T_{\S} \left( \tb_\S - \ob_\S \right) + 4 \lambda \norm{\tb}_1 g^T_{\S^C} \left( \tb_{\S^C} - \ob_{\S^C} \right)
    \nonumber \\
    &= \frac{2}{n}\left( \ob - \tb \right)^T \X^T \varepsilon + 4 \lambda \norm{\tb}_1 g^T_{\S} \left( \tb_\S - \ob_\S \right) - 4 \lambda \norm{\tb}_1 \norm{\tb_{\S^C} - \ob_{\S^C}}_1
    \nonumber\\
    &\leq \frac{2}{n}\left( \ob - \tb \right)^T \X^T \varepsilon + 4 \lambda \norm{\tb}_1 \norm{\tb_\S - \ob_\S}_1 - 4 \lambda \norm{\tb}_1 \norm{\tb_{\S^C} - \ob_{\S^C}}_1.
    \nonumber
  \end{align}
  Since $\sigma \leq \snorm{\tb}_1$ and $\T$ holds, we have that $n^{-1} \norm{\X^T \varepsilon}_\infty \leq \lambda \sigma \leq \lambda \norm{\tb}_1$, and thus
  \begin{align}
    \frac{1}{n} \norm{\X \ob - \X \tb}_2^2 &\leq 2 \lambda \norm{\tb}_1 \norm{\tb - \ob}_1 + 4 \lambda \norm{\tb}_1 \norm{\tb_S - \ob_S}_1 - 4 \lambda \norm{\tb}_1 \norm{\tb_{\S^C} - \ob_{\S^C}}_1 \nonumber\\
    & = 2 \lambda \norm{\tb}_1 \left( 3\norm{\tb_\S - \ob_\S}_1 - \norm{\tb_{\S^C} - \ob_{\S^C}}_1 \right).
    \nonumber
  \end{align}
  This first implies that $3\norm{\tb_\S - \ob_\S}_1 \geq \norm{\tb_{\S^C} - \ob_{\S^C}}_1$, and that
  \begin{align}
    \frac{1}{n} \norm{\X \ob - \X \tb}_2^2  + 2 \lambda \norm{\tb}_1 \norm{\tb_{\S^C} - \ob_{\S^C}}_1   \leq  6 \lambda \norm{\tb}_1  \norm{\tb_\S - \ob_\S}_1.
    \nonumber
  \end{align}
  Then by compatibility condition,
  \begin{align}
    &\frac{1}{n} \norm{\X \ob - \X \tb}_2^2 + 2 \lambda \norm{\tb}_1 \norm{\tb - \ob}_1 \nonumber\\
    = &\frac{1}{n} \norm{\X \ob - \X \tb}_2^2 + 2 \lambda \norm{\tb}_1 \norm{\tb_\S - \ob_\S}_1 + 2 \lambda \norm{\tb}_1 \norm{\tb_{\S^C} - \ob_{\S^C}}_1 \nonumber\\
    \leq & 8 \lambda \norm{\tb}_1 \norm{\tb_{\S} - \ob_{\S}}_1 \leq \frac{8 \lambda \norm{\tb}_1 |\S|^{1/2} \norm{\X \tb - \X \ob}_2}{n^{1/2} \phi_0} \nonumber\\
    \leq & \frac{1}{2n} \norm{\X \ob - \X \tb}_2^2 + \frac{32 \norm{\tb}_1^2 \lambda^2 |\S|}{\phi_0^2}.
    \label{eq:bound_smallsig}
  \end{align}

  (2). When $\norm{\tb}_1 < \sigma$: We define $\tg \in \real^p$ as
  \begin{align}
    \tg_j = \begin{cases}
      \tb_j + \frac{\sigma - \norm{\tb}_1}{|\S|} \quad &\text{  if  } \tb_j > 0 \\
      \tb_j - \frac{\sigma - \norm{\tb}_1}{|\S|} \quad &\text{  if  } \tb_j < 0 \\
       0 \quad &\text{  if  } \tb_j = 0.
       \nonumber
    \end{cases}
  \end{align}
  It is easy to check that $\norm{\tg}_1 = \sigma$. Also \eqref{eq:basic} implies that
  \begin{align}
    \frac{1}{n} \norm{\X \ob - \X \tb}_2^2 \leq \frac{2}{n}\left( \ob - \tb \right)^T \X^T \varepsilon + 2 \lambda \left( \norm{\tb}_1^2 - \norm{\tg}_1^2 + \norm{\tg}_1^2 - \norm{\ob}_1^2 \right).
    \nonumber
  \end{align}
  Then we have that
  \begin{align}
    \norm{\ob}_1^2 - \norm{\tg}_1^2 \geq 2 \norm{\tg}_1 g^T \left( \ob - \tg \right)
    \nonumber
  \end{align}
  holds for all $g \in \partial (\snorm{\tg}_1)$, and it further implies that
  \begin{align}
    \norm{\tg}_1^2 - \norm{\ob}_1^2 \leq 2 \norm{\tg}_1 g^T \left( \tg - \ob \right) = 2\sigma g^T \left( \tb - \ob \right) + 2\sigma g^T \left( \tg - \tb \right).
    \nonumber
  \end{align}
  Note that any $g \in \partial (\snorm{\tg}_1)$ is also a valid sub-differential of $\snorm{\tb}_1$, and 
  \begin{align}
    g^T\left( \tg - \tb \right) = \sigma -  \norm{\tb}_1.
    \nonumber
  \end{align}
  Thus we have
  \begin{align}
    \frac{1}{n} \norm{\X \ob - \X \tb}_2^2 &\leq \frac{2}{n}\left( \ob - \tb \right)^T \X^T \varepsilon + 2 \lambda \left( \norm{\tb}_1^2 - \sigma^2 + 2 \sigma g^T \left( \tb - \ob \right) + 2 \sigma^2 - 2 \sigma\norm{\tb}_1 \right) \nonumber\\
    & = \frac{2}{n}\left( \ob - \tb \right)^T \X^T \varepsilon + 4 \lambda \sigma g^T \left( \tb - \ob \right) + 2 \lambda \left( \sigma - \norm{\tb}_1 \right)^2 \nonumber\\
    & \leq \frac{2}{n}\left( \ob - \tb \right)^T \X^T \varepsilon + 4 \lambda \sigma g^T \left( \tb - \ob \right) + 2 \lambda \sigma^2  \nonumber
  \end{align}
  Since $\tg$ and $\tb$ have the same support, we can again choose $g_j = \sign(\ob_j)$ for $j \in S^c$. Conditional on the event $\T$, it follows that
  \begin{align}
    \frac{1}{n} \norm{\X \ob - \X \tb}_2^2 &\leq 2 \lambda \sigma \norm{\ob - \tb}_1 + 4 \lambda\sigma \norm{\ob_\S - \tb_\S}_1 - 4 \lambda \sigma \norm{\ob_{\S^c} - \tb_{\S^c}}_1 + 2 \lambda \sigma^2 \nonumber\\
    &= 6 \lambda\sigma \norm{\ob_\S - \tb_\S}_1 - 2 \lambda \sigma \norm{\ob_{\S^c} - \tb_{\S^c}}_1 + 2 \lambda \sigma^2.
    \nonumber
  \end{align}
  This implies that $3 \snorm{\ob_\S - \tb_\S} + \sigma \geq \snorm{\ob_{\S^c} - \tb_{\S^c}}$. And then by the compatibility condition \eqref{eq:compatability_olasso},
  \begin{align}
    \frac{1}{n} \norm{\X \ob - \X \tb}_2^2 + 2 \lambda \sigma \norm{\ob - \tb}_1 &= \frac{1}{n} \norm{\X \ob - \X \tb}_2^2 + 2 \lambda \sigma \norm{\ob_\S - \tb_\S}_1 + 2\lambda \sigma \norm{\ob_{\S^c} - \tb_{\S^c}}_1 \nonumber\\
    &\leq 8 \lambda \sigma \norm{\ob_{\S} - \tb_{\S}}_1 + 2 \lambda \sigma^2 \nonumber\\
    &\leq 8 \lambda \sigma |\S|^{1/2} \frac{\norm{\X\left( \ob - \tb \right)}_2}{n^{1/2} \phi_0} \nonumber\\
    &\leq \frac{1}{2n} \norm{\X \ob - \X \tb}_2^2 + \frac{32 \lambda^2 \sigma^2 |\S|}{\phi_0^2}.
    \label{eq:bound_bigsig}
  \end{align}
  By the proof of Corollary 4.3 in \citet{giraud2014introduction}, we have
  \begin{align}
    \Prob \left\{ \frac{1}{n} \norm{\X^T \varepsilon}_\infty > \sigma \left( \frac{2 \log p + 2 L }{n} \right)^{1/2} \right\} \leq e^{-L}.
    \nonumber
  \end{align}
  Thus taking $\lambda$ in \eqref{eq:lam_olasso}, we have that
  \begin{align}
    \Prob(\T^c) = \Prob\left( \frac{1}{n} \norm{\X^T \varepsilon}_\infty > \lambda \sigma \right) \leq  e^{-L}.
    \nonumber
  \end{align}
  And the results follow from \eqref{eq:bound_smallsig} and \eqref{eq:bound_bigsig}.
\end{proof}

\section{Additional results in numerical studies} \label{app:numerical}
We include in this section some additional results in the numerical studies in Section \ref{sec:simulation} and Section \ref{sec:realdata}.
In particular, Fig~\ref{fig:cv_appendix} and Fig~\ref{fig:fix_appendix} present the complementary results (in different simulation regimes) to Fig~\ref{fig:cv} and Fig~\ref{fig:fix} in the paper respectively, and Table~\ref{tab:pvalue_t} shows the p-values of the paired t-tests and the Wilcoxon signed-rank tests of the difference of various methods outputs in Fig~\ref{fig:cv} in the paper. Finally, Table~\ref{tab:MSD_ratio} presents the mean and standard errors of $\E(\hat{\sigma} / \sigma)$ of various estimators in the real data example.
\begin{figure}[!htp]
  \centering
  \includegraphics[width = .8\textwidth]{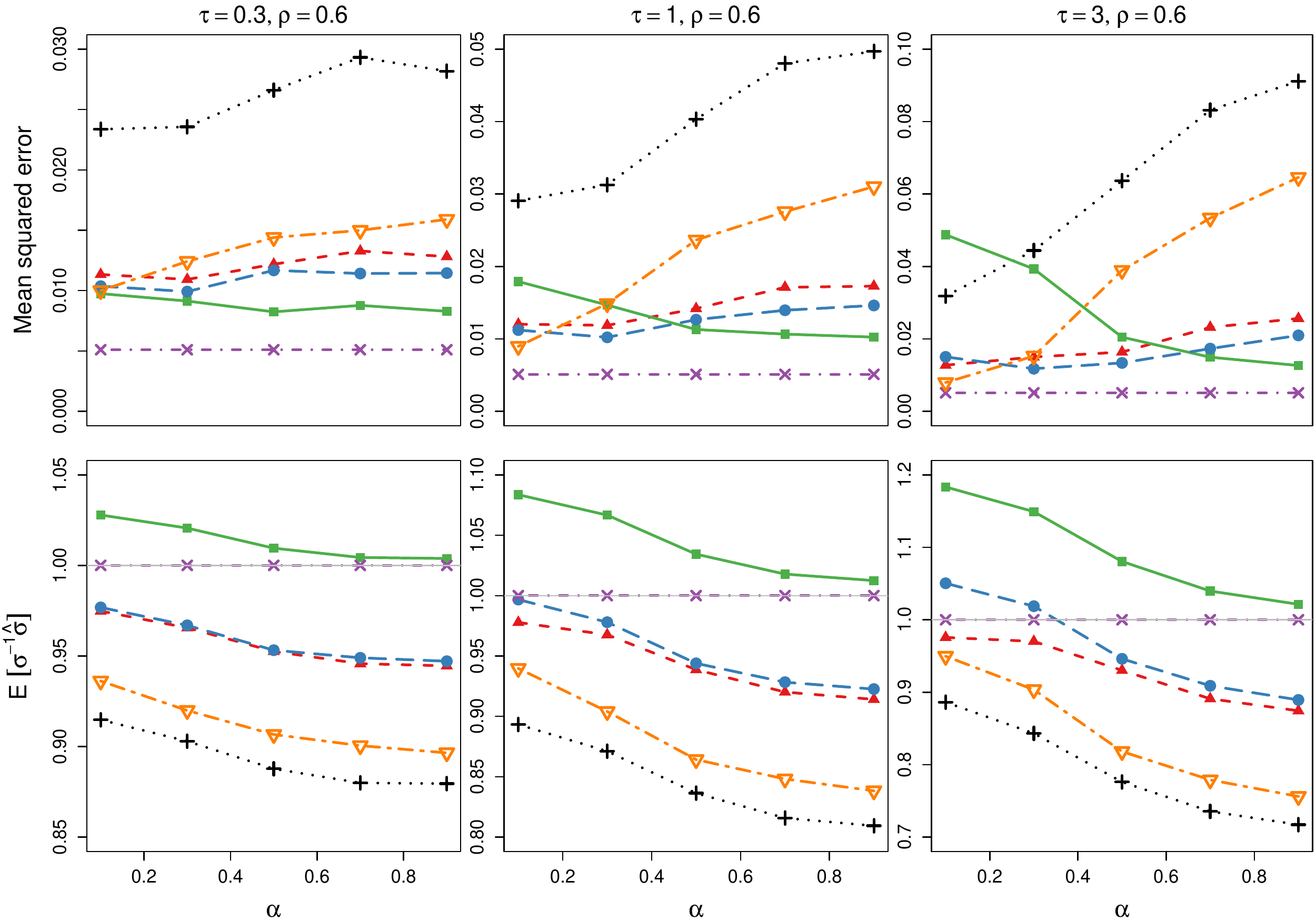}
    \caption{Simulation results of various methods with regularization
      parameter selected using cross-validation. From left to right,
      column show the average (over 1000 repetitions) of the mean
      squared error (top panel) and $E(\sigma^{-1} \hat{\sigma})$ (bottom
      panel) of various methods in three simulation settings. In each
      setting, we fix model sparsity ($\alpha$) and correlations among
      features ($\rho$), and let signal-to-noise ratio(as expressed in
      $\tau$) change. 
      Line styles and their corresponding methods:
         \protect\includegraphics[width = 0.12in]{black.png} for naive, 
   \protect\includegraphics[width = 0.12in]{red.png} for $\rt$, 
   \protect\includegraphics[width = 0.12in]{orange.png} for the square-root/scaled lasso, 
   \protect\includegraphics[width = 0.12in]{green.png} for the natural lasso, 
   \protect\includegraphics[width = 0.12in]{blue.png} for the organic lasso, 
   \protect\includegraphics[width = 0.12in]{purple.png} for the oracle.}
  \label{fig:cv_appendix}
\end{figure}

\begin{figure}[!htp]
  \centering
  \includegraphics[width = .8\textwidth]{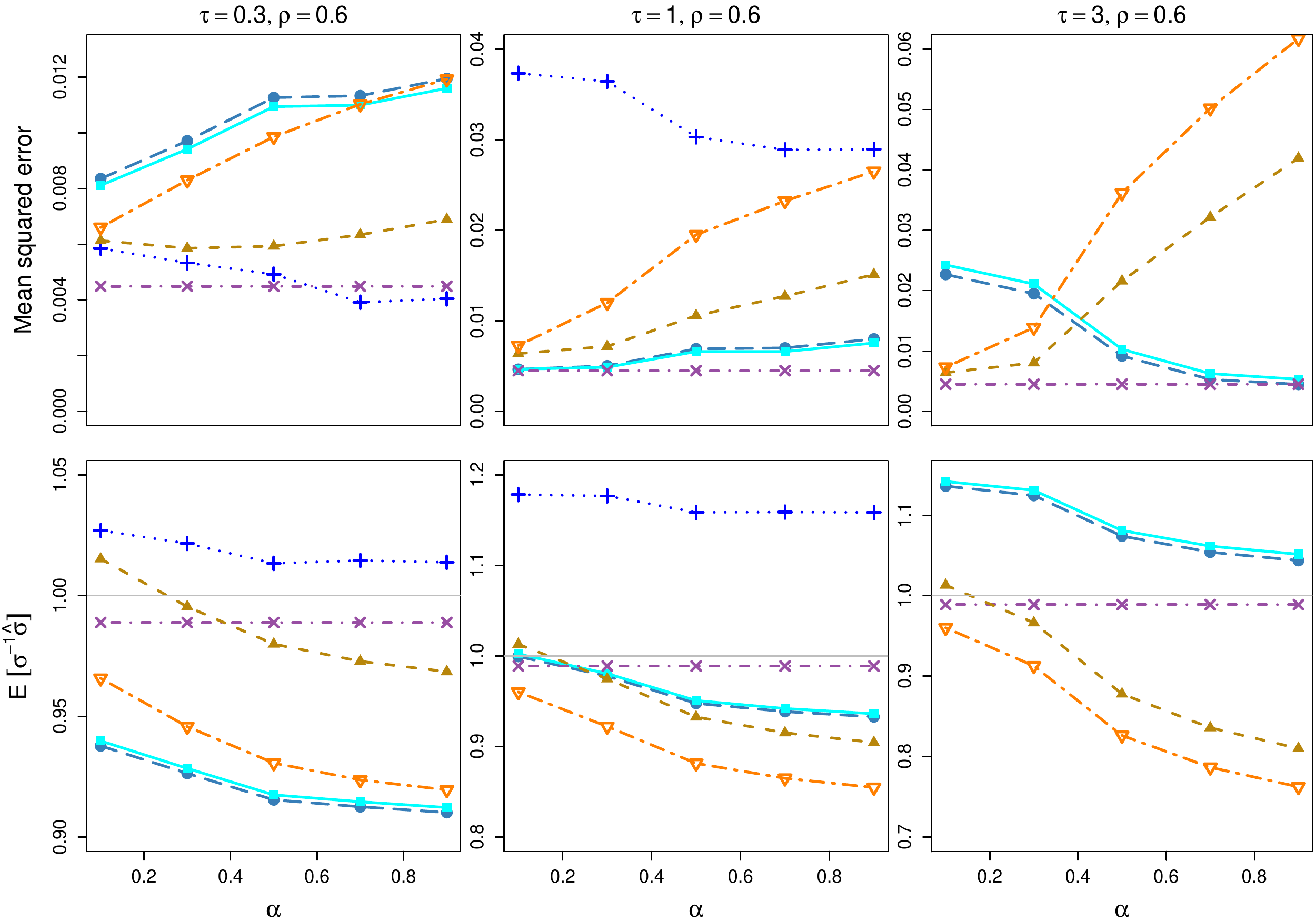}
    \caption{Simulation results of various methods with pre-specified
      regularization parameter values. From left to right, column show
      the average (over 1000 repetitions) of the mean squared error (top
      panel) and $E(\sigma^{-1} \hat{\sigma} )$ (bottom panel) of various
      methods in three simulation settings. In each setting, we fix
      model sparsity ($\alpha$) and correlations among features
      ($\rho$), and let signal-to-noise ratio(as expressed in $\tau$)
      change. 
      Line styles and their corresponding methods:
      \protect\includegraphics[width = 0.12in]{deep_blue.png} for organic ($\lambda_0$),
    \protect\includegraphics[width = 0.12in]{cyan.png} for organic ($\lambda_2$), 
    \protect\includegraphics[width = 0.12in]{blue.png} for organic ($\lambda_3$), 
    \protect\includegraphics[width = 0.12in]{brown.png} for scaled(1),
    \protect\includegraphics[width = 0.12in]{orange.png} for scaled (2), 
  \protect\includegraphics[width = 0.12in]{purple.png} for the oracle.}
  \label{fig:fix_appendix}
\end{figure}

\begin{table}[!htp]
  \caption{p-values for testing the difference of various methods outputs}
    \centering
    \begin{tabular}{lccc}
      \hline
    & natural vs. organic & $\rt$ vs. organic & $\rt$ vs. natural \\ 
    \hline
    $\alpha = 0.1, \rho = 0.3, \tau = 1$ & 0.00 (0.00) & 0.07 (0.00) & 0.00 (0.00) \\ 
    $\alpha = 0.3, \rho = 0.3, \tau = 1$ & 0.00 (0.00) & 0.19 (0.25) & 0.00 (0.00) \\ 
    $\alpha = 0.5, \rho = 0.3, \tau = 1$ & 0.00 (0.00) & 0.00 (0.00) & 0.00 (0.00) \\ 
    $\alpha = 0.7, \rho = 0.3, \tau = 1$ & 0.00 (0.00) & 0.00 (0.00) & 0.00 (0.00) \\ 
    $\alpha = 0.9, \rho = 0.3, \tau = 1$ & 0.00 (0.00) & 0.00 (0.00) & 0.00 (0.00) \\ 
    $\alpha = 0.1, \rho = 0.6, \tau = 1$ & 0.00 (0.00) & 0.08 (0.01) & 0.00 (0.00) \\ 
    $\alpha = 0.3, \rho = 0.6, \tau = 1$ & 0.00 (0.00) & 0.00 (0.14) & 0.00 (0.00) \\ 
    $\alpha = 0.5, \rho = 0.6, \tau = 1$ & 0.05 (0.10) & 0.01 (0.00) & 0.00 (0.00) \\ 
    $\alpha = 0.7, \rho = 0.6, \tau = 1$ & 0.00 (0.00) & 0.00 (0.00) & 0.00 (0.00) \\ 
    $\alpha = 0.9, \rho = 0.6, \tau = 1$ & 0.00 (0.00) & 0.00 (0.00) & 0.00 (0.00) \\ 
    $\alpha = 0.1, \rho = 0.9, \tau = 1$ & 0.06 (0.32) & 0.00 (0.03) & 0.00 (0.12) \\ 
    $\alpha = 0.3, \rho = 0.9, \tau = 1$ & 0.96 (0.02) & 0.00 (0.07) & 0.00 (0.00) \\ 
    $\alpha = 0.5, \rho = 0.9, \tau = 1$ & 0.03 (0.00) & 0.00 (0.00) & 0.00 (0.00) \\ 
    $\alpha = 0.7, \rho = 0.9, \tau = 1$ & 0.44 (0.00) & 0.00 (0.00) & 0.00 (0.00) \\ 
    $\alpha = 0.9, \rho = 0.9, \tau = 1$ & 0.20 (0.00) & 0.00 (0.01) & 0.00 (0.00) \\ 
    \hline
    \end{tabular}
  \label{tab:pvalue_t}
  \\[10pt]
  \caption*{
   In each simulation setting, as characterized by a $(\alpha, \rho, \tau)$ triplet,
 we report p-values of the (two-sided) paired t-tests and the Wilcoxon signed-rank tests (shown in parentheses) for testing the null hypothesis that the output of each pair of methods are the same.}
\end{table}
\begin{table}[!htp]
\def~{\hphantom{0}}
\caption{$E(\sigma^{-1} \hat{\sigma})$ in MSD dataset}
    \centering
    \begin{tabular}{lcccccc}
      \hline
      n & 20 & 40 & 60 & 80 & 100 & 120 \\ 
      \hline
      naive & ~80.1 (1.1) & ~94.2 (0.9) & ~95.8 (0.7) & ~96.4 (0.6) & ~97.9 (0.5) & ~96.7 (0.5) \\ 
      $\rt$ & ~90.0 (1.0) & 100.4 (0.8) & 101.7 (0.6) & 102.3 (0.5) & 103.3 (0.5) & 102.4 (0.4) \\ 
      natural & ~94.0 (0.9) & 103.3 (0.7) & 105.5 (0.6) & 106.0 (0.5) & 107.0 (0.4) & 106.6 (0.4) \\ 
      organic & ~86.8 (0.8) & ~97.6 (0.6) & ~99.9 (0.5) & 100.9 (0.4) & 101.7 (0.4) & 101.8 (0.4) \\ 
      scaled(1) & 106.1 (0.8) & 109.3 (0.6) & 111.2 (0.5) & 111.2 (0.4) & 111.7 (0.4) & 111.8 (0.4) \\ 
      scaled(2) & ~88.5 (0.8) & ~99.0 (0.6) & 102.9 (0.5) & 104.4 (0.5) & 105.1 (0.4) & 105.5 (0.3) \\ 
      organic($\lambda_2$) & ~89.7 (0.7) & ~94.7 (0.5) & ~97.6 (0.4) & ~98.3 (0.4) & ~99.2 (0.3) & ~99.7 (0.3) \\ 
      organic($\lambda_3$) & ~92.0 (0.7) & ~97.3 (0.6) & 100.1 (0.4) & 100.7 (0.4) & 101.6 (0.4) & 102.0 (0.3) \\ 
      \hline
    \end{tabular}
  \label{tab:MSD_ratio}
  \\[10pt]
  \caption*{
    Mean and standard errors (over 1000 replications) of $E(\sigma^{-1}\hat{\sigma})$ of various methods we considered in Section \ref{sec:simulation}. Each entry of the method output is multiplied by 100 to convey information more compactly.}
\end{table}

\end{appendices}
\clearpage
\bibliographystyle{agsm}
\bibliography{short.bib}

\end{document}